\theoremstyle{thmstyleone}
\newtheorem{theorem}{Theorem}[section]
\theoremstyle{thmstyletwo}%
\theoremstyle{thmstylethree}%
\numberwithin{equation}{section}
\numberwithin{figure}{section}
\numberwithin{table}{section}
\newcommand\cero{\boldsymbol{0}}
\newcommand\bH{\mathbf{H}}
\newcommand\bI{\mathbf{I}}
\newcommand\bV{\mathbf{V}}
\newcommand\bW{\mathbf{\Lambda}}
\newcommand\beps{\boldsymbol{\varepsilon}}
\newcommand\nn{\boldsymbol{n}}
\newcommand\bsigma{\boldsymbol{\sigma}}
\newcommand{\bsym}[1]{\boldsymbol{#1}}
\newcommand\bu{\boldsymbol{u}}
\newcommand\bv{\boldsymbol{v}}
\newcommand\bx{\boldsymbol{x}}
\newcommand\bnu{\boldsymbol{\nu}}
\newcommand\RR{\mathbb{R}}
\newcommand\cT{\mathcal{T}}
\newcommand{\norm}[1]{\left\|#1\right\|}
\DeclarePairedDelimiter\snorm{\lvert}{\rvert}
\DeclarePairedDelimiter\abs{\lvert}{\rvert}
\renewcommand{\vec}[1]{\boldsymbol{#1}} 
\newcommand{\pt}{\partial_t\,}
\newcommand{\Dt}{\Delta t}
\renewcommand\bdiv{\mathop{\mathbf{div}}\nolimits}
\newcommand\bcurl{\mathop{\mathbf{curl}}\nolimits}
\newcommand\vdiv{\mathop{\mathrm{div}}\nolimits}
\newcommand\bomega{\boldsymbol{\omega}}
\newcommand\btheta{\boldsymbol{\theta}}
\renewenvironment{proof}{\noindent{\it Proof.}}{\hfill$\square$}
\begin{document}

\title[Mechanochemical models for embryonic epithelial tissue]{Mechanochemical models for 
calcium waves in embryonic epithelia}
\author[1]{\fnm{Katerina} \sur{Kaouri}}\email{kaourik@cardiff.ac.uk}
\author[2]{\fnm{Paul E.} \sur{M\'endez}}\email{paul.mendez01@epn.edu.ec}
\author*[3]{\fnm{Ricardo} \sur{Ruiz-Baier}}\email{ricardo.ruizbaier@monash.edu}

\affil[1]{\orgdiv{School of Mathematics}, \orgname{Cardiff University}, \orgaddress{\street{Senghennydd Road}, \city{Cardiff}, \postcode{CF24 4AG}, \country{UK}}}

\affil[2]{\orgdiv{Research Centre on Mathematical Modelling (MODEMAT)}, \orgname{Escuela Polit\'ecnica Nacional}, \orgaddress{\city{Quito},  \country{Ecuador}}}

\affil[3]{\orgdiv{School of Mathematics}, \orgname{Monash University}, \orgaddress{\street{9 Rainforest Walk}, \city{Melbourne}, \postcode{3800}, \state{Victoria}, \country{Australia}}}


\abstract{
In embryogenesis, epithelial cells, acting as individual entities or as coordinated aggregates in a tissue, exhibit strong coupling between chemical signalling and mechanical responses to internally or externally applied stresses. Intercellular communication in combination with such coordination of morphogenetic movements can lead to drastic modifications in the calcium distribution in the cells.  In this paper we extend the recent mechanochemical model in [K. Kaouri, P.K. Maini, P.A. Skourides, N. Christodoulou, S.J. Chapman. J. Math. Biol., 78 (2019) 2059--2092], for an epithelial continuum in one dimension, to a more realistic multi-dimensional case. The resulting parametrised governing equations consist of an advection-diffusion-reaction system for calcium signalling coupled with active-stress linear viscoelasticity and equipped with pure Neumann boundary conditions. We implement a mixed finite element method for the simulation of this complex multiphysics problem. Special care is taken in the treatment of the stress-free boundary conditions for the viscoelasticity in order to eliminate rigid motions from the space of admissible displacements. The stability and solvability of the continuous weak formulation is shown using fixed-point theory. We investigate numerically the solutions of this system and show that solitary waves and periodic wavetrains of calcium propagate through the embryonic epithelial sheet. We analyse the bifurcations of the system guided by the bifurcation analysis of the one-dimensional model. We also demonstrate the nucleation of calcium sparks into synchronous calcium waves coupled with contraction. This coupled model can be employed to gain insights into recent experimental observations in the context of embryogenesis, but also in  other biological systems such as cancer cells, wound healing,  keratinocytes, or white blood cells. 
}

\keywords{Viscoelasticity, advection-reaction-diffusion equations, calcium signalling, embryogenesis, excitability, mixed-primal finite element method.} 

\pacs[MSC Classification]{92C15, 65M60, 35K57, 74L15.}

\date{Updated: \today}
\maketitle 

\section{Introduction}
Embryogenesis is a remarkable example of a complex process where different sub-mechanisms involving mechanical and chemical effects closely interact in a self-organised manner, forming complex spatio-temporal patterns. The coupling between rearrangement of tissue, cell migration, active cell contraction to diffusion of morphogens or signalling molecules has been proposed and studied in earlier works \cite{murray03,murray84}. It is well known that a variety of responses in the cell are driven by the transduction of mechanical stimulation into chemical signals such as calcium oscillations and waves \cite{sanderson90}. In turn, localisation of stresses or strains within the cells can generate patterns of motion distribution in tissue by changing their displacement magnitude, direction and velocity \cite{guiu15,lecuit07}. Recent experimental evidence \cite{christo15, suzuki17,narciso} shows that tissue mechanics are actively coupled to chemical patterns during development. 

Hence, more modelling and analysis has recently appeared in order to elucidate the many open questions that directly impact the healthy evolution of an embryo \cite{suzuki17, brinkmann18,kaouri19, narciso}. A variety of models have described general interactions between chemical species concentration and mechanics; these include descriptions where stress is triggered by chemical signalling \cite{murray84}, or mainly by migration \cite{moreo}, but we are also interested more generally in the mechanochemical feedback due to pressure and velocity of the underlying embryonic tissue. In all cases,  including the coupling with continuum mechanics significantly affects the propagation of the chemical signals.

Here, we develop a multi-dimensional extension of the recent mechanochemical model proposed in \cite{kaouri19} which describes the interplay of calcium signalling with the mechanics of embryonic epithelial tissue during apical constriction, an active deformation process. Inspired by the recent, interesting experimental observations in \cite{christo15, suzuki17} where increasing tension in the contracting cells yields further calcium release and this, in turn, elicits contractions which are sensed as mechanical stimuli by the neighbouring cells. The model we propose here takes a step further in elucidating this important mechanochemical coupling.

Mechanical properties of different cell types indicate diverse behaviour, including elastic 
\cite{deoliveira19,dillon99,ohayon05}, poroelastic \cite{deoliveira20,moee13,rads14,recho19}, or nonlinear and nonlocal characteristics \cite{merker16,wy12} but more predominantly, viscoelastic effects \cite{allena13,beysens00,bausch98,ghosh07,kim14,preziozi10,yamada00}.  The specific constitutive rheological model to adopt in a tissue depends on the characteristics of each constituent cell, on the properties inherent to distinct biological states, on the nature and intensity of the stresses and strains that are to be applied, and on the spatio-temporal scales involved. Here, we restrict the description to the regime of small strains and model the cell and tissue as a modification of the simple Kelvin-Voigt viscoelastic solid (with one elastic spring and two viscous dashpots), where only after the initial stress has vanished, the material goes back to its original state. These linear viscoelastic materials are completely defined by the stiffness and viscosity, which can be determined using diverse measuring approaches such as pipette suction, optical laser tweezers, microrheology tools, particle tracking, or even contact-free techniques \cite{nguyen20}. In the present mechanochemical model, we assume that the viscoelastic stress has an active tension component which depends on the concentration of calcium, following the formulation in \cite{murray84, bane11, kaouri19}.  

We adopt the following fundamental assumptions: a) the equilibrium of forces in the system is established by a quasi-static balance of linear momentum using displacements and hydrostatic pressure (the so-called Herrmann formulation \cite{herrmann} where the introduction of solid pressure ensures that the system is robust with respect to the modulus of dilation of the solid); b) the spatio-temporal dynamics of calcium concentration  and the percentage of IP$_3$ receptors (IPR) on the Endoplasmic Reticulum (ER) that have not been inactivated are governed by an advection-reaction-diffusion system; and c) mechanochemical coupling is modelled directly in the viscoelastic stress through an active stress Hill function that depends on calcium and through the modification of the reaction kinetics by volume change. The two-way coupling mechanism we adopt follows the model structure used in \cite{murray,murray03,neville06,ruiz14, kaouri19}.  

Finding closed-form solutions to this inherently highly nonlinear and multidimensional problem is only possible in very restricted scenarios and simplified settings. We, hence, resort to solving the governing equations numerically. The numerical framework undertaken here uses the method of lines, adopting a backward Euler scheme for the discretisation in time and a primal-mixed formulation  consisting of mixed approximations for the viscoelasticity in terms of displacement and pressure using the classical Taylor-Hood and MINI-elements \cite{arnold84}, and piecewise quadratic or piecewise linear approximations for the calcium concentration and the fraction of non-inactivated IPR. Methods of this type are known to perform well in a variety of scenarios. As we assume that the cell or tissue is not attached to any supporting structure, we consider pure traction boundary conditions. In this case the viscoelasticity problem is not well-posed unless we incorporate a constraint to eliminate the rigid motions from the set of admissible solutions. To achieve this we employ additional vector Lagrange multipliers for the coupled problem following \cite{kutcha18} (see also \cite{barnafi21}). The overall problem is treated as a monolithic system, so at each time iteration we solve a set of nonlinear equations with the Newton-Raphson method and the tangent system at each step is inverted with a direct solver. 

We have organised the contents of the paper as follows. In Section~\ref{sec:model} we lay out the details of the mechanochemical 1D model in \cite{kaouri19} and construct its multidimensional extension. We also explain the coupling mechanisms and then perform an appropriate nondimensionalisation. In the same section we also state the weak form of the governing equations and introduce a suitable finite element discretisation. In Section~\ref{sec:wellp} we address the continuous dependence on data of the weak formulation, as well as the existence of weak solutions using Brouwer's fixed-point theory.   A number of illustrative numerical computations are then presented in Section~\ref{sec:results}, where we also perform a simple verification of convergence. We specifically explore different regimes of wave propagation of practical interest, such as solitary waves and periodic wavetrains of calcium and the the nucleation of calcium sparks into calcium waves. We conclude in  Section~\ref{sec:concl} with a summary of our findings and a   discussion on the limitations of the model, addressing also possible extensions and future directions. 

\section{Model description and weak formulations}\label{sec:model}
\subsection{Coupling calcium signalling with mechanics}
We assume that the cell/tissue can be macroscopically regarded as a linear, viscoelastic material of Kelvin-Voigt type, occupying the 
spatial domain $\Omega\subset\RR^d$ with $d=2$ or $d=3$. 

Following, e.g., \cite{moreo,kaouri19}, and assuming that gravitational forces and inertial effects are negligible, 
one seeks for each time $t\in (0,t_{\mathrm{final}}]$, 
 the displacements of the tissue, $\bu(t):\Omega\to \RR^d$, and the dilation $\theta(t):\Omega\to\RR^d$, 
 such that 
 \begin{subequations}
\begin{align}
\theta & =  \vdiv \bu, \label{eq:p}\\ 
\bsigma & =\frac{E}{1+\nu} \bigl(\beps(\bu) + \frac{\nu}{1-2\nu}\theta \bI\bigr) + \tilde\alpha_1 \partial_t \beps(\bu) + \tilde\alpha_2 \partial_t \theta \bI - T(c)\bI, \label{eq:sigma}\\ 
-\bdiv\bsigma & = \cero, \label{eq:momentum}\\
& \qquad \qquad \qquad \qquad \text{in $\Omega\times(0,t_{\mathrm{final}}]$}, \nonumber 
\end{align}\end{subequations}
where $\bsigma$ is the Cauchy stress tensor, $\beps(\bu)=\frac{1}{2}(\nabla \bu+\nabla \bu^T)$ is the tensor of infinitesimal strains, and $E$, and $\nu$ are the Young modulus and Poisson ratio associated with the 
constitutive law of the material, respectively. Equation \eqref{eq:momentum} represents the force equilibrium for the system in the absence of inertia, whereas both \eqref{eq:p} and \eqref{eq:sigma} are constitutive equations describing properties of the viscoelastic material. 
The parameters $\tilde\alpha_i$ in the constitutive relation \eqref{eq:sigma} are the shear viscosity and the bulk viscosity related to the total Cauchy stress exerted in the cell/tissue, that characterises the viscoelastic response to deformations. 
In addition, the last term in \eqref{eq:sigma}, $T(c)$ describes the active contraction stress which is dependent on the calcium concentration, $c$.
There are various ways to model this term. As in \cite{kaouri19} we assume a Hill function which saturates to a constant value $T_0$ for high values of $c$, in line with experimental observations from \cite{christo15}. Therefore,
\begin{equation}\label{eq:defT}
T(c)=T_0\frac{\kappa c^n}{1+\kappa c^n},
\end{equation}
where $n$ is a positive integer and $\kappa>0$. We, thus, assume that the calcium concentration controls entirely the material's motion 
by regulating the active tension and therefore the generation of stress. Such control is indeed an activation and not an inhibition of active tension, so $T_0$ is assumed positive.

As in \cite{moreo}, we do not incorporate external body loads or restoring displacement-dependent body forces on the right-hand side of \eqref{eq:momentum}, since such terms are only relevant to substrate-on-substrate or in tissue-on-substrate configurations.

Focusing on the spatiotemporal behaviour of calcium, we denote its  concentration 
by $c$ and $h$ represents the percentage of non-inactivated IPR on the ER. The  model is a generalisation of the recent model in \cite{kaouri19}, which employs the calcium dynamics of the well-known model in \cite{atri93}. In dimensional form of the model is written as follows
\begin{subequations}
 \begin{align}
	\label{eq:cDIM}
  \partial_t c + \partial_t \bu \cdot \nabla c -  
	D \nabla^2 c 
&= J_{\rm ER}-J_{\rm pump}
+J_{\rm SSCC} 
 & \quad \text{in } \Omega\times(0,t_{\mathrm{final}}],\\
	\label{eq:nDIM}
\tau_j\partial_t h + \partial_t \bu \cdot \nabla h &= \frac{k^2_2}{k^2_2+c^2} -h & \quad \text{in } \Omega\times(0,t_{\mathrm{final}}],
\end{align}\end{subequations}
where 
$$J_{\rm ER}=k_f\mu(p_3)h\frac{bk_1+c}{k_1+c},\,\,\,J_{\rm pump}=\frac{\gamma c}{k_{\gamma}+c},
\,\,\,J_{\rm SSCC}=S \vdiv\bu.$$ 
Here, $\vdiv\bu = \theta$ 
represents the dilation/compression of the apical surface area of the cell and $D$ is the calcium diffusion coefficient inherent to the tissue (or to the cytosol in the single-cell case), which is assumed 
positive and constant.  The fluxes in \eqref{eq:cDIM} are as follows: the term 
$J_{\rm ER}$ models the flux of calcium from the ER into the cytosol through the IPR, {$\mu(p_3)=p_3/(k_{\mu}+p_3)$} is the fraction of IPR that have bound IP$_3$ and is an increasing function of $p_3$, the  IP$_3$ concentration. The constant $k_f$ denotes the calcium flux when all IP$_3$ receptors are open and activated, and $b$ represents a basal current through the IPR; $J_{\rm pump}$ represents the calcium flux pumped out of the cytosol where $\gamma$ is the maximum rate of pumping of calcium from the cytosol and $k_{\gamma}$ is the calcium concentration at which the rate of pumping from the cytosol is at half-maximum. One could 
also include calcium fluxes leaking into the cytosol from outside the cell, but we leave those terms out, as they are assumed small.  
The term $J_{\rm SSCC}$ encodes the calcium flux due to the activated stretch-sensitive calcium channels (SSCC). These channels have been identified experimentally in recent years \cite{hamil}- they are on the cell membrane. 
SSC are activated when exposed to mechanical stimulation and they close either by relaxation of the mechanical force or by adaptation to the mechanical force \cite{hamil}. The constant $S$ represents the strength of stretch activation. (In \cite{kaouri19} the authors derive a relationship for $S$ as a function of the characteristics of a SSCC, under suitable assumptions.)

The inactivation of the  IPR by calcium acts on a slower timescale compared to activation \cite{atri93} and so the time constant for the dynamics of $h$, $\tau_j>1$ in the ODE  \eqref{eq:nDIM} governing the dynamics of  $h$. As in \cite{kaouri19} and \cite{atri93} we take $\tau_j=2\,$s. The values of all other calcium signalling parameters are also taken as in \cite{atri93}.

We nondimensionalise the set of governing equations \eqref{eq:p}-\eqref{eq:momentum} and \eqref{eq:cDIM}--\eqref{eq:nDIM} 
using 
\begin{gather*}
c= k_1\bar c, \quad t=\tau_j \bar t, \quad \bu=L\bar\bu, \quad \theta = \bar \theta, \quad \bx=L\bar \bx,\quad \alpha_1 = \frac{\tilde\alpha_1(1+\nu)}{E\tau_j}, \\
\alpha_2 = \frac{(1+\nu)(1-2\nu)\tilde\alpha_2}{E\nu\tau_j}, \quad \beta_1 = \frac{T_0 (1+\nu)}{E}, \quad \beta_2 = \frac{1}{\kappa k_1^n},\end{gather*}
where $L$ is the length of an embryonic epithelial tissue (or the maximal length of the cell if considering the single-cell case, see \cite{luu11}). In addition, instead of the dilation $\theta$, 
we use the rescaled pressure $p$, which leads to the following system, 
where we have dropped the bars in the dimensionless variables for notational convenience
 \begin{subequations}\label{eq:system-adim}\begin{align}
p + \frac{\nu}{(1-2\nu)} \vdiv \bu & = 0, \label{eq:p2}\\ 
-\bdiv\bsigma  = \cero ,  \  \text{with} \ 
\bsigma & =\beps(\bu) - p\bI  + \alpha_1\partial_t\beps(\bu) - \alpha_2 \partial_t p\bI - \beta_1\frac{c^n}{\beta_2+c^n}\mathbf{I},  
\label{eq:mom2} \\
	\label{eq:c}
  \partial_t c + \partial_t \bu \cdot \nabla c -  
	D^{\star} \nabla^2 c
&= \mu hK_1\frac{b+c}{1+c}-\frac{Gc}{K+c}+ \lambda \vdiv \bu,\\
	\label{eq:n}
\partial_t h + \partial_t \bu \cdot \nabla h &=  \frac{1}{1+c^2} - h,
\end{align}\end{subequations}
where $D^{\star} =D\tau_j/L^2$, $K_1=k_f\tau_j/k_1$, $G=\gamma \tau_j/k_1$, $K=k_{\gamma}/k_1$, and $\lambda=\tau_j S/k_1$. 
For the parameter values we have chosen from \cite{allbritton,atri93} and also taking $D=20 \mu m^2/s$ and $L=100\,\mu$m we obtain the following values for the nondimensional parameters 
$D^{\star} =4 \times 10^{-3}$, $K_1=46.29$, $G=40/7$, $K=1/7$. In this context, a large value of $K_1$ 
captures the fact that calcium is a fast messenger \cite{berridge}. 
We have assumed here that mechanics modify the behaviour of calcium only through the advection term and an additional calcium flux which is linearly dependent on the local dilation. 
The latter flux is modulated by $\lambda>0$, thus representing 
a source for $c$ if the solid volume increases, and a calcium sink otherwise (see e.g. \cite{neville06}).  This parameter will be treated as a bifurcation constant, as in \cite{kaouri19}. In \cite{kaouri19} it has been shown that for a critical value of $\lambda$, oscillations were suppressed, and we expect a similar behaviour in the higher dimensional case here. We will vary $\lambda$ from 0 (calcium dynamics not affected by mechanics) to $\lambda=5$. However it is not clear what the maximum $\lambda$ value should be - inspired by the results in \cite{kaouri19} we expect the oscillations/waves to vanish for large enough $\lambda$. 
 
Note that $\alpha_1,\alpha_2$ determine the magnitude of the viscous effects. Also, the smaller $\beta_2$ is, the faster $T(c)$  in \eqref{eq:defT} (actually its nondimensional form) will tend to its saturation value $\beta_1$. In \cite{kaouri19} the authors have explored this variation of the Hill function $T(c)$; here we take the values $\alpha_1 = 1$, $\alpha_2 = 0.5$, $\beta_2 =0.1$. 
 
The system composed by equations \eqref{eq:p2}-\eqref{eq:mom2} and \eqref{eq:c}-\eqref{eq:n} 
is complemented with appropriate initial data at rest
\begin{equation}\label{eq:initial}
c(0) = c_{0}, \quad h(0)=h_{0}, \quad \bu(0)= \cero, \quad p(0) = 0, \quad \text{in $\Omega\times\{0\}$.}
\end{equation}
Homogeneous boundary conditions can be incorporated for normal displacements, calcium fluxes, and traction, in the following manner  
\begin{subequations}
\begin{align}\label{bc:Gamma}
\bu\cdot \nn = 0 \quad \text{and} \quad D^\star \nabla c \cdot\nn &= 0 &\text{on $\Gamma\times(0,t_{\text{final}}]$},\\
\label{bc:Sigma}
\bsigma\nn = \cero \quad\text{and}\quad \quad D^\star \nabla c \cdot\nn&=0  &\text{on $\Sigma\times(0,t_{\text{final}}]$},
\end{align}\end{subequations}
where the boundary $\partial\Omega = \Gamma\cup\Sigma$  is disjointly split into $\Gamma$ and $\Sigma$ 
where we prescribe slip boundaries and zero traction, 
respectively. This case assumes that the tissue is allowed to slip along the substrate on $\Gamma$, while it is free to deform on $\Sigma$. 
However, in most of our numerical tests we will consider instead the pure traction boundary conditions
\begin{equation}
\label{bc:pureTraction}
\bsigma\nn = \cero \quad\quad\text{and}\quad \quad D^\star \nabla c \cdot\nn=0  \qquad \text{on $\partial\Omega\times(0,t_{\text{final}}]$}.
\end{equation}
In this case an additional condition is required to make the system well-defined. 
For instance, we can  impose orthogonality to the space of rigid motions  defined as (see \cite[Eq. (11.1.7)]{brenner-2008})
 \begin{equation}\label{RM}
\mathbb{RM}(\Omega):=\left\{\bv\in \bH^1(\Omega):\ \beps(\bv)=\cero\right\},
 \end{equation}
and  unique solvability (for a given calcium concentration $c$) will follow since $\mathbb{RM}(\Omega)$ is a null space of the relevant functional space.

\subsection{Mixed-primal weak formulation}
 Multiplying the nondimensional governing equations \eqref{eq:system-adim} 
 by adequate test functions and integrating by parts (in space) whenever 
appropriate, we end up with the following variational problem, here restricted to the 
case of pure traction boundary conditions: For a given $t>0$, find $\bu(t)\in\bV$, 
$p(t)\in L^2(\Omega)$, $c(t),n(t)\in  H^1(\Omega)$, 
such that
\begin{subequations}\label{eq:weak}
\begin{align}
 \int_{\Omega} \beps(\bu):\beps(\bv) -\!\int_{\Omega}\!p\vdiv \bv 
 +\alpha_1 \!\int_{\Omega}\! \partial_t \beps(\bu):\beps(\bv) \qquad \qquad & \nonumber \\
 -\alpha_2 \!\int_{\Omega}\! \partial_t p \vdiv\bv  
-\!\int_{\Omega}\!  \bsym{\beta}(c) \vdiv \bv  
  = 
 0
  & \quad \forall \bv\in\bV, \\
   -\int_{\Omega}q \vdiv \bu - \frac{(1-2\nu)}{\nu} \int_{\Omega} pq = 0 & \quad\forall q\in L^2(\Omega),\label{eq:weak1}\\
   \int_\Omega  \partial_t c\phi + \int_\Omega  (\partial_t\bu\cdot\nabla c+\frac{1}{2}(c\vdiv\pt\vec{u}))\phi + \int_\Omega D^\star \nabla c\cdot \nabla \phi & \nonumber \\ 
   = 
   \int_{\Omega} \biggl[ \vec{K}(h,c) + \lambda \vdiv \bu \biggr]\phi
   & \quad \forall \phi \in H^1(\Omega),\\
    \int_\Omega  \partial_th\, \psi + \int_\Omega  (\partial_t\bu\cdot\nabla h+\frac{1}{2}(h\vdiv\pt\vec{u})\psi  = 
   \int_{\Omega} \biggl[\vec{J}(c) - h \biggr]\psi & \quad \forall \psi \in H^1(\Omega).\label{eq:weak2}
  \end{align}\end{subequations}
 Here we have defined the additional space 
  $$\bV:=\mathbb{RM}(\Omega)^{\perp}=\left\{\bv\in \bH^1(\Omega):\int_{\Omega}\bv=\boldsymbol{0}\quad\text{and}\quad \int_{\Omega}\bcurl\bv=\cero \right\},$$
meaning that all rigid motions are discarded as 
feasible displacement solutions. Alternatively, one can weakly enforce orthogonality to the space of rigid motions by a Lagrange multiplier, 
which is what we will do at the discrete level. 

Furthermore, we use an equivalent skew-symmetric form for advection terms and define $\bsym{\beta}(c) = \frac{\beta_1 c^n}{\beta_2+c^n}$, $\vec{K}(h,c):= \mu h K_1 \frac{b+c}{1+c} - \frac{G c}{K+c}$ and $\vec{J}(c) = \frac{1}{1+c^2}$.

\subsection{Fully discrete form}\label{sec:FE}
Let us consider a 
 shape-regular partition $ \cT_j$  of  $\bar{\Omega}$ into affine elements (triangles in 2D or 
tetrahedra in 3D) $E$ of 
diameter $j_E$, where $j = \max\{ j_E:\, E\in \cT_j\}$ denotes the meshsize. 
Finite-dimensional subspaces for the approximation of displacement, pressure, and calcium are 
specified as  
\begin{align}\label{eq:FEspaces}
\widetilde{\bV}_j &:= \lbrace \bv_j \in \bH^1(\Omega): \bv_j\vert_{E} \in [\mathbb{P}_2(E)]^d\ \forall E\in \cT_j, \text{ and } \bv_j\vert_{\Gamma}= \cero\rbrace,\nonumber\\
\bV_j &:= \lbrace \bv_j \in \bH^1(\Omega): \bv_j\vert_{E} \in [\mathbb{P}_1(E)\oplus {\rm span}\{b_E\}]^d\ \forall E\in \cT_j, \text{ and } \bv_j\vert_{\Gamma}= \cero\rbrace, \nonumber \\
\widetilde{\Phi}_j &:= \lbrace q_j \in C^0(\Omega): q_j\vert_{E} \in \mathbb{P}_2({E})\ \forall E\in \cT_j\rbrace,\quad \widetilde{\Psi}_j := \widetilde{\Phi}_j,\nonumber \\
Q_j &:= \lbrace q_j \in C^0(\Omega): q_j\vert_{E} \in \mathbb{P}_1({E})\ \forall E\in \cT_j\rbrace, \quad   
\Phi_j := Q_j, \quad \Psi_j := Q_j,
\end{align}
where $\mathbb{P}_k(E) $ denotes the space of polynomials of degree less than or equal than $ k $ defined locally over the 
element $ E \in \cT_j$, and $b_E:= \varphi_1\varphi_2\varphi_3$ is a polynomial bubble function in $E$, and $\varphi_1,\,\varphi_2\,,\varphi_3$ are the barycentric coordinates of $E$. Depending on whether the  displacement is approximated with $\bV_j$ or $\widetilde{\bV}_j$, the pairs 
  $\bV_j\times Q_j$ are known as the MINI element and $\widetilde{\bV}_j\times Q_j$ are known as the Taylor-Hood elements. They are both Stokes inf-sup stable (see, e.g., \cite{boffi13,quarteroni-valli}). On the other hand, 
an $\mathbf{L}^2$-orthonormal basis for the space of rigid motions is  constructed in terms of translations and rotations associated with the 
principal axes of the inertial tensor $\mathbb{I}$ encoding rotational kinetic energy \cite{kutcha18}. 
 Denoting $\boldsymbol{x}_0 = \vert {\Omega} \vert^{-1} \int_{\Omega} \boldsymbol{x} $ 
 the centre of mass of the domain, such basis (having dimension 6 in 3D) assumes the form
\begin{align*}
\bW_j & := \{ \vert\Omega\vert^{-1/2}\bnu_x,\vert\Omega\vert^{-1/2}\bnu_y,\vert\Omega\vert^{-1/2}\bnu_z,\\
& \qquad \lambda_x^{-1/2}(\bx-\bx_0)\times\bnu_x,\lambda_y^{-1/2}(\bx-\bx_0)\times\bnu_y ,\lambda_y^{-1/2}(\bx-\bx_0)\times\bnu_y  \},\end{align*}
where the $(\lambda_i,\bnu_i)$ are the eigenpairs of $\mathbb{I}$ (see also \cite{kutcha15}). As remarked in \cite{kutcha18}, we emphasise that not all Stokes inf-sup stable pairs will lead to $j-$robust bounds when approximating the singular Neumann problem of linear elasticity (for instance, using for displacement and pressure the pair $\mathbb{P}_2-\mathbb{P}_0$ does not allow for constructing a well-posed mixed Poisson auxiliary problem needed in establishing orthogonality with respect to the kernel).

Next we discretise the time interval $(0,t_{\text{final}}]$ into equi-spaced points $t^{k} = k\Delta t$. Applying a backward Euler method and 
an implicit centred difference discretisation of the first- and second-order time derivatives, respectively; we can write a semidiscrete form of \eqref{eq:weak} (but now incorporating the orthogonality with respect to rigid motions with a Lagrange multiplier). 
Now the fully discrete formulation is given by
{\small
\begin{subequations}\label{eq:discr1}
	\begin{align}
 		\frac{\alpha_1}{\Dt}\int_{\Omega}(\beps(\vec{u}^{k+1}_j)-\beps(\vec{u}^{k}_j))\,:\,\beps(\vec{v})+\int_{\Omega} \beps(\vec{u}^{k+1}_j)\,:\,\beps(\vec{v})  - \int_{\Omega}  p^{k+1}_j\vdiv \vec{v} & \nonumber \\ 
   - \frac{\alpha_2}{\Dt} \int_{\Omega}(p^{k+1}_j - p^{k}_j)\vdiv \vec{v} - \int_{\Omega}(\bsym{\beta}(c^{k+1}_j)\vdiv\vec{v}) &  =  0,\\
		-\int_{\Omega} q\vdiv\vec{u}^{k+1}_j - \frac{1-2\nu}{\nu}\int_{\Omega} p^{k+1}_jq  & =0, \\
		\frac{1}{\Dt}\int_{\Omega}(c^{k+1}_j-c^{k}_j)\phi+\frac{1}{\Dt}\int_{\Omega}((\vec{u}^{k+1}_j-\vec{u}^{k}_j)\cdot \nabla c^{k+1}_j\phi) + \frac{1}{2\Dt}\int_{\Omega} c^{k+1}_j \vdiv\vec{u}^{k+1}_j\phi  &  \nonumber \\ 
		 + \int_{\Omega}(D^*\nabla c^{k+1}_j\cdot\nabla\phi)  
 	- \int_{\Omega}(\vec{K}(h^{k+1}_j,c^{k+1}_j)+\lambda\vdiv \vec{u}^{k+1}_j)\phi & = 0, \\
 	\frac{1}{\Dt}\int_{\Omega}(h^{k+1}_j-h^{k}_j)\psi + \frac{1}{\Dt}\int_{\Omega}((\vec{u}^{k+1}_j-\vec{u}^{k}_j)\cdot \nabla h^{k+1}_j\psi) & \nonumber \\
	+ \frac{1}{2\Dt}\int_{\Omega}(h^{k+1}_j\vdiv(\vec{u}^{k+1}_j-\vec{u}^{k}_j)\psi)  - 
 \int_{\Omega}(\vec{J}(c^{k+1}_j)-h^{k+1}_j)\psi & = 0.
	\end{align}
\end{subequations}}

\section{Well-posedness analysis}\label{sec:wellp}

We will consider that the initial data are nonnegative and regular enough, hence the numerical scheme starts from discrete initial data $\bu_j^{0},p_j^{0}, c_j^0, h_j^0$, which are the projections of the exact initial conditions onto the finite element spaces.

Let us introduce the trilinear form $d:  \vec{H}^1(\Omega)\times H^1(\Omega) \times H^1(\Omega) \to \RR$ and bilinear form $e: H^1(\Omega)\times H^1(\Omega) \to \RR$ defined as:
\begin{align*}
	d(\vec{u};\phi,\psi) = \int_{\Omega}(\vec{u}\cdot\phi)\psi+\frac{1}{2}\int_{\Omega}(\phi\vdiv\vec{u})\psi, \quad 
	e(c,\phi) = \int_{\Omega}(D^*\nabla c^{k+1}\cdot\nabla\phi).
\end{align*}

Then, due to the skew-symmetric form of the operator $d(\cdot;\cdot,\cdot)$, we can write
\begin{align}\label{eq:czero}
	d(\vec{u};\phi,\phi) = 0 \quad \text{for all }\phi \in H^1(\Omega).
\end{align}

And recalling the Poincaré-Friedrichs inequality \cite[ Chapter I, Lemma 3.1]{Girault2005} , we have the coercivity for $e(\cdot,\cdot)$:
\begin{align}\label{eq:dcoerv}
	e(\phi,\phi) \geq \alpha_e \norm{\phi}_{1,\Omega}^2 \quad \text{for all }\phi \in H_0^1(\Omega).
\end{align}

We also assume the nonlinearities satisfy the following conditions: 
\begin{itemize}
	\item[(A1)] $\vec{J}(c)$ is uniformly bounded, i.e. $\abs{\vec{J}(c)}\leq C_J$.
	\item[(A2)] $\vec{K}(c,h)$ is uniformly bounded with respect to $c$ and Lipschitz with respect to $h$, in particular we have: $\abs{\vec{K}(h,c)} \leq C_k \abs{h}$.
	\item[(A3)] $\bsym{\beta}(c)$ is uniformly bounded, i.e. $\abs{\bsym{\beta}(c)}<C_{\beta}$.
\end{itemize}

Moreover, we will use the following algebraic relation:  Let $a$ and $b$ be two real numbers, then we have
\begin{align} \label{eq:alg}
	2(a-b,a) &= a^2-b^2+(a-b)^2.
\end{align}

\subsection{Continuous dependence on data}

\begin{theorem} \label{thm:cont_dep}
	Let $(\vec{u}^{k+1}_j,p^{k+1}_j,c^{k+1}_j,h^{k+1}_j) \in (\vec{V}_j,Q_j,\Phi_j,\Phi_j)$ be a solution of problem \eqref{eq:discr1}, with initial data $(\vec{u}^0_j,c^0_j,h^0_j)$ Then under the assumption that $\left(1-\frac{\sqrt{d}}{2}\frac{1-4\nu}{4\nu-3}\right)>0$, the following bounds are satisfied:
	\begin{align*}
&		\norm{\vec{u}^{n+1}_j}_{1,\Omega}^2+\Dt\sum_{k=0}^{n}\norm{\vec{u}^{k+1}_j}_{1,\Omega}^2 + \sum_{k=0}^{n}\norm{\vec{u}^{k+1}_j}_{1,\Omega}^2 \leq C_1 \left( \norm{\vec{u}^0_j}_{1,\Omega}^2 + t_{\mathrm{final}}C_{\beta}^2\right), \\
&		\norm{c^{n+1}_j}^2_{0,\Omega} + \Dt \sum_{k=0}^n \norm{c^{k+1}_j}_{1,\Omega}^2 \\
& \qquad  \leq C_2\left(\norm{c^0_j}_{0,\Omega}^2 + \norm{h^0_j}_{0,\Omega}^2 + \norm{\vec{u}^0_j}_{1,\Omega}^2 + t_{\mathrm{final}}(C_{\beta}^2+C_J^2)\right), \\
&		\norm{h^{n+1}_j}^2_{0,\Omega}  + \Dt  \sum_{k=0}^n\norm{h^{k+1}_j}_{0,\Omega}^2 \leq C_3 \left(\norm{h^0_j}_{0,\Omega}^2 + t_{\mathrm{final}} C_J^2 \right),
	\end{align*}
	where $C_1$, $C_2$ and $C_3$ are positive constants  independent of $j$ and $\Dt$. 
\end{theorem}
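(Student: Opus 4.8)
The plan is to obtain the three a priori bounds separately, by the standard discrete energy technique: for each of the four equations in \eqref{eq:discr1} one tests with the corresponding unknown evaluated at the new time level $t^{k+1}$, telescopes the backward-difference terms with the algebraic identity \eqref{eq:alg}, discards the nonnegative squared-increment contributions, and sums over $k=0,\dots,n$. Since the estimates for $(\vec{u}^{k+1}_j,p^{k+1}_j)$ and for $h^{k+1}_j$ do not see the other fields, I would establish those first, and only then treat the calcium equation, whose forcing is controlled through the bounds on $\|\beps(\vec{u}^{k+1}_j)\|_{0,\Omega}$ and $\|h^{k+1}_j\|_{0,\Omega}$ already in hand.

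\emph{Mechanics.} Set $\vec{v}=\vec{u}^{k+1}_j$ in the momentum equation. The term $\int_\Omega\beps(\vec{u}^{k+1}_j):\beps(\vec{u}^{k+1}_j)$ is coercive via Korn's second inequality, because the discrete displacement lies in the rigid-motion--free space $\mathbb{RM}(\Omega)^\perp$ (imposed either strongly or weakly through the multiplier built on the basis $\bW_j$), and the $\alpha_1$ backward-difference term telescopes by \eqref{eq:alg}. To deal with the two pressure contributions I would test the constraint equation \eqref{eq:p2} once with $q=p^{k+1}_j$, turning $-\int_\Omega p^{k+1}_j\vdiv\vec{u}^{k+1}_j$ into $\tfrac{1-2\nu}{\nu}\|p^{k+1}_j\|_{0,\Omega}^2$, and once with $q=p^{k+1}_j-p^{k}_j$, which rewrites $-\tfrac{\alpha_2}{\Dt}\int_\Omega(p^{k+1}_j-p^{k}_j)\vdiv\vec{u}^{k+1}_j$ in terms of pressure norms so that it telescopes with \eqref{eq:alg} as well. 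The only forcing, $\int_\Omega\bsym{\beta}(c^{k+1}_j)\vdiv\vec{u}^{k+1}_j$, cannot be substituted away (because $\bsym{\beta}(c^{k+1}_j)$ is a nonlinear, hence non-polynomial, function of $c^{k+1}_j\in Q_j$), so it is estimated directly by (A3), the bound $\|\vdiv\vec{v}\|_{0,\Omega}\le\sqrt{d}\,\|\beps(\vec{v})\|_{0,\Omega}$, and Young's inequality, costing a $\tfrac12\|\beps(\vec{u}^{k+1}_j)\|_{0,\Omega}^2$ that is absorbed on the left; the hypothesis $\bigl(1-\tfrac{\sqrt{d}}{2}\,\tfrac{1-4\nu}{4\nu-3}\bigr)>0$ is precisely the condition that keeps the leftover combination of the elastic and pressure forms coercive after this absorption. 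Summing over $k$ and using the rest initial state $p^0_j=0$ produces the first displayed bound.

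\emph{Equation for $h$, then for $c$.} With $\psi=h^{k+1}_j$ the two convective terms add up to the skew-symmetric trilinear form $d(\vec{u}^{k+1}_j-\vec{u}^{k}_j;h^{k+1}_j,h^{k+1}_j)$, which vanishes by \eqref{eq:czero}; the backward difference telescopes, the reaction term gives $+\|h^{k+1}_j\|_{0,\Omega}^2$, and the source is handled by (A1) and Young's inequality. Multiplying by $2\Dt$ and summing gives the third bound at once. For calcium one takes $\phi=c^{k+1}_j$: the convective terms cancel again by \eqref{eq:czero}, the mass term telescopes, and the diffusion term contributes the coercive part $e(c^{k+1}_j,c^{k+1}_j)\gtrsim\|\nabla c^{k+1}_j\|_{0,\Omega}^2$ (cf.\ \eqref{eq:dcoerv}), which combined with the $L^2$ control from the mass term yields, after summation, a bound on $\Dt\sum_k\|c^{k+1}_j\|_{1,\Omega}^2$. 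On the right, (A2) bounds $\vec{K}$ by $C_k\|h^{k+1}_j\|_{0,\Omega}\|c^{k+1}_j\|_{0,\Omega}$ and the dilation source by $\lambda\sqrt{d}\,\|\beps(\vec{u}^{k+1}_j)\|_{0,\Omega}\|c^{k+1}_j\|_{0,\Omega}$; Young's inequality then produces a self-term $\varepsilon\|c^{k+1}_j\|_{0,\Omega}^2$ plus $C_\varepsilon\bigl(\|h^{k+1}_j\|_{0,\Omega}^2+\|\beps(\vec{u}^{k+1}_j)\|_{0,\Omega}^2\bigr)$. Absorbing the self-term and closing with a discrete Gr\"onwall inequality (valid for $\Dt$ small, with a constant depending only on $t_{\mathrm{final}}$), then inserting the already-proved bounds for $\Dt\sum_k\|h^{k+1}_j\|_{0,\Omega}^2$ and $\Dt\sum_k\|\beps(\vec{u}^{k+1}_j)\|_{0,\Omega}^2$, delivers the second bound with $C_2$ independent of $j$ and $\Dt$.

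\emph{Main obstacle.} The only genuinely delicate step is the mechanics estimate. Because the momentum balance is quasi-static (no displacement time derivative of its own), the pressure rate $\alpha_2\partial_t p$ is fed into the displacement equation, and no pressure stabilisation is present, the discrete energy identity closes only after testing the incompressibility-type constraint with both $p^{k+1}_j$ and its increment and after invoking the sign condition on $\nu$; getting the resulting constants --- and those in the subsequent discrete Gr\"onwall argument for $c$ --- to be independent of both $j$ and $\Dt$ is where essentially all the care is needed. Everything else reduces to routine use of \eqref{eq:alg}, \eqref{eq:czero}, \eqref{eq:dcoerv}, Korn's inequality and (A1)--(A3).
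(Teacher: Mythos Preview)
Your plan is correct and follows essentially the same energy-method route as the paper: test each equation with the unknown at the new time level, telescope via \eqref{eq:alg}, exploit the skew-symmetry \eqref{eq:czero}, the coercivity \eqref{eq:dcoerv}, Korn's inequality and assumptions (A1)--(A3), and establish the bounds for $\vec{u}$ and $h$ first so that they can be inserted into the estimate for $c$. The only minor deviations are that the paper handles the $\alpha_2$ pressure-rate term by testing the constraint with $q=p^{k+1}_j$ alone (then using $\|p^{k+1}_j\|_{0,\Omega}\lesssim|\vec{u}^{k+1}_j|_{1,\Omega}$) rather than also with the increment, and it closes the calcium estimate by direct absorption into the $H^1$-coercive diffusion term without any discrete Gr\"onwall step.
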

\begin{proof}
On the second equation of problem \eqref{eq:discr1}  we choose $q=p^{k+1}_j$, to get
	\begin{align}\label{eq:divp}
		-\frac{1-2\nu}{\nu}\norm{p^{k+1}_j}_{0,\Omega}^2 = (p^{k+1}_j,\vdiv \vec{u}^{k+1}_j)_{\Omega},
	\end{align}
	which after applying Young's inequality, implies
	
	\begin{align*}
		\frac{1-2\nu}{\nu}\norm{p^{k+1}_j}_{0,\Omega}^2 &\leq \frac{1}{2}\norm{p^{k+1}_j}_{0,\Omega}^2 + \frac{1}{2}\norm{\vdiv\vec{u}^{k+1}_j}_{0,\Omega}^2,
	\end{align*}
	hence,
	\begin{align}
		\norm{p^{k+1}_j}_{0,\Omega}^2 &\leq \frac{2\nu\sqrt{d}}{4\nu-3}\snorm{\vec{u}^{k+1}_j}_{1,\Omega}^2\nonumber \\
		&\leq C_p \snorm{\vec{u}^{k+1}_j}_{1,\Omega}^2. \label{eq:bound_p}
	\end{align}
	
	Now,  on the first equation in problem \eqref{eq:discr1}, we choose the test function $\vec{v} = \vec{u}^{k+1}$ multiply by $2\Dt$ and use algebraic relation \eqref{eq:alg} in combination with equations \eqref{eq:divp}, \eqref{eq:bound_p}. We obtain:
	\begin{align*}
		2\Dt \norm{\beps(\vec{u}^{k+1}_j)}_{0,\Omega}^2 + \alpha_1\left(\norm{\beps(\vec{u}^{k+1}_j)}_{0,\Omega}^2 + \norm{\beps(\vec{u}^{k+1}_j) - \beps(\vec{u}^k_j)}_{0,\Omega}^2 -\norm{\beps(\vec{u}^{k}_j)}_{0,\Omega}^2 \right)\\
		+ 2(\Dt + \alpha_2)(\frac{1-2\nu}{\nu})C_p\snorm{\vec{u}^{k+1}_j}_{1,\Omega}^2 \leq  \alpha_2 \norm{p^k_j}_{0,\Omega}^2 + \alpha_2\sqrt{d}\snorm{\vec{u}^{k+1}_j}_{1,\Omega}^2\\
		+\Dt\norm{\bsym{\beta}(c^{k+1}_j)}_{0,\Omega}^2 + \Dt\sqrt{d} \snorm{\vec{u}^{k+1}_j}_{1,\Omega}^2.
	\end{align*}
	
	By Korn's and Young's inequalities, we get
	\begin{align*}
		\Dt\left(1-\frac{\sqrt{d}}{2}\frac{1-4\nu}{4\nu-3}\right)\snorm{\vec{u}^{k+1}_j}_{1,\Omega}^2 + \snorm{\vec{u}^{k+1}_j-\vec{u}^{k}_j}_{1,\Omega}^2+ \frac{\alpha_1}{2} (\snorm{\vec{u}^{k+1}_j}_{1,\Omega}^2 - \snorm{\vec{u}^{k}_j}_{1,\Omega}^2) \\+ 2\alpha_2 C_p (\snorm{\vec{u}^{k+1}_j}_{1,\Omega}^2 - \snorm{\vec{u}^{k}_j}_{1,\Omega}^2) \leq 2\Dt\norm{\bsym{\beta}(c^{k+1}_j)}_{0,\Omega}^2.
	\end{align*}	
	
	Summing over $k$ from $0$ to $n \leq N-1$, applying the Poincaré-Friedrichs inequality and assuming $\left(1-\frac{\sqrt{d}}{2}\frac{1-4\nu}{4\nu-3}\right)>0$ and (A3), we get	
	
	\begin{align} \label{eq:r1}
		\norm{\vec{u}^{n+1}_j}_{1,\Omega}^2+\Dt\sum_{k=0}^{n}\norm{\vec{u}^{k+1}_j}_{1,\Omega}^2 + \sum_{k=0}^{n}\norm{\vec{u}^{k+1}_j}_{1,\Omega}^2 \leq C (\norm{\vec{u}^0_j}_{1,\Omega}^2 + t_{\text{final}}C_{\beta}^2)
	\end{align}
	
	Now we take $\phi = c^{k+1}_j$, use properties \eqref{eq:czero}, \eqref{eq:dcoerv} and multiply by $2\Dt$ the third equation in problem \eqref{eq:discr1}, to deduce:
	\begin{align*}
		&\norm{c^{k+1}_j}^2_{0,\Omega} + \norm{c^{k+1}_j-c^k_j}_{0,\Omega}^2 - \norm{c^k_j}_{0,\Omega}^2+2\alpha_e \Dt \norm{c^{k+1}_j}_{1,\Omega}^2 \leq \\ &\qquad \Dt\norm{\vec{K}(h^{k+1}_j,c^{k+1}_j)}_{0,\Omega}\norm{c^{k+1}_j}_{0,\Omega} + \Dt\lambda \sqrt{d} \snorm{\vec{u}^{k+1}_j}_{1,\Omega}\norm{c^{k+1}_j}_{0,\Omega}
	\end{align*}
	
	As before, summing over $k$ from $0$ to $n \leq N-1$, and applying Young's inequality we deduce
	\begin{align*}
	&	\norm{c^{n+1}_j}^2_{0,\Omega} +\alpha_e \Dt \sum_{k=0}^n \norm{c^{k+1}_j}_{1,\Omega}^2 \\
	&\qquad 	\leq \norm{c^0_j}_{0,\Omega}^2+\frac{\Dt}{2\alpha_e}\sum_{k=0}^n \norm{\vec{K}(h^{k+1}_j,c^{k+1}_j)}_{0,\Omega}^2 + \frac{\lambda^2 d}{2\alpha_e}\sum_{k=0}^n \snorm{\vec{u}^{k+1}_j}_{1,\Omega}^2.
	\end{align*}	
	By assumption (A2) we have,
	\begin{align} \label{eq:r2}
	&	\norm{c^{n+1}_j}^2_{0,\Omega} +\alpha_e \Dt \sum_{k=0}^n \norm{c^{k+1}_j}_{1,\Omega}^2 \nonumber\\
	&\qquad	 \leq \norm{c^0_j}_{0,\Omega}^2+\frac{\Dt C_K^2}{2\alpha_e}\sum_{k=0}^n \norm{h^{k+1}_j}_{0,\Omega}^2 + \frac{\lambda^2 d}{2\alpha_e}\sum_{k=0}^n \snorm{\vec{u}^{k+1}_j}_{1,\Omega}^2. 
	\end{align}

	Similarly in the fourth equation of \eqref{eq:discr1}, we take $\psi = h^{k+1}_j$, multiply by $2\Dt$ and use property \eqref{eq:czero} to get
	
	\begin{align*}
		\|h^{k+1}_j\|^2_{0,\Omega} +\! \|h^{k+1}_j\! - h^k_j\|_{0,\Omega}^2\! -\! \|h^k_j\|_{0,\Omega}^2 + \Dt \|h^{k+1}_j\|_{0,\Omega}^2 \leq \Dt\|\vec{J}(c^{k+1}_j)\|_{0,\Omega}\|h^{k+1}_j\|_{0,\Omega}. 
	\end{align*}
	
	Hence, applying Young's inequality, assumption (A1) and summing over k, we get
	\begin{align} \label{eq:r3}
		\norm{h^{n+1}_j}^2_{0,\Omega}  + \frac{\Dt}{2}  \sum_{k=0}^n\norm{h^{k+1}_j}_{0,\Omega}^2 \leq \norm{h^0_j}_{0,\Omega}^2 +  \frac{t_{\text{final}}}{2}C_J.
	\end{align}
	
	While the first and third results follows directly from \eqref{eq:r1} and \eqref{eq:r3} respectively, we get the second result by substituting bounds for $h_j$ and $u_j$ into \eqref{eq:r2}.
	
\end{proof}

\subsection{Existence result: Fixed-point approach}

Next, we address the unique solvability of problem \eqref{eq:discr1}. To that end, we will make use of Brouwer's fixed-point theorem in the following form 
(given by \cite[Corollary 1.1, Chapter IV]{girault1986}): 

\begin{theorem}[Brouwer's fixed-point theorem] \label{th:brouwer}  Let $H$ be a finite-dimensional Hilbert space with scalar product  $(\cdot,\cdot )_H$ and corresponding norm $\norm{\cdot}_H$. Let $\Phi\colon H \to H$ be a continuous mapping  for which there exists $\mu > 0$ such that $(\Phi(u),u)_H \geq 0$ for all $u \in H$  with $\norm{u}_H  = \mu$. 		Then  there exists $u \in H$ such that $\Phi(u) = 0$ and $\norm{u}_H \leq \mu$. \end{theorem}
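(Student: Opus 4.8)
The plan is to deduce this statement from the classical topological form of Brouwer's fixed-point theorem (every continuous self-map of a nonempty compact convex subset of a finite-dimensional space possesses a fixed point), which I will take as given. Since $H$ is a finite-dimensional Hilbert space, fixing an orthonormal basis identifies it isometrically with $\RR^n$ for $n=\dim H$, so the closed ball $\overline{B}:=\{u\in H:\norm{u}_H\leq\mu\}$ is homeomorphic to a Euclidean closed ball and is therefore a nonempty compact convex set on which the topological Brouwer theorem is available. The argument will proceed by contradiction, turning a putative nonvanishing $\Phi$ into a fixed-point-free self-map of $\overline{B}$.

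So I would suppose, for contradiction, that $\Phi(u)\neq 0$ for every $u\in\overline{B}$, and introduce the map
\begin{align*}
G\colon \overline{B}\to \overline{B}, \qquad G(u):=-\mu\,\frac{\Phi(u)}{\norm{\Phi(u)}_H}.
\end{align*}
This $G$ is well defined, because the assumed non-vanishing of $\Phi$ on $\overline{B}$ keeps the denominator strictly positive; it is continuous, since $\Phi$ and $\norm{\cdot}_H$ are continuous and the denominator never vanishes; and it maps $\overline{B}$ into itself, since in fact $\norm{G(u)}_H=\mu$ for every $u$, so the image lands on the bounding sphere $\{u:\norm{u}_H=\mu\}\subset\overline{B}$.

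Applying the topological Brouwer theorem to $G$ then yields some $u^\star\in\overline{B}$ with $G(u^\star)=u^\star$. Taking norms in this identity gives $\norm{u^\star}_H=\norm{G(u^\star)}_H=\mu$, so $u^\star$ lies \emph{exactly} on the sphere where the sign hypothesis is in force. Substituting $u^\star=-\mu\,\Phi(u^\star)/\norm{\Phi(u^\star)}_H$ into the inner product and using linearity in the second slot gives
\begin{align*}
(\Phi(u^\star),u^\star)_H=-\frac{\mu}{\norm{\Phi(u^\star)}_H}\,(\Phi(u^\star),\Phi(u^\star))_H=-\mu\,\norm{\Phi(u^\star)}_H<0,
\end{align*}
where the strict inequality uses $\Phi(u^\star)\neq 0$. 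This contradicts the hypothesis $(\Phi(u),u)_H\geq 0$ for all $u$ with $\norm{u}_H=\mu$. Hence $\Phi$ must vanish somewhere in $\overline{B}$, producing a $u$ with $\Phi(u)=0$ and $\norm{u}_H\leq\mu$, as claimed.

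As for where the difficulty sits: the only genuinely deep ingredient is the topological Brouwer fixed-point theorem itself, which I am assuming outright; everything else is an elementary normalisation. The one step demanding care is verifying that $G$ is a legitimate continuous self-map of $\overline{B}$ — this is precisely where the contradiction hypothesis $\Phi\neq 0$ on $\overline{B}$ is needed, to forbid division by zero — together with the observation that any fixed point of $G$ is forced onto the sphere, which is exactly what licenses the use of the sign condition, since that condition is assumed only on $\{u:\norm{u}_H=\mu\}$ and not on the interior of the ball.
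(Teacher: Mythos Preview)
Your proof is correct. Note, however, that the paper does not actually supply its own proof of this statement: it merely quotes the result and attributes it to \cite[Corollary~1.1, Chapter~IV]{girault1986}, treating it as a known tool to be invoked in the subsequent existence theorem. The argument you give --- assume $\Phi$ nonvanishing on the closed ball, normalise to build a continuous self-map $G(u)=-\mu\,\Phi(u)/\norm{\Phi(u)}_H$ of $\overline{B}$, apply the topological Brouwer theorem, observe that any fixed point must sit on the sphere, and derive a sign contradiction --- is precisely the standard derivation found in that reference, so there is no substantive difference in approach to report.
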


\begin{theorem} [Existence of discrete solutions] Under the same assumptions as in Theorem \ref{thm:cont_dep}, problem \eqref{eq:discr1} with initial data $(\vec{u}_j^0, p_j^0, c_j^0,h_j^0)$ admits at least one solution.
\end{theorem}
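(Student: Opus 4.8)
The plan is to fix a time level $k+1$, regard the previous iterates $\vec{u}^k_j,p^k_j,c^k_j,h^k_j$ as given data (so that the whole statement follows by induction on $k$, the base case being the projections of the initial data onto the finite element spaces), and recast the four equations of \eqref{eq:discr1} as a single equation $\Phi(\xi)=\cero$ on the finite-dimensional Hilbert space $H:=\bV_j\times Q_j\times\Phi_j\times\Phi_j$, equipped with the inner product inherited from $\bH^1(\Omega)\times L^2(\Omega)\times H^1(\Omega)\times H^1(\Omega)$. Concretely, for $\xi=(\vec{u},p,c,h)\in H$ I define $\Phi(\xi)\in H$ through the Riesz representation of the functional that sends a test tuple $(\vec{v},q,\phi,\psi)$ to the sum of the four residuals obtained from \eqref{eq:discr1} after substituting $\vec{u},p,c,h$ for $\vec{u}^{k+1}_j,p^{k+1}_j,c^{k+1}_j,h^{k+1}_j$. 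A zero of $\Phi$ is, by construction, a solution of \eqref{eq:discr1} at level $k+1$, and Theorem \ref{th:brouwer} will supply one.

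First I would check that $\Phi$ is continuous on $H$. Each term of the residuals is either a fixed bounded bilinear or linear form on the finite-dimensional spaces, or is assembled from the superposition operators associated with $\bsym{\beta}$, $\vec{K}$ and $\vec{J}$, which are continuous functions of their arguments; since all norms on a finite-dimensional space are equivalent, $\Phi$ is continuous. No growth condition is needed for this; assumptions (A1)--(A3) only intervene in the coercivity estimate below.

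The crux is the ball condition: there exists $\mu>0$ such that $(\Phi(\xi),\xi)_H\ge 0$ whenever $\norm{\xi}_H=\mu$. To obtain it I would evaluate $(\Phi(\xi),\xi)_H$ by taking $\vec{v}=\vec{u}$, $q=p$, $\phi=c$, $\psi=h$ in the four residuals and then reproduce, at a single time step, the chain of estimates in the proof of Theorem \ref{thm:cont_dep}: the second residual tested with $p$ yields \eqref{eq:divp} and hence, through \eqref{eq:bound_p}, control of $\norm{p}_{0,\Omega}$ by $\snorm{\vec{u}}_{1,\Omega}$; the skew-symmetry identity \eqref{eq:czero} kills the advective terms $d(\vec{u};c,c)$ and $d(\vec{u};h,h)$; the coercivity \eqref{eq:dcoerv} of $e(\cdot,\cdot)$ together with Korn's inequality produces the positive quadratic contributions in $\snorm{\vec{u}}_{1,\Omega}^2$, $\norm{c}_{1,\Omega}^2$ and $\norm{h}_{0,\Omega}^2$; and the sign-indefinite coupling terms $\int_\Omega\bsym{\beta}(c)\vdiv\vec{u}$, $\lambda\int_\Omega(\vdiv\vec{u})\,c$, $\int_\Omega\vec{K}(h,c)\,c$ and $\int_\Omega\vec{J}(c)\,h$ are absorbed by means of (A1)--(A3) and Young's inequality, the hypothesis $\bigl(1-\tfrac{\sqrt{d}}{2}\tfrac{1-4\nu}{4\nu-3}\bigr)>0$ keeping the displacement block coercive (together with the algebraic relation \eqref{eq:alg} used on the discrete time derivatives). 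The result is an estimate of the form $(\Phi(\xi),\xi)_H\ge c_0\norm{\xi}_H^2-C$, where $c_0>0$ and $C>0$ depend only on $\Dt$, the constants $C_\beta,C_J$ and the (fixed) norms of the iterate at level $k$; taking $\mu:=\sqrt{(C+1)/c_0}$ then makes the right-hand side strictly positive on the sphere $\norm{\xi}_H=\mu$. Theorem \ref{th:brouwer} therefore provides $\xi^\star=(\vec{u}^{k+1}_j,p^{k+1}_j,c^{k+1}_j,h^{k+1}_j)\in H$ with $\Phi(\xi^\star)=\cero$ and $\norm{\xi^\star}_H\le\mu$, and induction on $k$ completes the proof.

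The main obstacle is organisational rather than conceptual: one must pick a single common weighting of the four residuals (equivalently, a compatible rescaling of the product inner product on $H$) under which \emph{all} the indefinite coupling terms can be absorbed into the coercive quadratic part simultaneously --- in particular the terms tying $\vec{u}$ to $c$ both through the active stress $\bsym{\beta}(c)$ and through the $\lambda\vdiv\vec{u}$ source, and the $h$--$c$ coupling through $\vec{K}$. This is exactly where the hypothesis on $\nu$ and $d$ and the pressure bound \eqref{eq:bound_p} enter. A variant that removes part of this bookkeeping is to first eliminate the pressure, writing $p=-\tfrac{\nu}{1-2\nu}\,\Pi_{Q_j}(\vdiv\vec{u})$ with $\Pi_{Q_j}$ the $L^2(\Omega)$-projection onto $Q_j$ (the unique solution of the discrete second equation for a given $\vec{u}$), so that the displacement block of the reduced system in $(\vec{u},c,h)$ is manifestly coercive; one applies Brouwer to that reduced system and recovers $p$ afterwards.
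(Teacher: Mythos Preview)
Your overall strategy coincides with the paper's: define a residual map on the product space, check continuity, establish the ball condition, and invoke Theorem~\ref{th:brouwer}. The paper also proceeds by induction on~$k$ and reaches a lower bound of the form $C_R\norm{\xi}^2-C_r\norm{\xi}$ (quadratic minus linear rather than your quadratic minus constant), which is an inessential variant.

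There is, however, a genuine gap in your coercivity argument. You write that ``the second residual tested with $p$ yields \eqref{eq:divp} and hence, through \eqref{eq:bound_p}, control of $\norm{p}_{0,\Omega}$ by $\snorm{\vec{u}}_{1,\Omega}$''. But \eqref{eq:divp} and \eqref{eq:bound_p} are identities that hold \emph{only for solutions} of the second equation; for an arbitrary $(\vec{u},p)\in H$ on which $\Phi$ is being evaluated, testing the second residual with $q=p$ gives $-(p,\vdiv\vec{u})-\tfrac{1-2\nu}{\nu}\norm{p}_{0,\Omega}^2$, and if you simply add the four residuals as written in \eqref{eq:discr1} this term has the wrong sign to produce coercivity in~$p$ (the $(p,\vdiv\vec{u})$ contributions from the first and second blocks add rather than cancel). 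The paper handles this by \emph{flipping the sign} of the second equation when assembling the map~$\Psi$, so that the off-diagonal $(p,\vdiv\vec{u})$ terms cancel and one is left with the positive contribution $+\tfrac{1-2\nu}{\nu}\norm{p}_{0,\Omega}^2$; this is the step you are missing. Your proposed variant of eliminating~$p$ via the $L^2$-projection onto $Q_j$ and applying Brouwer to the reduced system in $(\vec{u},c,h)$ is a legitimate alternative that sidesteps the issue entirely, but the main argument as you wrote it needs this correction.
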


\begin{proof}
	 To simplify the proof we introduce the constants:
	\begin{gather*}
	C_u  := C_1 \left( \norm{\vec{u}^0_j}_{1,\Omega}^2 + t_{\text{final}}C_{\beta}^2\right), \quad C_h  := C_3 \left(\norm{h^0_j}_{0,\Omega}^2 + t_{\text{final}} C_J^2 \right), \\
	C_c := C_2\left(\norm{c^0_j}_{0,\Omega}^2 + \norm{h^0_j}_{0,\Omega}^2 + \norm{\vec{u}^0_j}_{1,\Omega}^2 + t_{\text{final}}(C_{\beta}^2+C_J^2)\right).
	\end{gather*}
We proceed by induction on $k > 2$. We define the mapping
\begin{align*}
	\Psi\,:\,\vec{V}_j\times Q_j\times \Phi_j\times \Phi_j \to \vec{V}_j\times Q_j\times \Phi_j\times \Phi_j,
\end{align*}
using the relation
\begin{align*}
&\qquad	(\Psi(\vec{u}^{k+1}_j,p^{k+1}_j,c^{k+1}_j,h^{k+1}_j),(\vec{v},q,\phi,\psi))_{\Omega}  = \\
&	\frac{\alpha_1}{\Dt}\int_{\Omega}(\beps(\vec{u}^{k+1}_j)-\beps(\vec{u}^{k}_j))\,:\,\beps(\vec{v})+\int_{\Omega}(\beps(\vec{u}^{k+1}_j)\,:\,\beps(\vec{v})) - \int_{\Omega}p^{k+1}_j\vdiv \vec{v} \\
&  -  \frac{\alpha_2}{\Dt} \int_{\Omega}(p^{k+1}_j - p^{k}_j)\vdiv \vec{v} - \int_{\Omega}\bsym{\beta}(c^{k+1}_j)\vdiv\vec{v}+ 
	\int_{\Omega}q\vdiv\vec{u}^{k+1}_j + \frac{1-2\nu}{\nu}\int_{\Omega}p^{k+1}_jq  \\
&	+ \frac{1}{\Dt}\int_{\Omega}(c^{k+1}_j-c^{k}_j)\phi+\frac{1}{\Dt}\int_{\Omega}((\vec{u}^{k+1}_j-\vec{u}^{k}_j)\cdot \nabla c^{k+1}_j\phi) + \frac{1}{2\Dt}\int_{\Omega}(c^{k+1}_j \vdiv\vec{u}^{k+1}_j\phi)
	\nonumber \\
&+ \int_{\Omega}(D^*\nabla c^{k+1}_j\cdot\nabla\phi) - \int_{\Omega}(\vec{K}(h^{k+1}_j,c^{k+1}_j)+\lambda\vdiv \vec{u}^{k+1}_j)\phi + \frac{1}{\Dt}\int_{\Omega}(h^{k+1}_j-h^{k}_j)\psi \\
&	 + \frac{1}{\Dt}\int_{\Omega}(\vec{u}^{k+1}_j-\vec{u}^{k}_j)\cdot \nabla h^{k+1}_j\psi + \frac{1}{2\Dt}\int_{\Omega}h^{k+1}_j\vdiv(\vec{u}^{k+1}_j-\vec{u}^{k}_j)\psi \\
&  -\int_{\Omega}(\vec{J}(c^{k+1}_j)-h^{k+1}_j)\psi.
\end{align*}
Note that this map is well-defined and continuous on $\vec{V}_j\times Q_j\times \Phi_j\times \Phi_j$. On the other hand, if we take
\begin{align*}
	(\vec{v},q,\phi,\psi) = (\vec{u}^{k+1}_j,p^{k+1}_j,c^{k+1}_j,h^{k+1}_j),
\end{align*}
and employ \eqref{eq:czero}, \eqref{eq:dcoerv}, \eqref{eq:divp}, together with assumptions (A1)-(A3), we obtain
\begin{align*}
	&\qquad (\Psi(\vec{u}^{k+1}_j,p^{k+1}_j,c^{k+1}_j,h^{k+1}_j),(\vec{u}^{k+1}_j,p^{k+1}_j,c^{k+1}_j,h^{k+1}_j))_{\Omega} \geq \\ 
	&\frac{\alpha_1}{\Dt}\left(\norm{\beps(\vec{u}^{k+1}_j)}^2_{0,\Omega} - \norm{\beps(\vec{u}^{k}_j)}_{0,\Omega}\norm{\beps(\vec{u}^{k+1}_j)}_{0,\Omega}\right) + \norm{\beps(\vec{u}^{k+1}_j)}^2_{0,\Omega} \\
	&+ \frac{\alpha_2}{\Dt}\frac{1-2\nu}{\nu}\norm{p^{k+1}_j)}^2_{0,\Omega} - \frac{\alpha_2}{\Dt}\norm{p^k_j}_{0,\Omega}\norm{\vdiv \vec{u}^{k+1}_j}_{0,\Omega}-C_{\beta}\norm{\vdiv\vec{u}^{k+1}_j}_{0,\Omega}  \\
	&+\frac{1}{\Dt} \left(\norm{c^{k+1}_j}_{0,\Omega}^2 - \norm{c^{k+1}_j}_{0,\Omega}\norm{c^{k}_j}_{0,\Omega}\right) + \alpha_e \norm{c^{k+1}_j}_{1,\Omega}^2 - C_{K}\norm{h^{k+1}_j}_{0,\Omega}\norm{c^{k+1}_j}_{0,\Omega}\\ 
	&- \lambda \norm{\vdiv \vec{u}^{k+1}_j}_{0,\Omega}\norm{c^{k+1}_j}_{0,\Omega} + 
	\frac{1}{\Dt}\left(\norm{h^{k+1}_j}_{0,\Omega}^2 - \norm{h^{k+1}_j}_{0,\Omega}\norm{h^{k}_j}_{0,\Omega}\right) \\
	&- C_J\norm{h^{k+1}_j}_{0,\Omega} + \norm{h^{k+1}_j}_{0,\Omega}^2.
\end{align*}
Next, using Korn's Inequality (with constant $\hat{C}_k$) and \eqref{eq:bound_p}, we deduce that
\begin{align*}
	&(\Psi(\vec{u}^{k+1}_j,p^{k+1}_j,c^{k+1}_j,h^{k+1}_j),(\vec{u}^{k+1}_j,p^{k+1}_j,c^{k+1}_j,h^{k+1}_j))_{\Omega} \geq \\ &\quad \frac{1}{\hat{C}_k}\norm{\vec{u}^{k+1}_j}_{1,\Omega}^2 + \frac{\alpha_2\sqrt{d}(2-4\nu)}{\Dt(4\nu-3)}\norm{\vec{u}^{k+1}_j}_{1,\Omega}^2-\lambda\sqrt{d}C_c\norm{\vec{u}^{k+1}_j}_{1,\Omega}\\
	&\quad - \frac{\alpha_2C_pC_u\sqrt{d}}{\Dt}\norm{\vec{u}^{k+1}_j}_{1,\Omega}-C_{\beta}\sqrt{d}\norm{\vec{u}^{k+1}_j}_{1,\Omega}+\alpha_e \norm{c^{k+1}_j}_{1,\Omega}^2 - C_c \norm{c^{k+1}_j}_{1,\Omega}\\
	&\quad  - C_KC_h\norm{c^{k+1}_j}_{1,\Omega} + 
	\norm{h^{k+1}}_{1,\Omega}^2 - 2C_h\norm{h^{k+1}}_{1,\Omega}-C_J\norm{h^{k+1}}_{1,\Omega}.
\end{align*}
Then, setting
\begin{gather*}
	C_R = \min\left\{ \frac{1}{\hat{C}_k},\frac{\alpha_2\sqrt{d}(2-4\nu)}{4\nu-3},\alpha_e,1 \right\}, \\ C_r = \max\left\{ \frac{\alpha_2C_pC_u\sqrt{d}}{\Dt},C_{\beta}\sqrt{d},C_c,C_KC_h, \lambda\sqrt{d}C_c,2C_h,C_J \right\},
\end{gather*}
we may apply the inequality $a+b\leq\sqrt{2}(a^2+b^2)^{1/2}$, valid for all $a,b\in \RR$, to obtain
\begin{align*}
	&(\Psi(\vec{u}^{k+1}_j,p^{k+1}_j,c^{k+1}_j,h^{k+1}_j),(\vec{u}^{k+1}_j,p^{k+1}_j,c^{k+1}_j,h^{k+1}_j))_{\Omega} \geq \\ &\qquad C_R  \left(\norm{\vec{u}^{k+1}_j}_{1,\Omega}^2+\norm{c^{k+1}_j}_{1,\Omega}^2+\norm{h^{k+1}_j}_{1,\Omega}^2 \right) \\ &\qquad - C_r\left(\norm{\vec{u}^{k+1}_j}_{1,\Omega}+\norm{c^{k+1}_j}_{1,\Omega}+\norm{h^{k+1}_j}_{1,\Omega} \right).
\end{align*}
Hence, the right-hand side is nonnegative on a sphere of radius $r := C_r/C_R$. Consequently, by Theorem \ref{th:brouwer}, there exists
a solution to the fixed-point problem $\Psi(\vec{u}^{k+1}_j,p^{k+1}_j,c^{k+1}_j,h^{k+1}_j) = 0$. 
\end{proof}



\subsection{Linearisation}

 At each time 
iteration we are left with a nonlinear system, and proceeding with a Newton linearisation we finally obtain the 
following scheme: 
Starting from discrete initial data $\bu_j^{0},p_j^{0}, c_j^0, h_j^0$ (projections of the exact initial conditions onto the 
finite element spaces), and 
for $\ell=1,\ldots$, we find $\bu_j^{\ell+1}\in\bV_j,p_j^{\ell+1}\in Q_j$ and $c_j^{\ell+1},h_j^{\ell+1}\in \Phi_j$, 
as the converged solutions of the iteration for $k=0,\ldots$ 
\begin{gather*} \bu_j^{k+1} \leftarrow \bu_j^{k} + \delta \bu^{k+1}_j,\quad p_j^{k+1} \leftarrow p_j^{k} + \delta p^{k+1}_j, \\
 c_j^{k+1} \leftarrow c_j^{k} + \delta c^{k+1}_j,\quad h_j^{k+1} \leftarrow h_j^{k} + \delta h^{k+1}_j,\end{gather*}
where the discrete 
Newton increments $\delta(\cdot)_j$ (the iteration superscript is discarded) solve the non-symmetric Jacobian linear problem 
\begin{subequations}\label{eq:jac}
\begin{alignat}{5}
  a^1(\delta\bu_j,\bv_j)   &\;+&\; b^1(\bv_j,\delta p_j)  + g(\bv_j,\delta\bomega_j)  &+ d^1_{c_j^{k}}(\delta c_j,\bv_j) &&= \nonumber\\
 &&& F_{k}(\bv_j)\quad \forall \bv_j\in\bV_j,  \label{jac-u}\\
  b^2(\delta\bu_j,q_j)       &\;-&\;       a^2(\delta p_j,q_j) &&   &=\nonumber\\
  &&& G_k(q_j)  \quad \forall q_j \in Q_j, \label{jac-p}\\
   g(\delta\bu_j,\btheta_j)  && &&& = \nonumber\\
   &&& H_k(\btheta_j) \quad \forall \btheta_j\in \bW_j,\label{jac-w} \\
  d^3_{c_j^{k}}(\delta\bu_j,\phi_j) && & +  a^3_{\bu_j^k,c_j^k,h_j^k}(\delta c_j,\phi_j)  &\;  +\;  e^3_{c_j^k}(\delta h_j,\phi_j)   & = \nonumber\\
   &&& I_{k}(\phi_j) \quad \forall \phi_j \in \Phi_j, \label{weak-c}\\
d^4_{h_j^{k}}(\delta\bu_j,\psi_j) & &   &e^4_{c_j^k}(\delta c_j,\psi_j) &\;  +\;   a^4(h_j,\psi_j)  & =\nonumber \\
&&& J_{k}(\psi_j) \quad \forall \psi_j\in \Psi_j. \label{weak-n}
\end{alignat}
\end{subequations}
We have used the bilinear forms $a_1:\bH^1(\Omega)\times\bH^1(\Omega)\to \RR$, 
 $a^2: L^2(\Omega)\times L^2(\Omega)\to \RR$, 
$a^3,a^4,e^3,e^4: H^1(\Omega)\times H^1(\Omega)\to \RR$, 
$b:\bH^1(\Omega)\times L^2(\Omega)\to \RR$, 
$g: \bH^1(\Omega)\times \mathbb{RM} \to \RR$,  
 $d^1:H^1(\Omega) \times\bV\to \RR$, 
 $d^3:\bH^1(\Omega)\times H^1(\Omega)  \to \RR$, $d^4:\bH^1(\Omega)\times H^1(\Omega)  \to \RR$, 
 where the subscripts explicitly indicate fixed quantities; and 
  linear functionals $F_k:\bH^1(\Omega)\to\RR$,  $G_{k},H_{k},I_k: H^1(\Omega)\to\RR$ where 
  the subscript $k$ denotes that they depend on 
  quantities associated with the state around which one performs linearisation. 
 The forms satisfy the following specifications 
\begin{gather*}
a^1(\bu,\bv) :=  {(1+\frac{\alpha_1}{\Delta t}) \int_{\Omega} \beps(\bu):\beps(\bv)} ,\quad 
b^1(\bv,p):= -   (1+\frac{\alpha_1}{\Delta t}) \int_{\Omega}p \vdiv \bv,\\
b^2(\bu,q):= -   \int_{\Omega}q \vdiv \bu,\quad a_4(h,\psi)  : =  \int_\Omega \bigl(\frac{1}{\Delta t}+1) h\psi,\\
d^1_{\hat{c}}(c,\bv): = -\int_{\Omega} {c \frac{\beta_1\beta_2 \hat{c}^{n-1}}{(\hat{c}^n+ \beta_2)^2} \vdiv \bv}, \quad 
a^2(p,q)  := {\frac{1-2\nu}{\nu}} \int_{\Omega}  pq,\\
d^3_{\hat{c}}(\bu,\phi)  := \frac{1}{\Delta t}\int_\Omega  (\bu \cdot\nabla \hat{c}) \phi  - \int_\Omega \lambda \vdiv \bu \phi,
\quad 
d^4_{\hat{h}}(\bu,\psi)  := \frac{1}{\Delta t}\int_\Omega  (\bu \cdot\nabla \hat{h}) \psi,  
\\
  a^3_{\hat{\bu},\hat{c},\hat{h}}(c,\phi) : =\!\! \int_\Omega \biggl(\frac{1}{\Delta t} +  {\mu K_1 \hat{h}\frac{(b-1-\hat{c} -\hat{c}^2)}{(1+\hat{c})^2}  + \frac{G}{K+\hat{c}}  - \frac{G K}{(K+\hat{c})^2}\biggr)} c\phi \\
  + \int_\Omega (\hat{\bu} \cdot \nabla c) \phi + \int_\Omega D^\star\nabla c \cdot \nabla \phi,\\ 
e^3_{\hat{c}}(h,\phi) :=  -\!\!\int_\Omega { \mu K_1 \frac{b+\hat{c}}{1+\hat{c}} h} \phi,\quad 
e^4_{\hat{c}}(c,\psi) :=  \!\int_\Omega \frac{2\hat{c}c \psi}{(1+\hat{c}^2)^2} ,\quad \\
g(\bv,\btheta) := \int_\Omega \bv\cdot\btheta,\quad F_{k}(\bv) := 
\!\int_{\Omega}\!  \frac{\beta_1(c_j^k)^n}{\beta_2+(c_j^k)^n} \vdiv \bv - a_1(\bu_j^k,\bv) - b^1(\bv,p_j^k), \\ 
G_k(q) := -b^2(\bu_j^k,q) + a_2(p_j^k , q), \qquad   H_{k}(\btheta) : = \int_{\Omega} \bu_j^k \cdot \btheta, \\ 
J_{k}(\psi) : = \int_{\Omega} \biggl(\frac{1}{1+(c_j^k)^2} - h_j^k + \frac{h_j^{\ell} +\bu_j^{\ell+1}\cdot\nabla h_j^{\ell}}{\Delta t} \biggr) \psi, \\
I_{k}(\phi) : = \int_{\Omega} \biggl(\frac{c_j^{\ell} +\bu_j^{\ell+1}\cdot\nabla c_j^{\ell}}{\Delta t} 
- \mu K_1 h_j^k \frac{b+c_j^k}{1+c_j^k} 
+ \frac{G c_j^k}{K+c_j^k} - \lambda \vdiv \bu_j^k\biggr)  \phi \\
-\int_\Omega \frac{1}{\Delta t} c_j^k\phi - \int_\Omega (\bu_j^k \cdot \nabla c_j^k) \phi - 
\int_\Omega D^\star \nabla c_j^k \cdot \nabla \phi. 
\end{gather*}
If one uses \eqref{bc:Gamma}-\eqref{bc:Sigma} then the third column and the third row \eqref{jac-w} in the tangent matrix are not needed.


\section{Numerical results}\label{sec:results}

\begin{figure}[!t]
\begin{center}
    \subfigure[]{\includegraphics[width=0.24\textwidth]{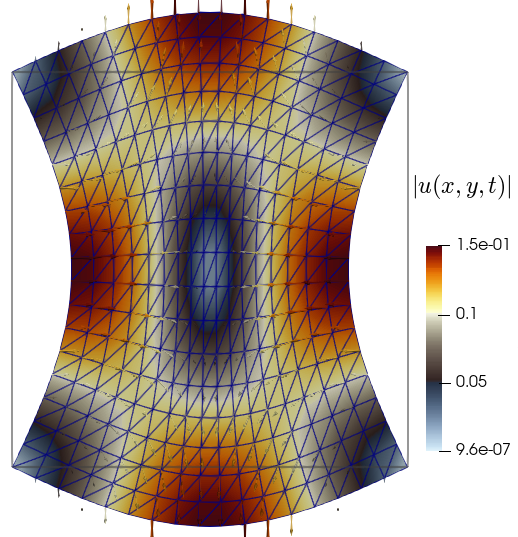}}
    \subfigure[]{\includegraphics[width=0.24\textwidth]{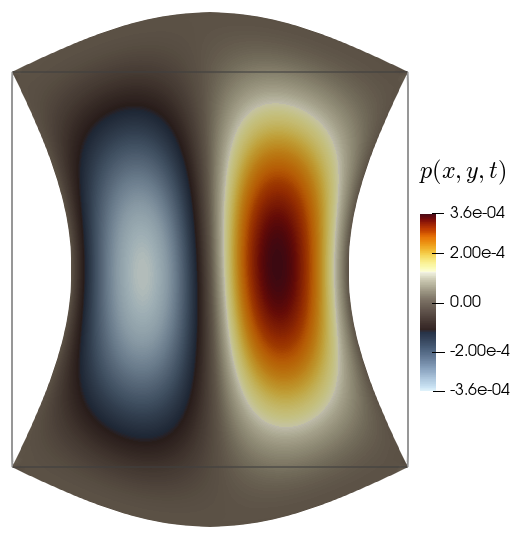}}
    \subfigure[]{\includegraphics[width=0.24\textwidth]{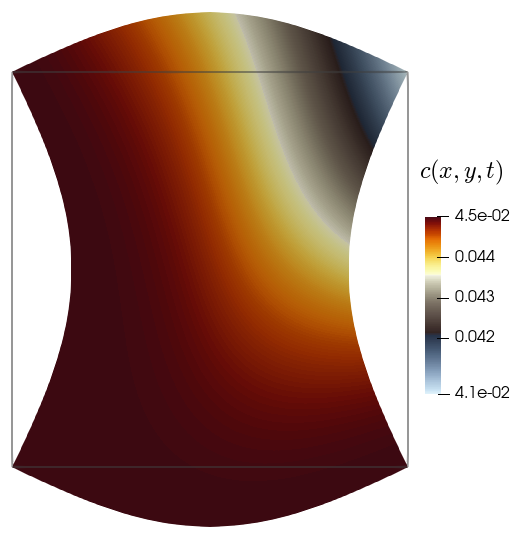}}
    \subfigure[]{\includegraphics[width=0.24\textwidth]{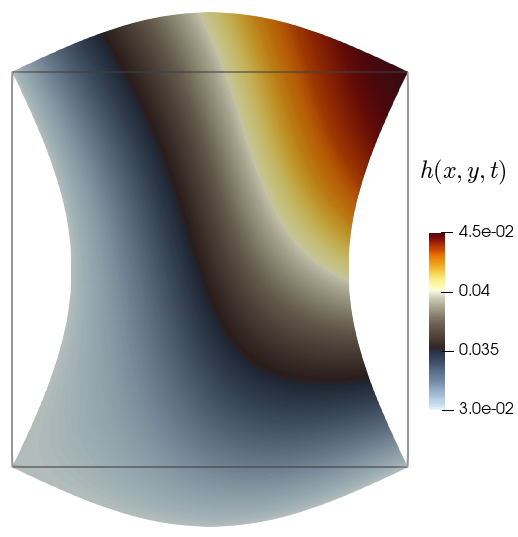}}
\end{center}
\caption{Convergence test. Approximate solutions on the deformed configuration, plotted on a coarse mesh and at $t = 3\Dt$.}\label{fig:test00}
\end{figure}

\subsection{Preliminaries and convergence verification}
All tests in this section have been implemented with the open-source finite element library FEniCS \cite{alnaes}. After matrix assembly, the resulting linear systems at each linearisation step 
are solved with the direct solver MUMPS. The stopping criterion on the nonlinear iterations of the Newton-Raphson algorithm is based on a weighted residual norm dropping below the fixed  tolerance of $1\cdot 10^{-7}$. 

Before computing numerical solutions pertaining to the application of calcium wave trains in 2D and 3D, we briefly address the verification of convergence for the space-time discretisation. For this we simply consider a unit square domain, partitioned uniformly into triangles of successive refinement. On each resolution level we compute approximate solutions and compare (using errors for displacement, pressure, calcium concentration and non-activated IPR in their natural norms),  against the following manufactured exact solutions 
\begin{gather*}
\bu(x,y,t) = f(t)\begin{pmatrix} 
5\cos(2\pi x)\sin(2\pi y)+\frac{x^2(1-x)^2y^2(1-y)^2(1-2\nu)}{\nu}\\[1ex]
-5\sin(2\pi x)\cos(2\pi y)+\frac{x^3(1-x)^3y^3(1-y)^3(1-2\nu)}{\nu}
\end{pmatrix}, \\
c(x,y,t) = f(t)\biggl(\frac{1}{2} + \frac{1}{2} \cos(\frac{\pi}{4}xy)\biggr) , \quad h(x,y,t) = 
f(t)\biggl(\frac{1}{2} + \frac{1}{2} \sin(\frac{\pi}{2}xy)\biggr), 
\end{gather*}
with $p = -\frac{\nu}{1-2\nu}\vdiv \bu$. For the convergence tests we use the Taylor-Hood element for displacement and pressure, and continuous and piecewise quadratic elements for $h$ and $c$. 
The parameters are $\nu = 0.49$, $\alpha_1 = \alpha_2 = 0.001$, $\beta_1 = \beta_2 = \mu = \lambda = b = G = K_1 = 1$, $K = 2$, $D^\star = 0.1$. The exact displacement in the viscoelastic case does not satisfy a homogeneous traction condition, so a synthetic traction is imposed computed from the exact stress. In addition, the exact calcium concentration is imposed as a Dirichlet  datum everywhere on the boundary.  
A time step $\Dt = 0.01$  is chosen 
and we simulate a   short time horizon $t_{\mathrm{final}} = 3\Delta t$. Errors ${\texttt{e}}_s$ between the 
approximate and exact solutions are tabulated against the number of degrees of freedom in Table~\ref{table:ex0j}. For the spatial convergence verification we have used $f(t)=t$. The expected convergence behaviour (quadratic rates for all fields) is observed. 
The Newton-Raphson algorithm takes, in average, four iterations to reach the prescribed residual tolerance. The convergence in time is verified with $f(t) = \sin(t)$ and by partitioning the time interval into successively refined uniform steps and computing accumulated errors $\hat{\texttt{e}}_s$. Here  
\[ {\texttt{e}}_s  = \norm{s(N\Dt)-s_j^{N+1}}_\star, \qquad \hat{\texttt{e}}_s = \biggl( \sum_{n=1}^{N} \Delta t \|s(t_{n+1})- s_j^{n+1}\|^2_{\star}\biggr)^{1/2},\] 
where {$\| \cdot \|_\star$} denotes the appropriate space norm for the generic vector or scalar field $s$ (that is, the $L^2-$norm for pressure and the $H^1-$norm for the remaining variables).  
The results are shown in Table \ref{table:ex0t}, confirming the expected first-order convergence. Samples of approximate solutions are depicted in Figure~\ref{fig:test00}.

\begin{table}[!t]
	\setlength{\tabcolsep}{4pt}
	\renewcommand{\arraystretch}{1.2}
	\centering 
	\caption{Convergence test. Experimental errors associated with the spatial discretisation (using Taylor-Hood and piecewise quadratic elements $\widetilde{\bV}_j\times Q_j \times \widetilde{\Phi}_j \times \widetilde{\Psi}_j$) and convergence rates for the approximate solutions 
		computed at the final time step.} \label{table:ex0j}
	\begin{tabular}{@{}rccccccccc@{}}
		\toprule
		DoF & $j$ & $\texttt{e}_{\bu}$ & \texttt{rate} & $\texttt{e}_{p}$ & \texttt{rate} & $\texttt{e}_{c}$ & \texttt{rate} & $\texttt{e}_{h}$ & \texttt{rate} \\
\midrule
   215 & 0.471 & 6.50e-02 & -- & 1.22e-04 & -- & 1.78e-04 & -- & 5.47e-04 & -- \\
   523 & 0.283 & 1.76e-02 & 2.554 & 3.38e-05 & 2.508 & 7.43e-05 & 1.710 & 2.06e-04 & 1.907 \\
  1547 & 0.157 & 5.52e-03 & 1.976 & 7.57e-06 & 2.548 & 2.14e-05 & 2.117 & 6.56e-05 & 1.951 \\
  5227 & 0.083 & 1.57e-03 & 1.981 & 1.69e-06 & 2.359 & 5.16e-06 & 2.237 & 1.88e-05 & 1.963 \\
 19115 & 0.043 & 4.18e-04 & 1.986 & 4.06e-07 & 2.147 & 1.26e-06 & 2.131 & 5.24e-06 & 1.929 \\
 73003 & 0.022 & 1.09e-04 & 1.991 & 1.02e-07 & 2.044 & 3.14e-07 & 2.043 & 1.36e-06 & 1.990 \\
\botrule
	\end{tabular} 
\end{table}

\begin{table}[t]
	\setlength{\tabcolsep}{4pt}
	\renewcommand{\arraystretch}{1.2}
	\begin{center}
	\centering 
			\caption{Convergence test. Experimental cumulative errors associated with the temporal discretisation and convergence rates for the approximate solutions  using a backward Euler scheme.} \label{table:ex0t}
	\begin{tabular}{@{}rccccccccc@{}}
		\toprule
			$\Delta t$ & $\hat{\texttt{e}}_{\bu}$ & \texttt{rate} & $\hat{\texttt{e}}_{p}$ & \texttt{rate} & $\hat{\texttt{e}}_{c}$ & \texttt{rate} & $\hat{\texttt{e}}_{h}$ & \texttt{rate} \\
\midrule
0.5          & 1.13e-01 & --         & 1.36e-02 & --       & 2.97e-01 & --        & 8.60e-02 & -- \\
 0.25        & 6.89e-02 & 0.927 & 5.66e-03 & 0.932 & 1.66e-01 & 0.783 & 4.14e-02 & 1.112\\ 
 0.125      & 3.36e-02 & 1.071 & 2.90e-03 & 0.945 & 8.20e-02 & 1.081 & 2.52e-02 & 0.844 \\ 
 0.0625    & 1.71e-02 & 0.976 & 1.49e-03 & 0.976 & 4.08e-02 & 1.075 & 1.70e-02 & 0.891 \\ 
 0.03125  & 8.59e-03 & 1.105 & 7.87e-04 & 1.065 & 2.14e-02 & 0.970 & 8.56e-03 & 0.968 \\ 
 0.015625 & 4.28e-03 & 1.177 & 4.02e-04 & 0.982 & 1.06e-02 & 1.025 & 4.83e-03 & 0.962 \\
 \botrule
		\end{tabular} 
		\end{center}
	\end{table}

From now on, and unless otherwise specified, we   take values for all model 
constants as follows:
{$D^\star=0.004$}, 
$\nu = 0.4$, {$\alpha_1 = 1$}, 
{$\alpha_2= 0.5$, $\beta_1 = 1.5$, $\beta_2 = 0.1$, $K_1=46.29$, $K = 0.1429$, $G = 5.7143$}, $b = 0.111$. 
And the approximation of displacement-pressure will be restricted only to the MINI element  and to continuous and piecewise linear elements for calcium and IPR (that is, $\bV_j\times Q_j\times\Phi_j\times\Psi_j$ in  \eqref{eq:FEspaces}).

\subsection{Test 1. Calcium waves on a fixed domain}

The initial conditions are as in \eqref{eq:initial}, in particular, for Tests 1-3 the initial condition for $h$ is the steady state and 
for calcium consists of a Gaussian on the domain 
centre and the steady state elsewhere, as follows  
$${c_0(\bx) = c_s + 6 c_s \exp(-200(x^2+y^2)), \quad h_0(\bx) = \frac{1}{1+c_s^2}},$$ 
with $c_s = 0.14504$, for the case of $\mu = 0.284$, 
$c_s = 0.18572$, for the case of $\mu = 0.288468$,
$c_s = 0.55633$, for the case of $\mu = 0.3$, and 
$c_s = 0.84794$, for the case of $\mu = 0.35$.
The amplitude of this initial disturbance 
is sufficient to induce a calcium wave. 
The boundary conditions 
for the 2D cases are \eqref{bc:pureTraction}, whereas for the 3D cases 
we will use either \eqref{bc:Gamma}-\eqref{bc:Sigma} or \eqref{bc:pureTraction}.  
In all cases we use fixed timesteps dictated by the reaction kinetics. 

\begin{figure}[!t]
\begin{center}
\subfigure[]{\includegraphics[width=0.24\textwidth]{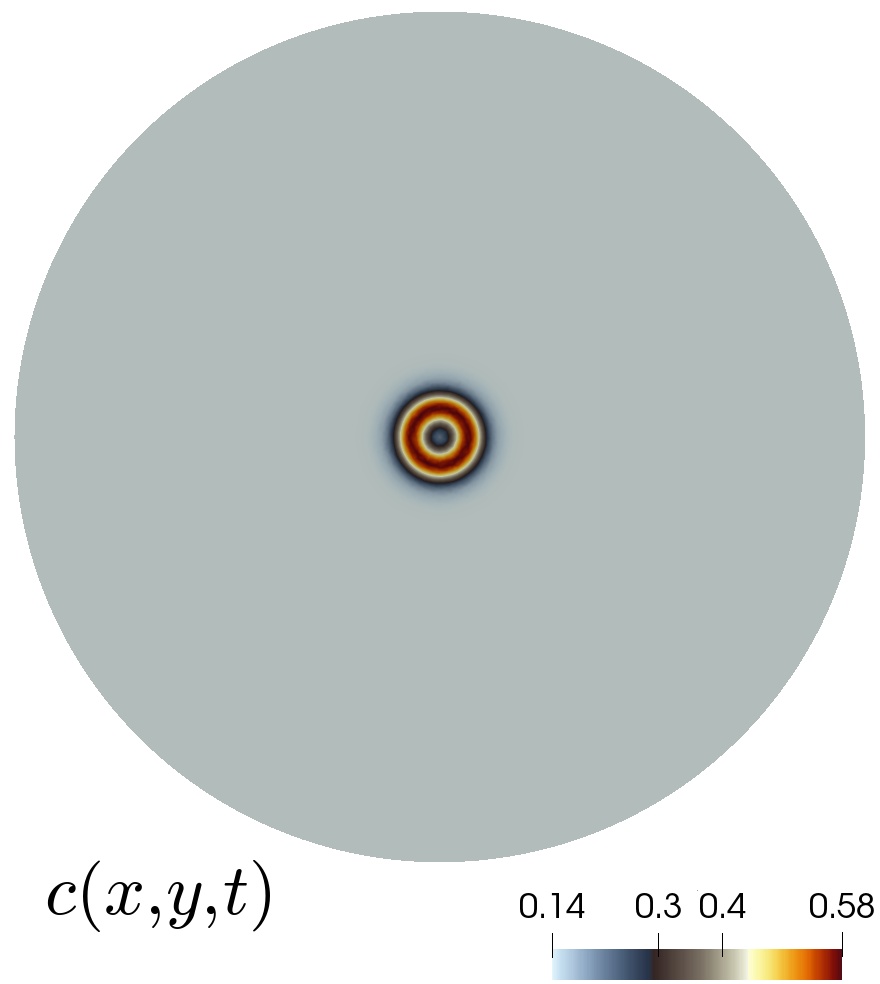}}
\subfigure[]{\includegraphics[width=0.24\textwidth]{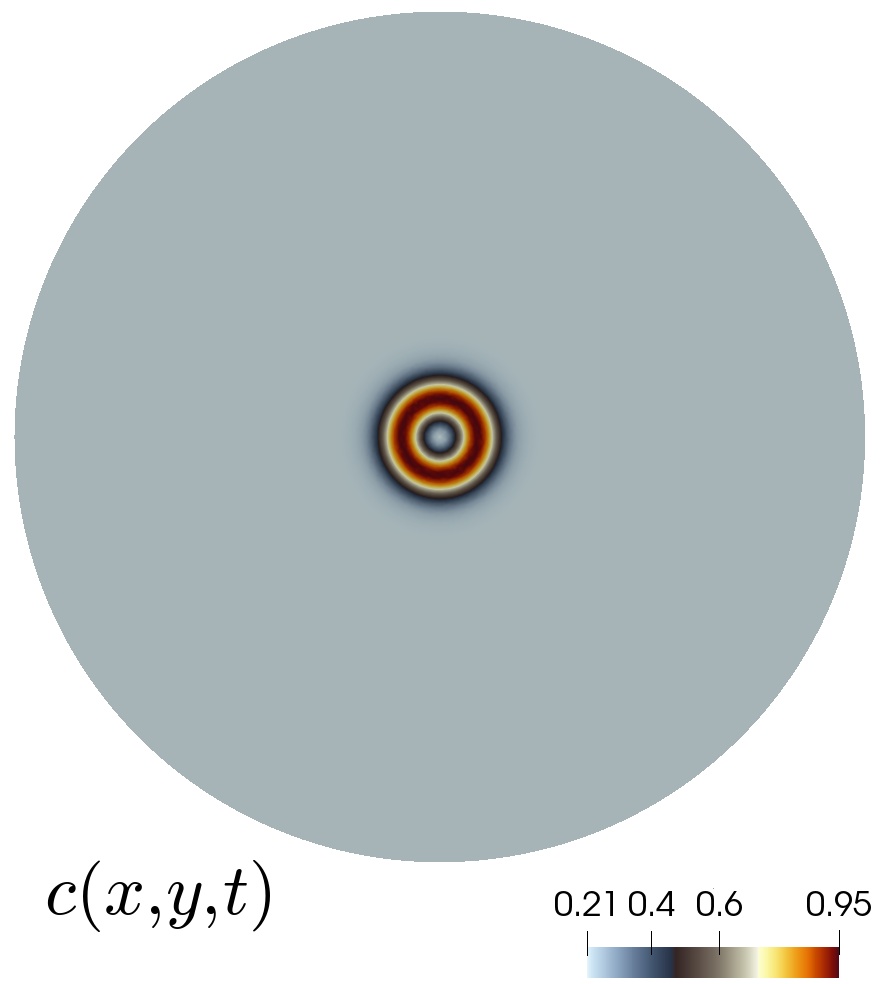}}
\subfigure[]{\includegraphics[width=0.24\textwidth]{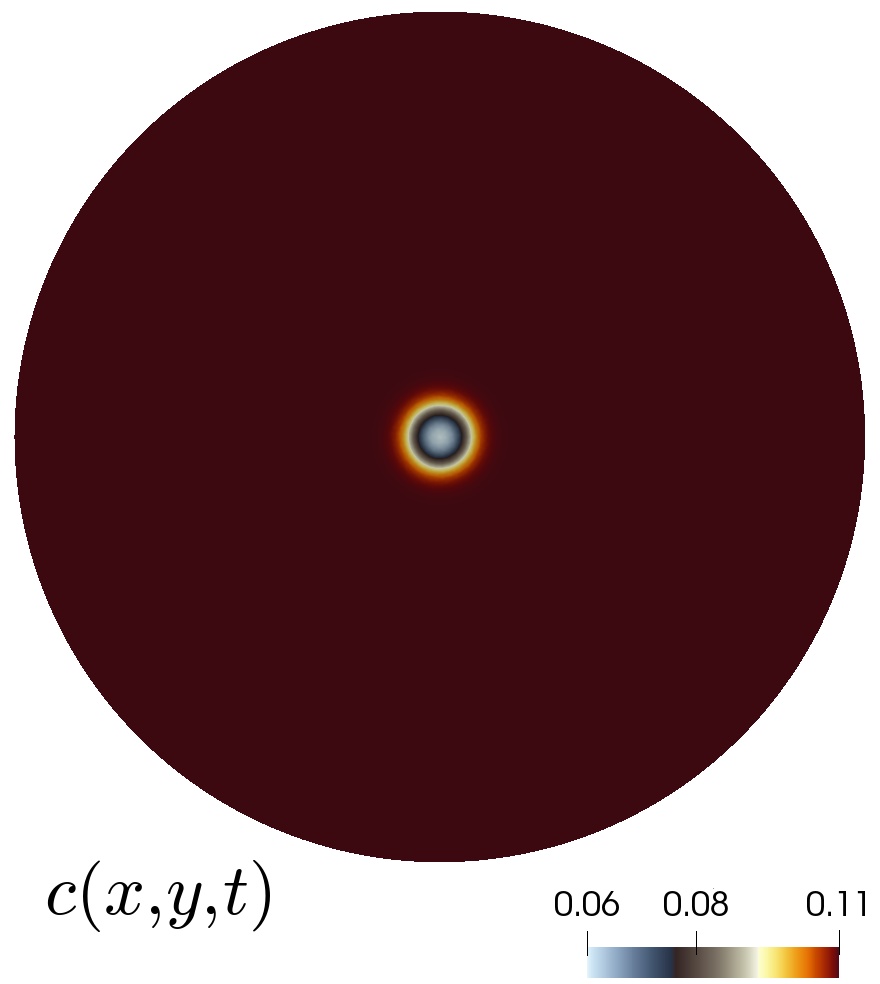}}
\subfigure[]{\includegraphics[width=0.24\textwidth]{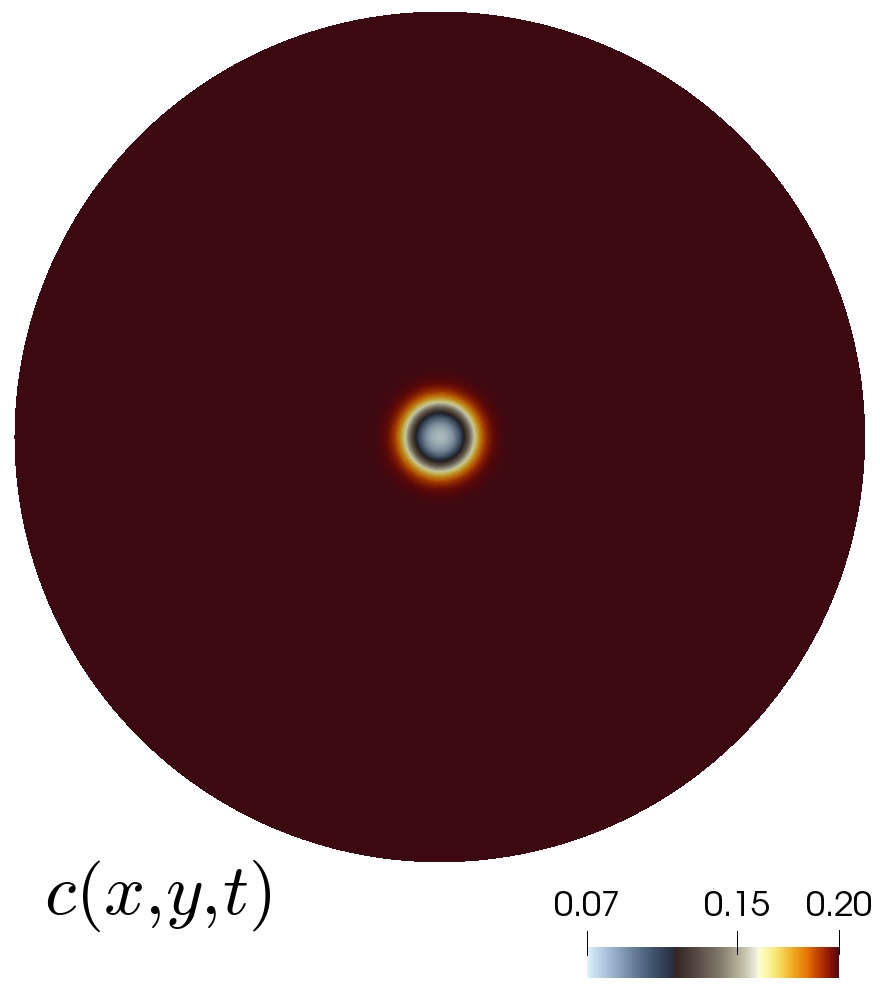}}\\
\subfigure[]{\includegraphics[width=0.24\textwidth]{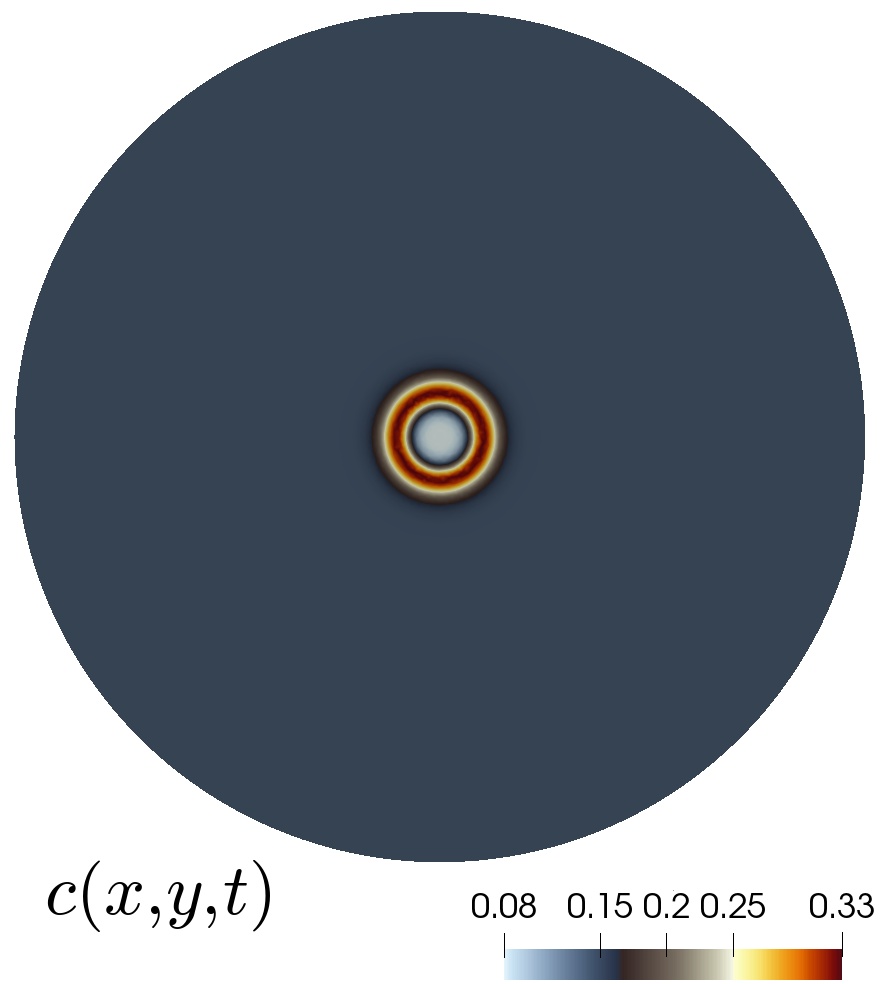}}
\subfigure[]{\includegraphics[width=0.24\textwidth]{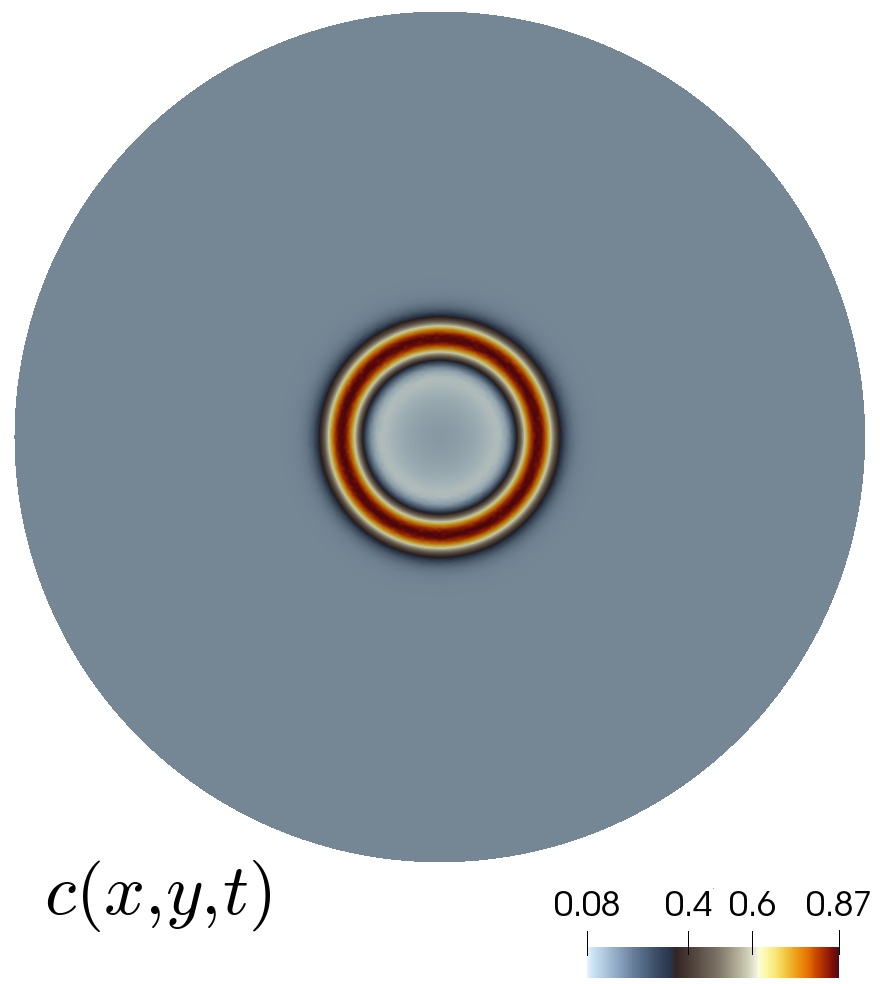}}
\subfigure[]{\includegraphics[width=0.24\textwidth]{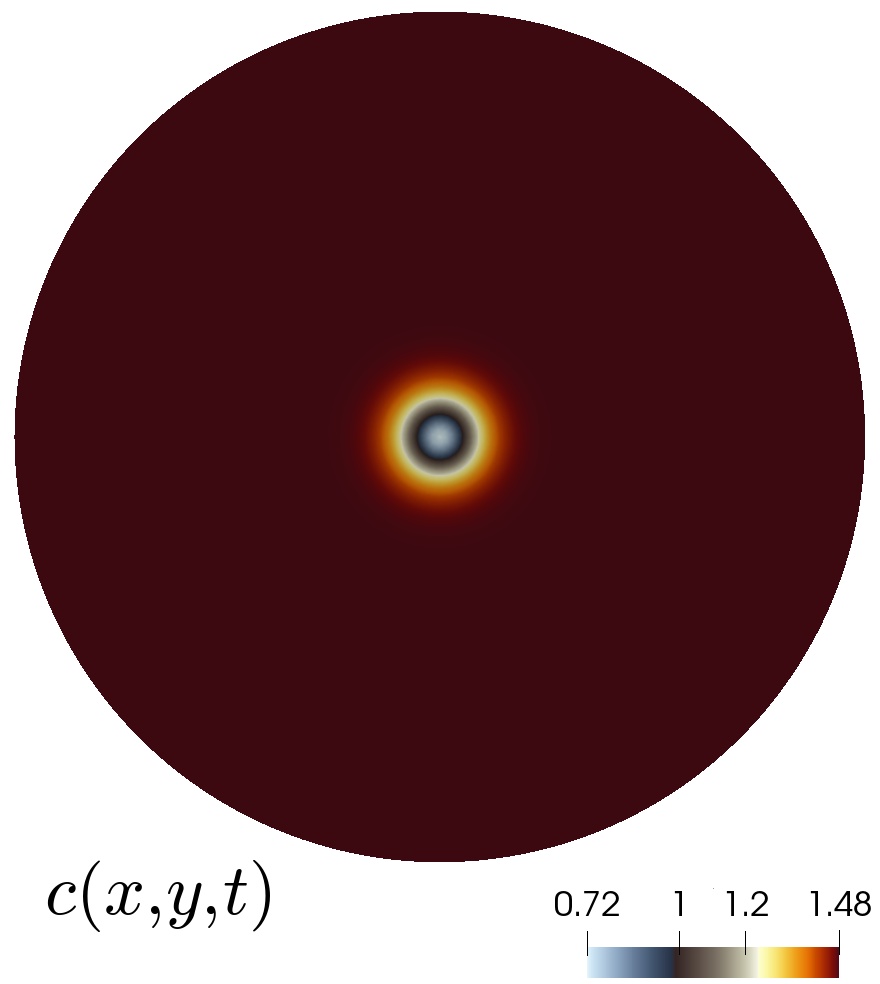}}
\subfigure[]{\includegraphics[width=0.24\textwidth]{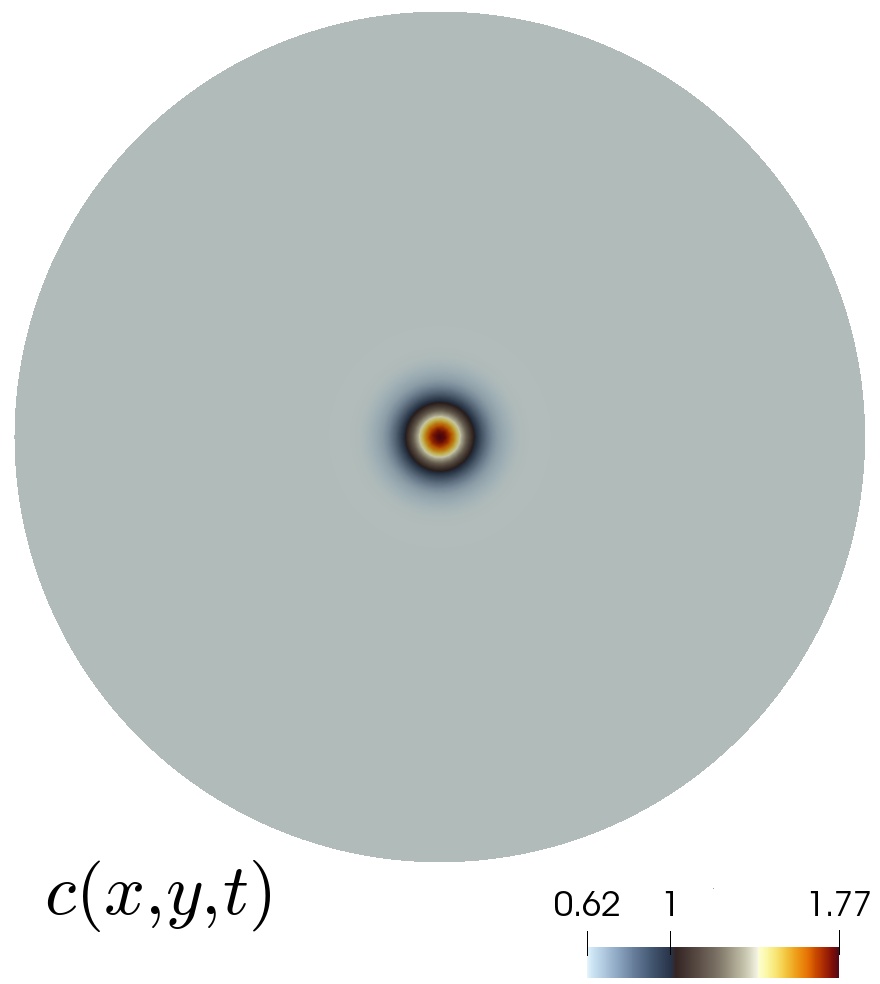}}\\
\subfigure[]{\includegraphics[width=0.24\textwidth]{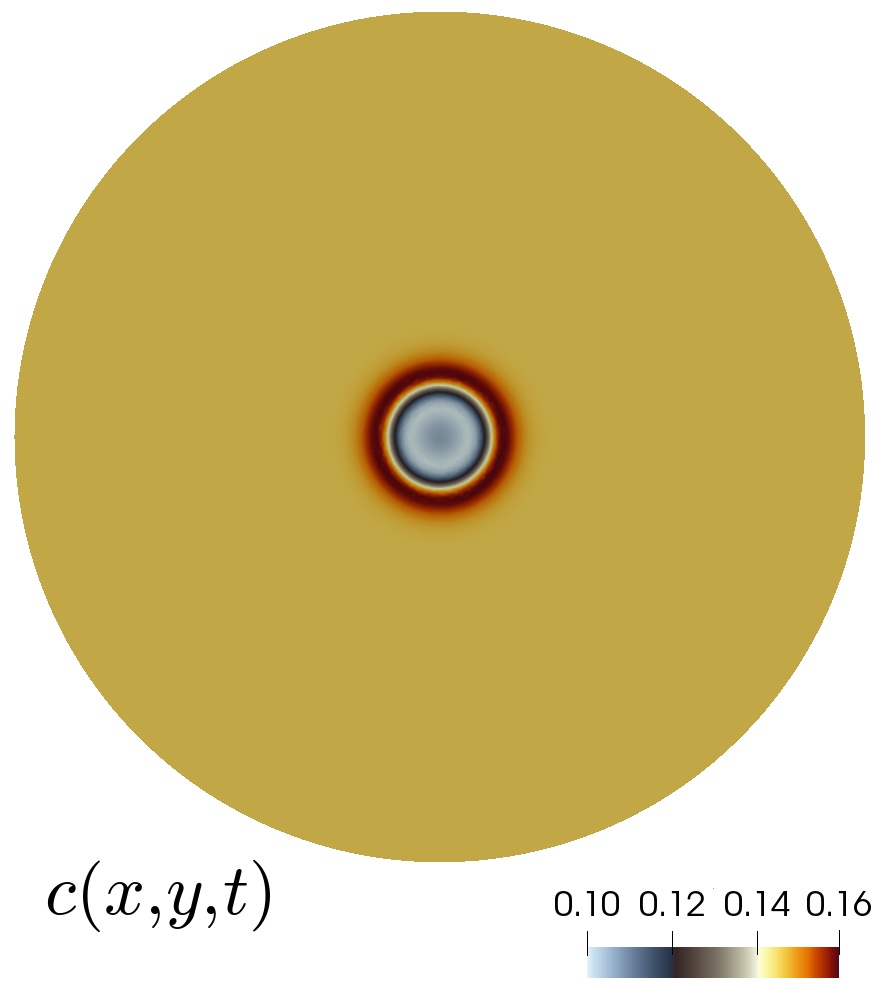}}
\subfigure[]{\includegraphics[width=0.24\textwidth]{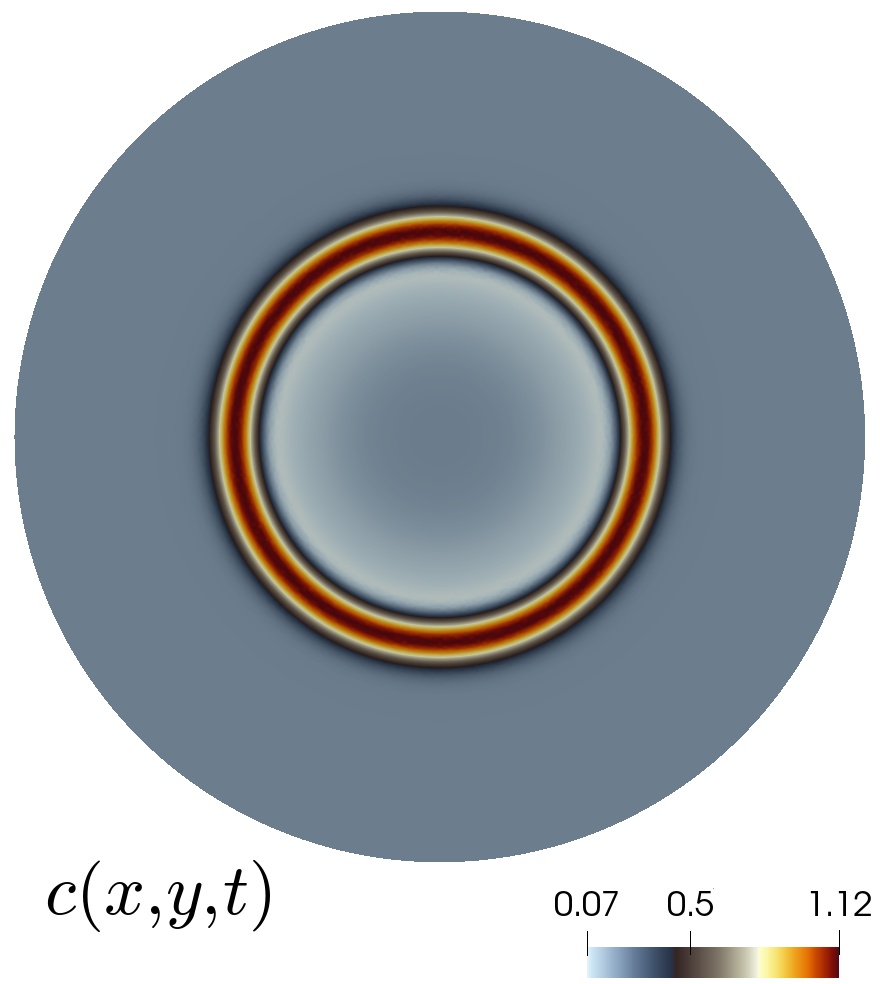}}
\subfigure[]{\includegraphics[width=0.24\textwidth]{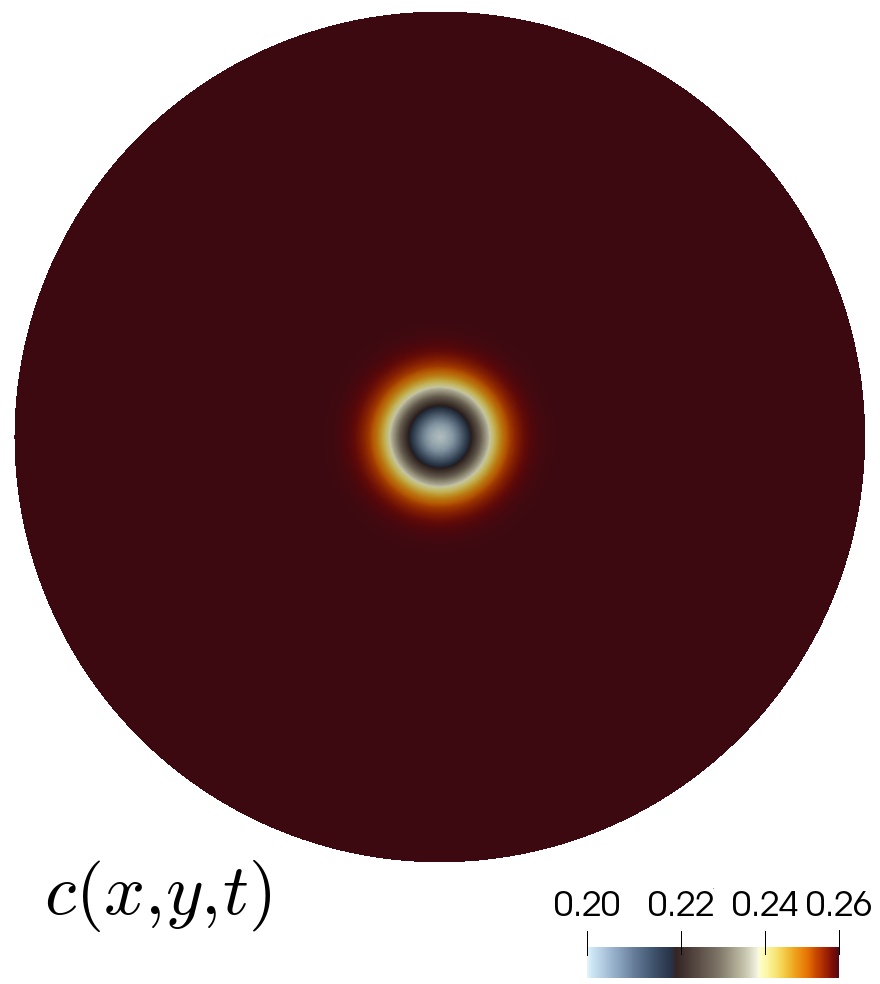}}
\subfigure[]{\includegraphics[width=0.24\textwidth]{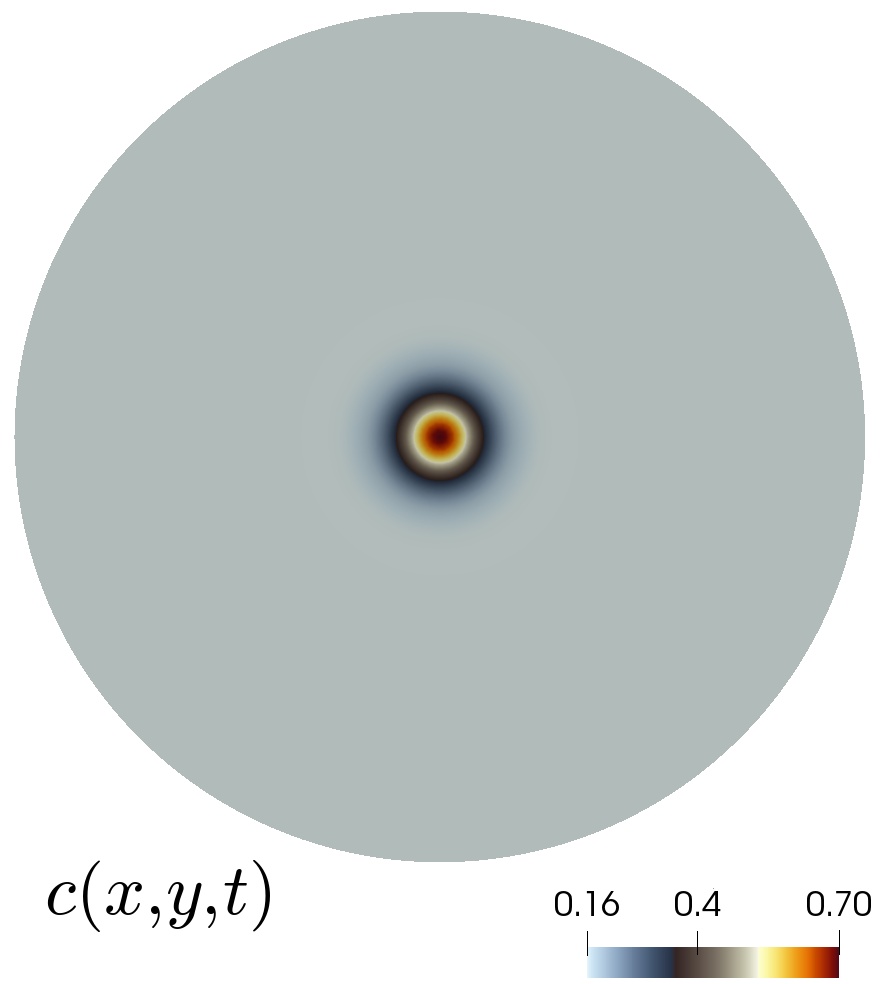}}
\end{center}
 
\caption{Test 1. Transients of calcium starting from an initial spark at the domain centre, using 
$\mu=0.284$ (a,e,i), $\mu=  0.288468$ (b,f,j), $\mu=0.3$ (c,g,k), and $\mu = 0.35$ (d,h,l).  
Snapshots at $t=2,5,10$ (top, middle, and bottom), 
except for panels (e),(i) that correspond to $t=3,4$, after which 
the travelling wave diffuses away. }\label{fig:test01}
\end{figure}

\begin{figure}[!t]
\begin{center}
\subfigure[]{\includegraphics[width=0.32\textwidth]{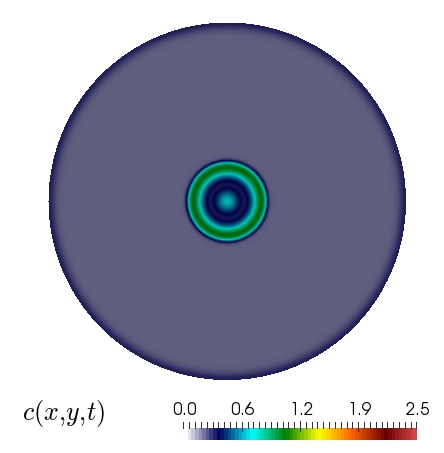}}
\subfigure[]{\includegraphics[width=0.32\textwidth]{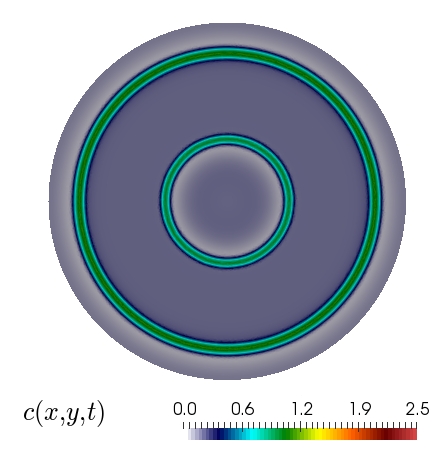}}
\subfigure[]{\includegraphics[width=0.32\textwidth]{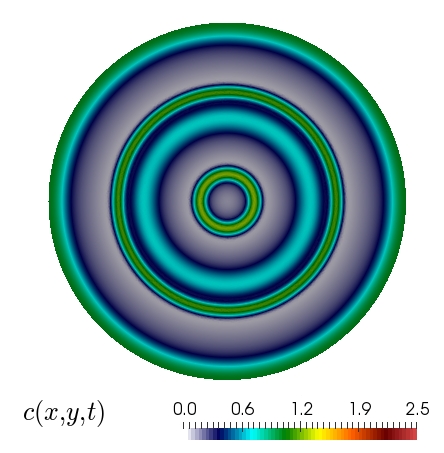}}\\
\subfigure[]{\includegraphics[width=0.32\textwidth]{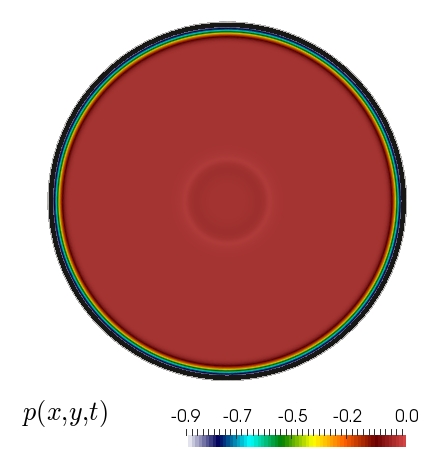}}
\subfigure[]{\includegraphics[width=0.32\textwidth]{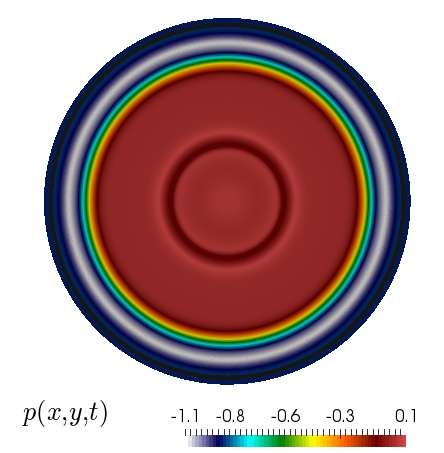}}
\subfigure[]{\includegraphics[width=0.32\textwidth]{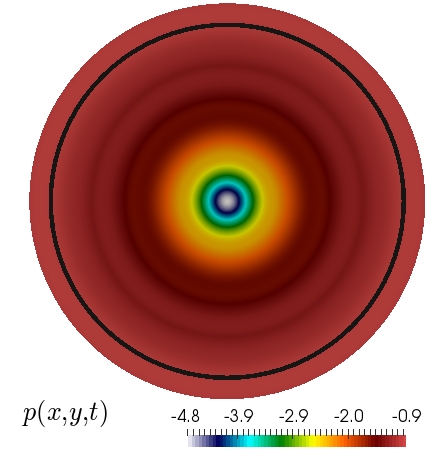}}
\end{center}
 
\caption{Test 2A. Transients of calcium concentration on the undeformed domain and 
solid pressure plotted on the deformed configuration for  
$\mu=  0.288468$ and $\lambda = 0.35$.}\label{fig:test02a}
\end{figure}

Let us consider as  domain the disk centred at $(0,0)$ with radius 2.5. The system  
evolves in time until $t_{\text{final}}=10$. For this case the 
coupling constant $\lambda$ is taken simply as zero. We observe that a single calcium pulse 
propagates from the centre of the domain towards the 
boundary. The amount of IPR in the system also has an influence on the spatio-temporal 
patterns of calcium. Examples of concentrations at three different times are depicted in Figure~\ref{fig:test01}, 
using four different values of IPR concentration, $\mu=0.284,0.288468,0.3,0.35$, respectively. For the first  value one 
expects travelling waves to eventually vanish, before reaching the boundary of the domain.

\subsection{Test 2. Effects of mechanochemical coupling} 
Setting now a different value of the coupling constant $\lambda = 0.35$ results in the behaviour 
illustrated in Figure~\ref{fig:test02a}. No other parameters have been changed and we can 
see significant effects from the mechanochemical coupling. In particular, we observe 
that additional secondary waves are initiated from the boundary (which is the part of the domain 
where most pronounced deformation and dilation occurs), and these are sustained over a longer 
time, suggesting a long-range contraction response.

\begin{figure}[htbp!]
\begin{center}
\subfigure[]{\includegraphics[width=0.24\textwidth]{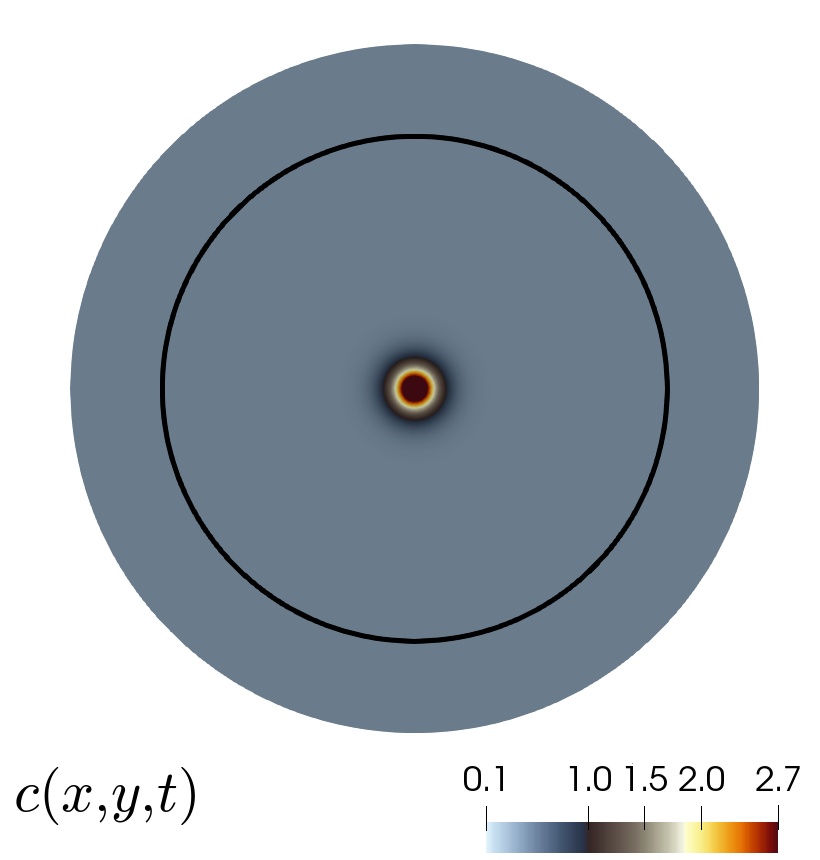}}
\subfigure[]{\includegraphics[width=0.24\textwidth]{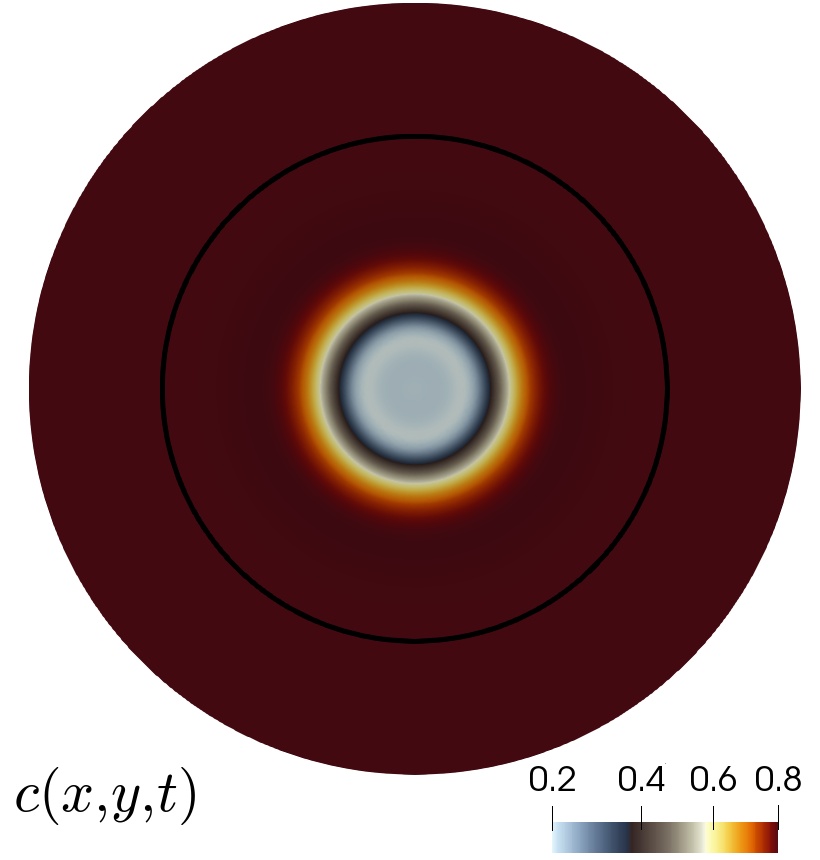}}
\subfigure[]{\includegraphics[width=0.24\textwidth]{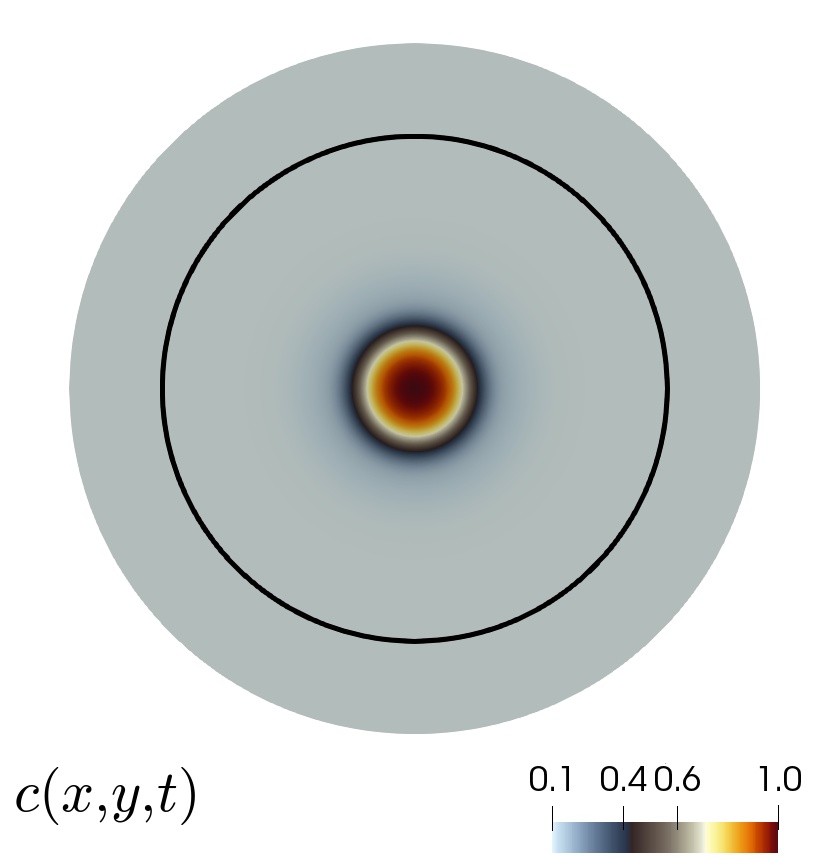}}
\subfigure[]{\includegraphics[width=0.24\textwidth]{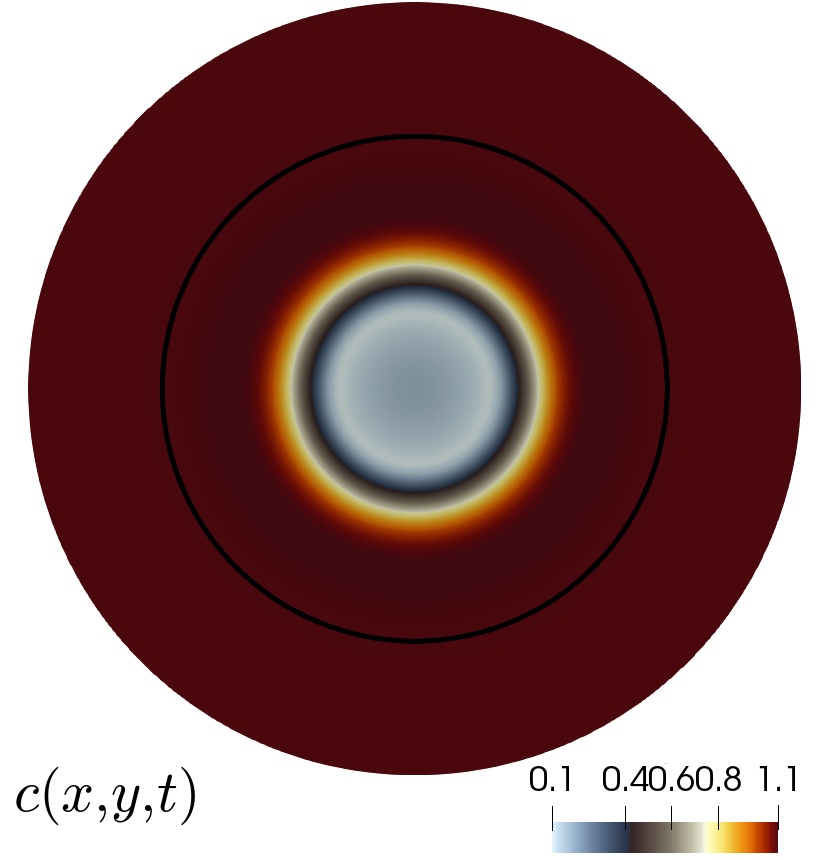}}\\
\subfigure[]{\includegraphics[width=0.24\textwidth]{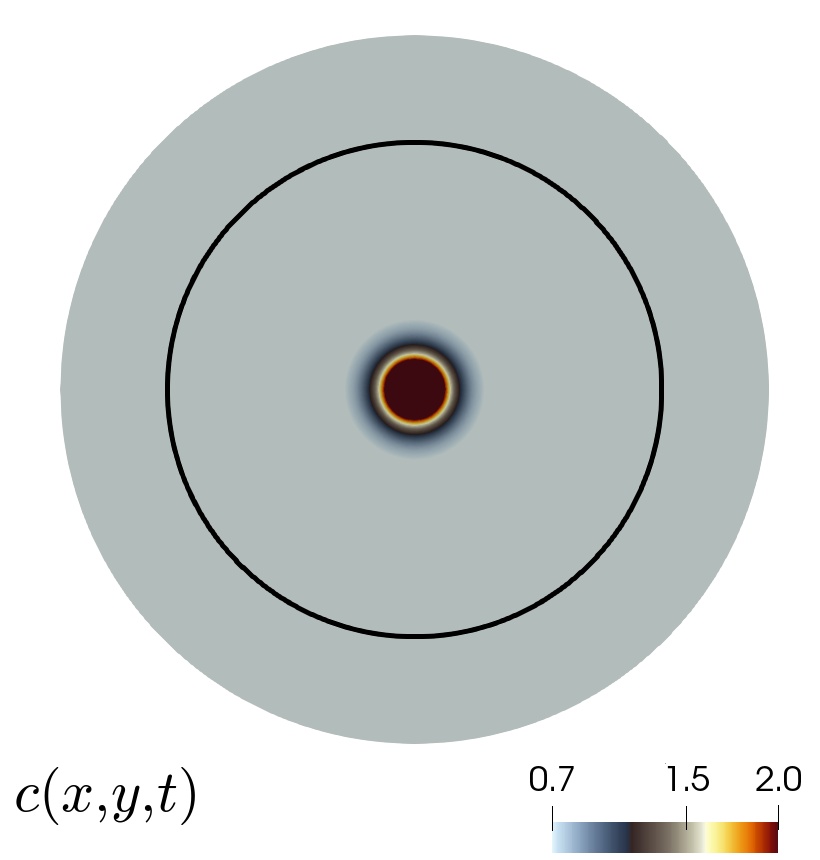}}
\subfigure[]{\includegraphics[width=0.24\textwidth]{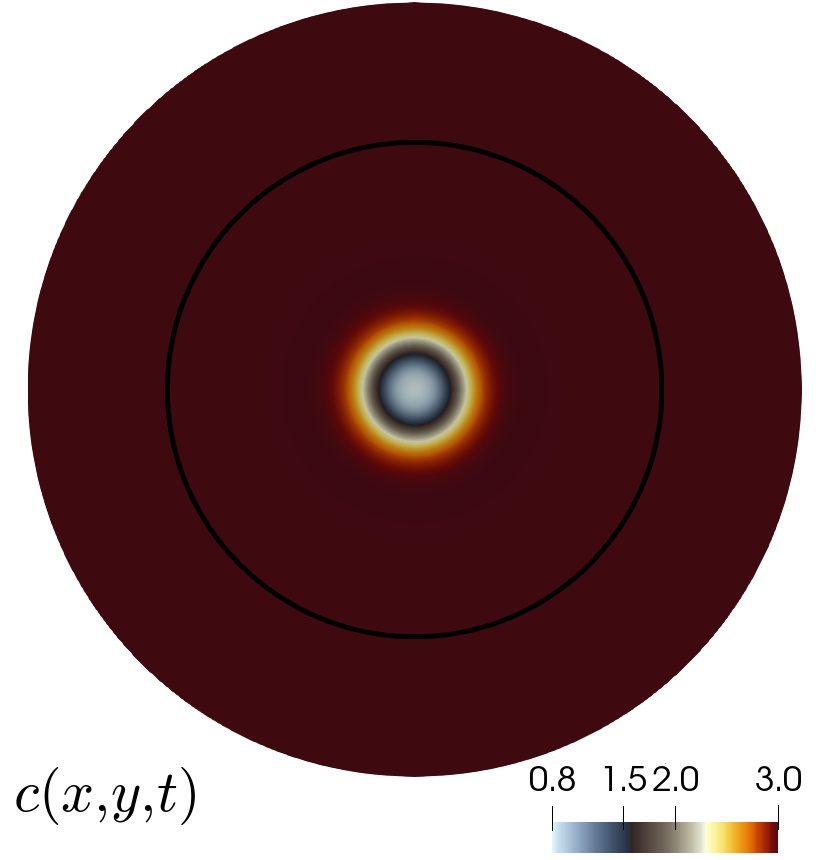}}
\subfigure[]{\includegraphics[width=0.24\textwidth]{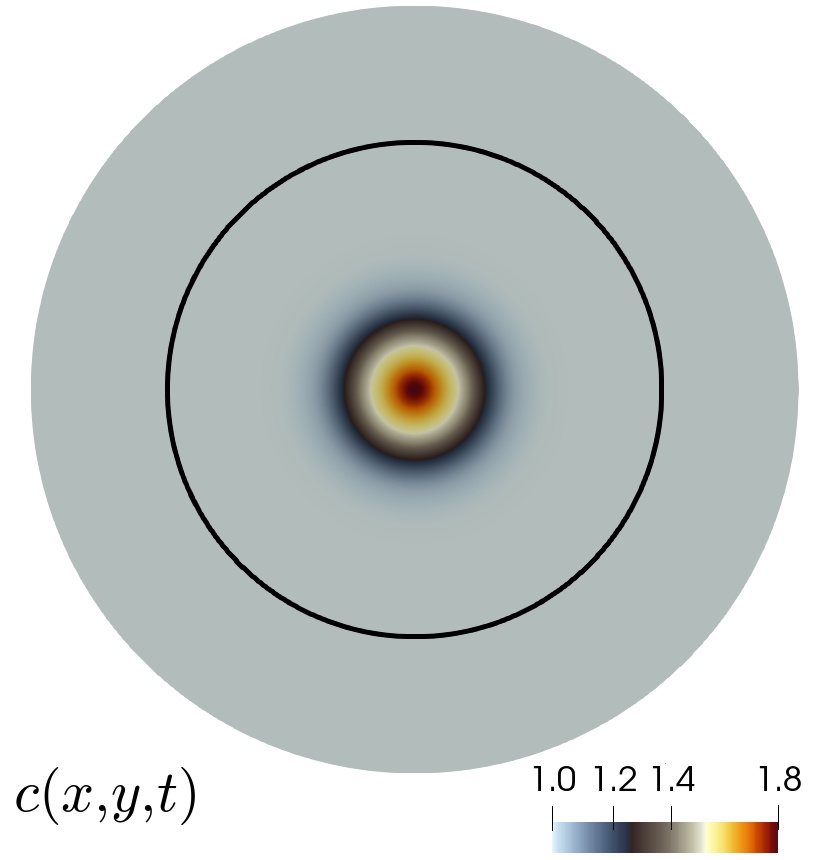}}
\subfigure[]{\includegraphics[width=0.24\textwidth]{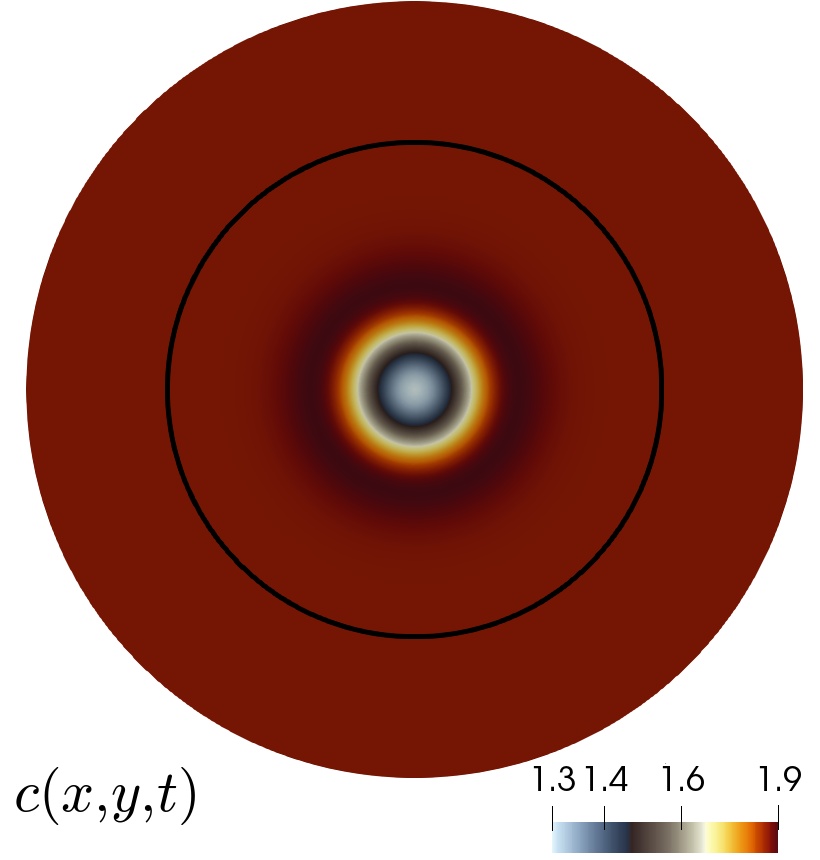}}\\
\subfigure[]{\includegraphics[width=0.24\textwidth]{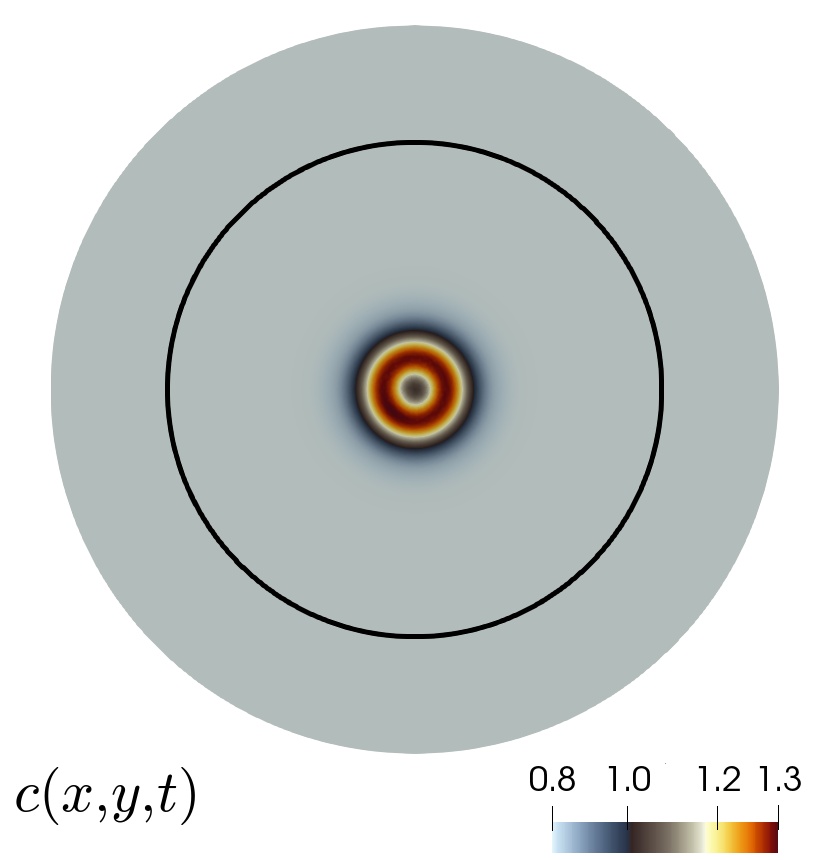}}
\subfigure[]{\includegraphics[width=0.24\textwidth]{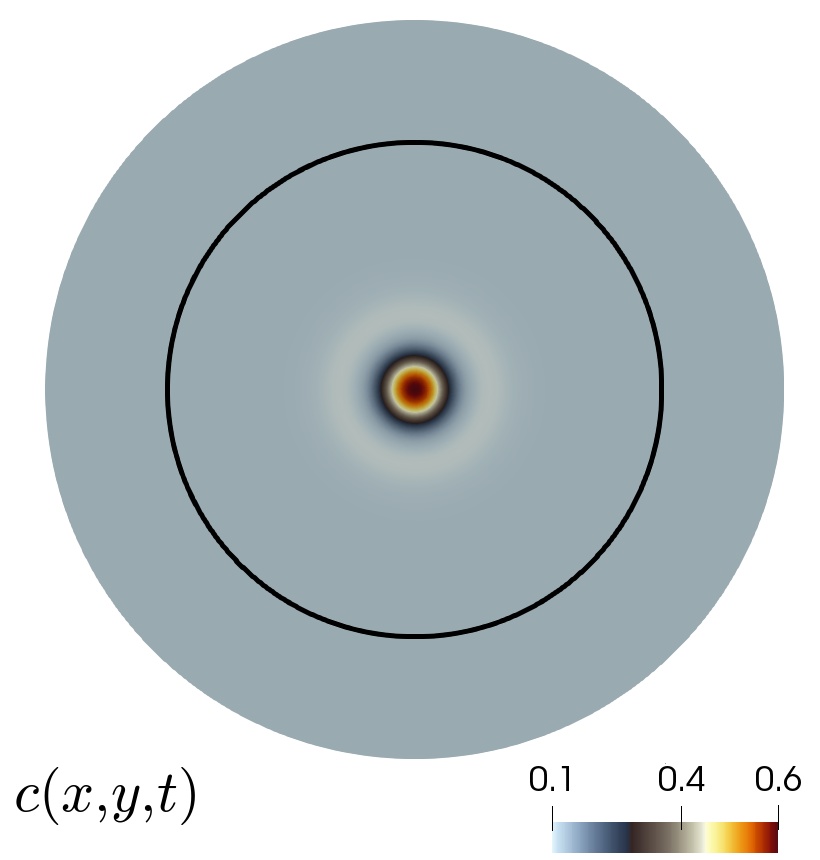}}
\subfigure[]{\includegraphics[width=0.24\textwidth]{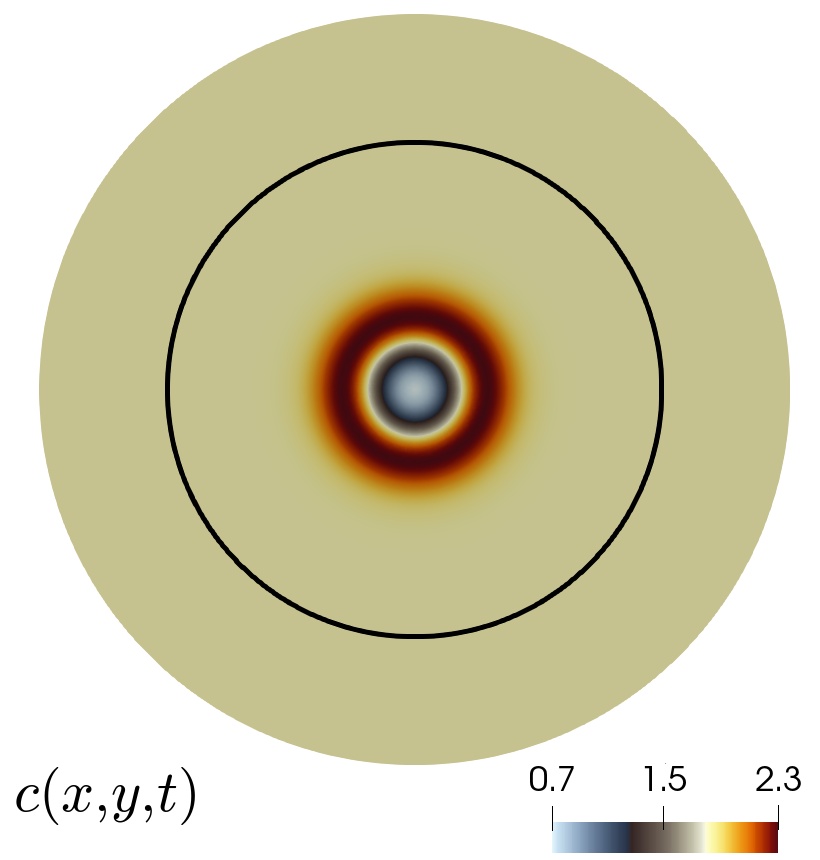}}
\subfigure[]{\includegraphics[width=0.24\textwidth]{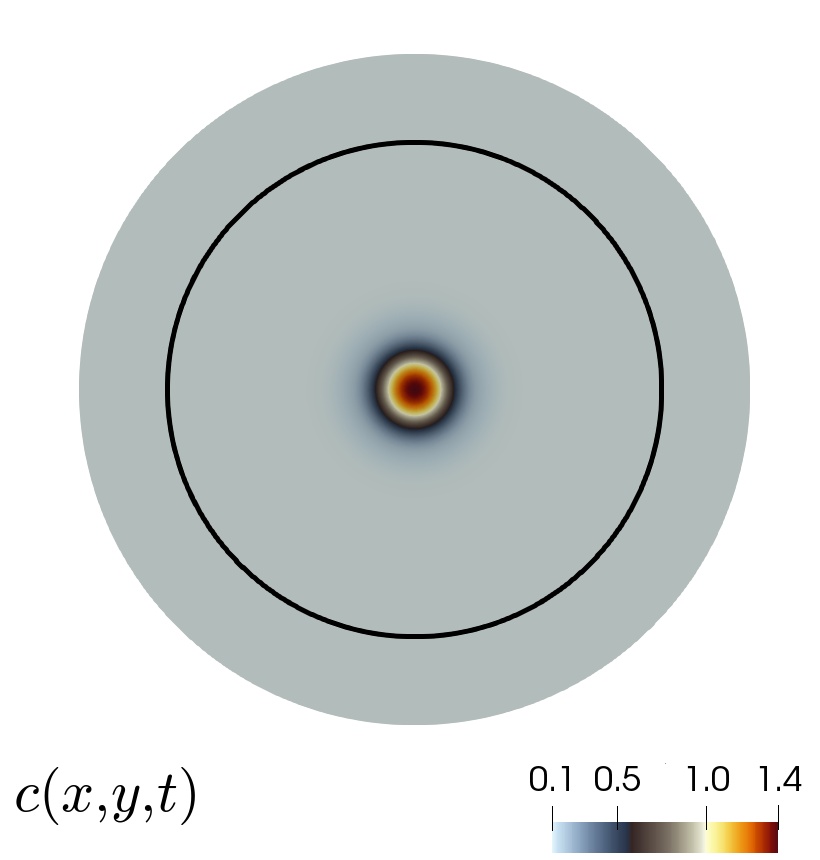}}\\
\subfigure[]{\includegraphics[width=0.24\textwidth]{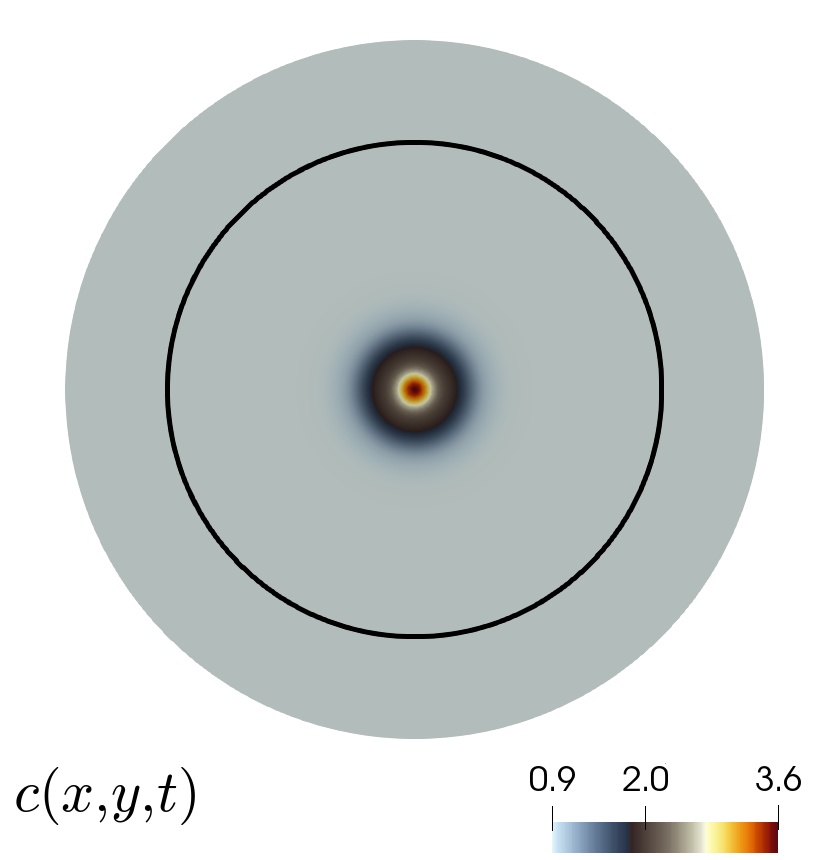}}
\subfigure[]{\includegraphics[width=0.24\textwidth]{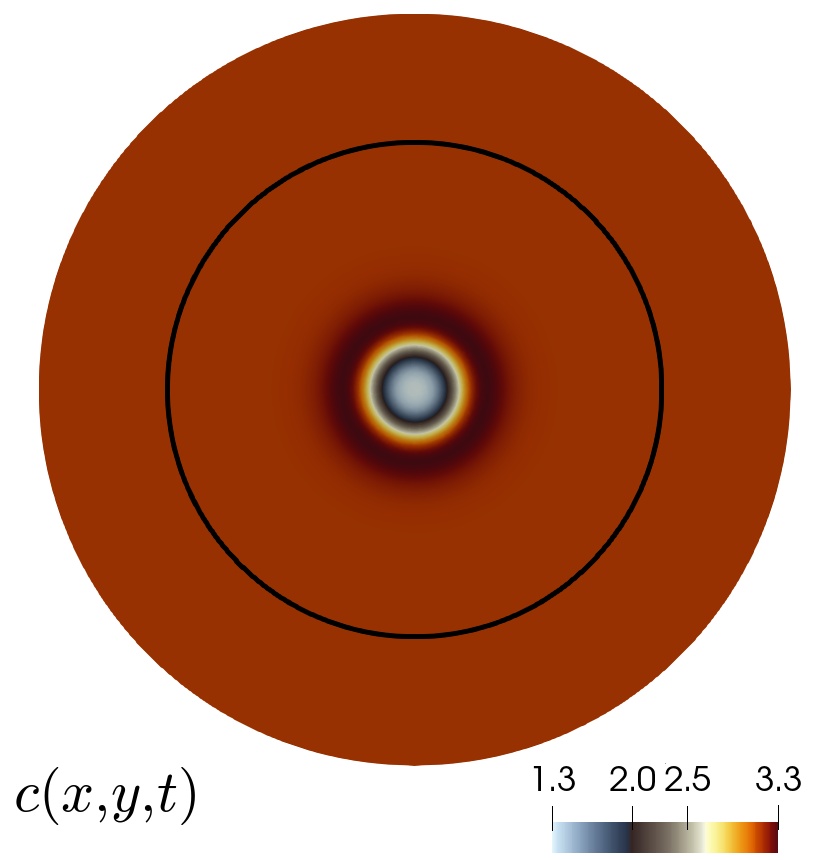}}
\subfigure[]{\includegraphics[width=0.24\textwidth]{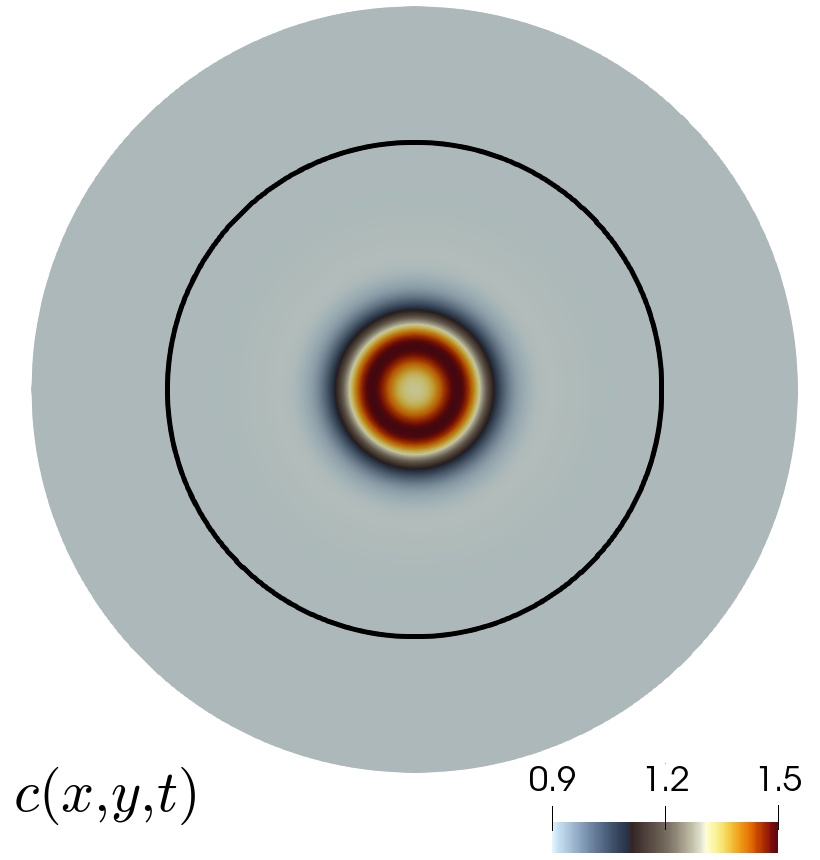}}
\subfigure[]{\includegraphics[width=0.24\textwidth]{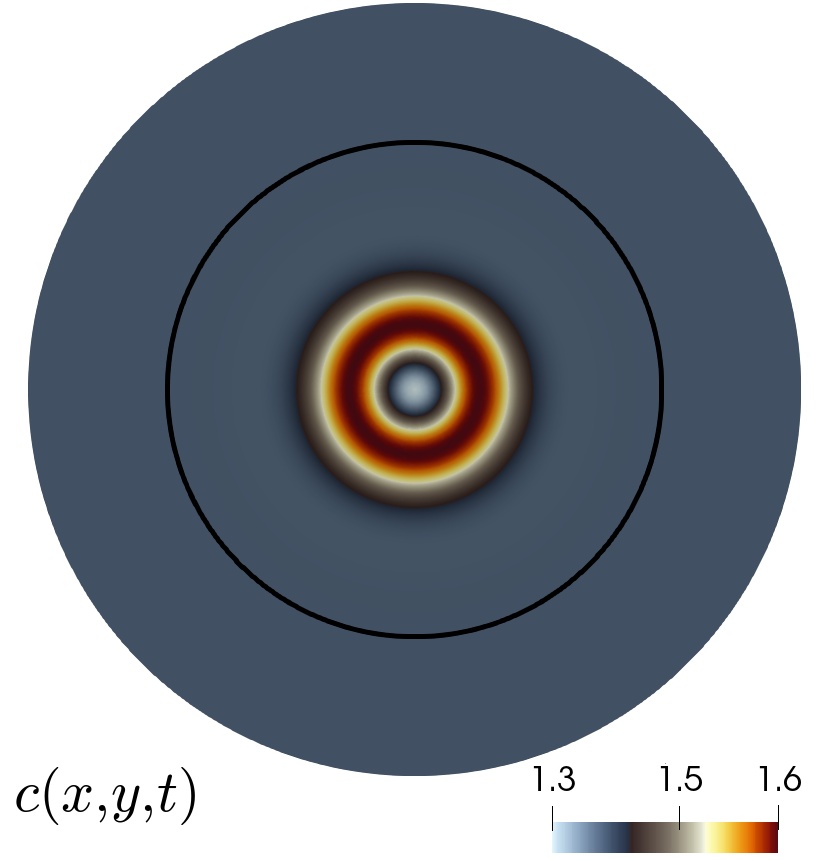}}
\end{center}
 
\caption{Test 2B. Transients of calcium concentration on the deformed domain. 
Case I (a,b,c,d), Case IV (e,f,g,h), Case VI (i,j,k,l), Case VIII (m,n,o,p). 
The black circle depicts the boundary of the undeformed domain.}\label{fig:test02b}
\end{figure}

We also select some other parameter regimes that, according to the 1D theory, would produce qualitatively interesting 
behaviour  such as oscillatory calcium transients. In connection with Test 1 above, note that in the uncoupled scenario of $\lambda = 0$, 
the cases $\mu = 0.288468$  and $\mu = 0.3$ generate a solitary wave and  a periodic wave train, respectively. On the other hand, if we fix $\mu = 0.288468$ and vary $\lambda$ we distinguish between 
the following cases: 
 Case I: $\lambda = 0.1$ and an equilibrium calcium state predicted by the 1D theory 
of $c_s = 0.4850$ produces a periodic wavetrain. 
 Case II: Setting $\lambda = 0.5$ and  $c_s = 0.6824$ gives also a periodic wavetrain with a slower period (of approximately 2.5 nondimensional time units). 
Case III: With 
$\lambda = 1.3$ and  $c_s = 0.9577$ we also find a periodic wavetrain that gradually decays, but where also the resting value elsewhere on the domain increases with time. 
 Case IV: With the largest value for the coupling parameter $\lambda =  2$ and $c_s = 1.19817$ we see that the periodic wavetrain has a similar period as before but it decays much faster. 

For these rounds it suffices to take a timestep of $\Delta t = 0.2$. If we set now a larger $\mu = 0.3$ we see that a smaller $\Delta t = 0.1$ is needed to 
capture the faster domain velocity. The values $\lambda = 0.1,c_s = 0.6034$ (Case V) also produce a periodic wavetrain, whereas increasing $\lambda$ further will still show wavetrains but decaying faster and faster (Cases VI, VII, VIII). For $\lambda=2,c_s = 1.2506$ (case VIII), we can also use a larger timestep $\Delta t = 0.2$. 
A few samples of these findings are shown in Figure~\ref{fig:test02b}.

\subsection{Test 3. Including calcium sparks} 
Assuming that calcium sparks appear on the continuum randomly, as experiments show, 
we can study their influence 
on the calcium propagation through the modification of \eqref{eq:c} as follows 
\begin{equation}\label{eq:c-modif}
  \partial_t c + \partial_t \bu \cdot \nabla c -  
	\vdiv\{ D^\star \nabla c \} 
= F(c,n,\bu) + I_f(\bx,t),
\end{equation}
where $F(\cdot,\cdot,\cdot)$ is a generic reaction term specified by the sought model (e.g. \cite{atri93}), and $I_f$ encodes 
a collection of flashes of different amplitudes, having a uniform random distribution. From a phenomenological 
perspective, if the impulses are applied sufficiently close to each other, a simple superposition phenomenon 
explains why the intensity of individual local pulses contributes to form a synchronous 
wave propagating front. These impulses are implemented as point sources (Dirac delta functions on a linear 
variational form) localised at randomly distributed points in $\Omega$, with amplitude equal to the constant 
calcium equilibrium found for each parameter configuration, and switched on at a given frequency. 
They are projected onto the appropriate finite element space (the one used for calcium approximation) and 
applied after assembling of the residual that constitutes the right-hand side of the tangent system \eqref{eq:jac}, 
at each Newton iteration.

\begin{figure}[!t]
\begin{center}
\subfigure[]{\includegraphics[width=0.24\textwidth]{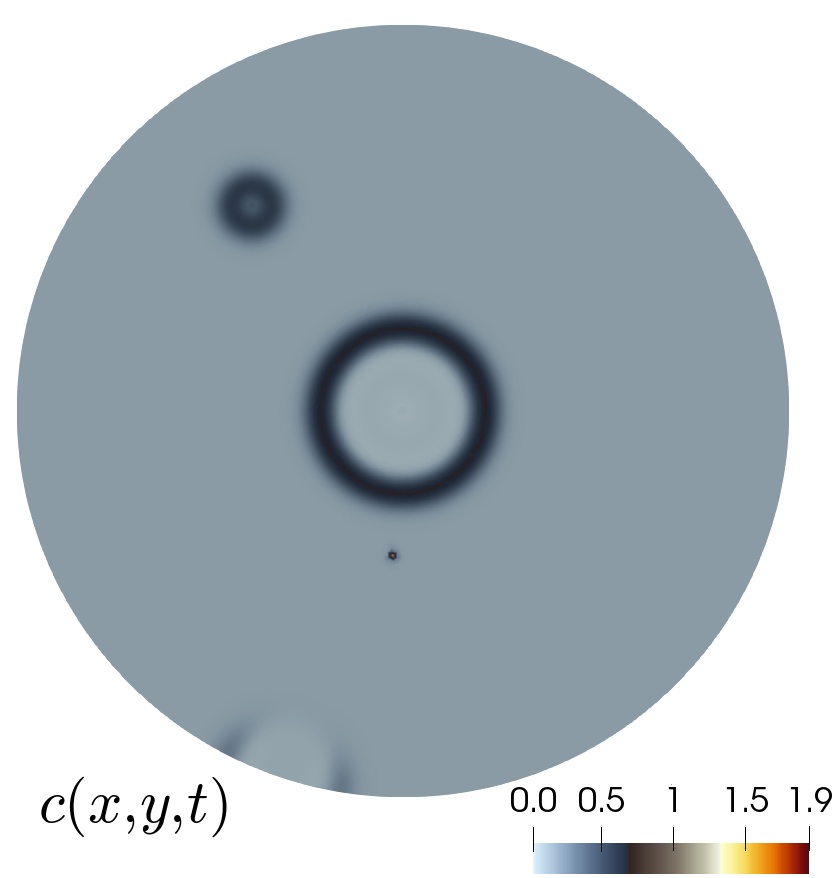}}
\subfigure[]{\includegraphics[width=0.24\textwidth]{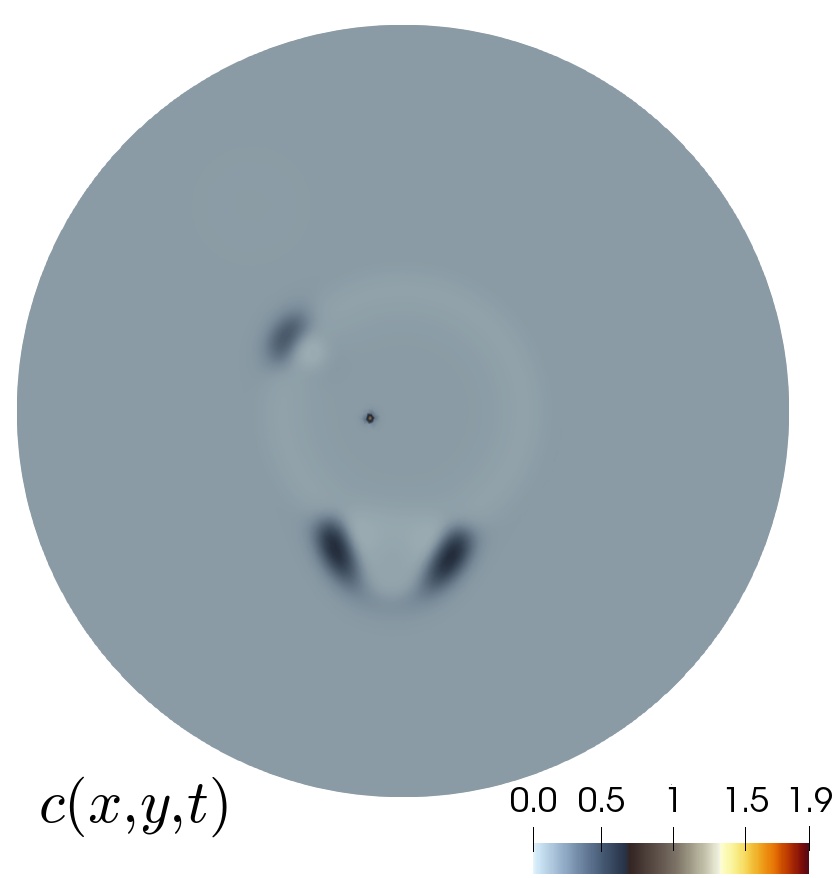}}
\subfigure[]{\includegraphics[width=0.24\textwidth]{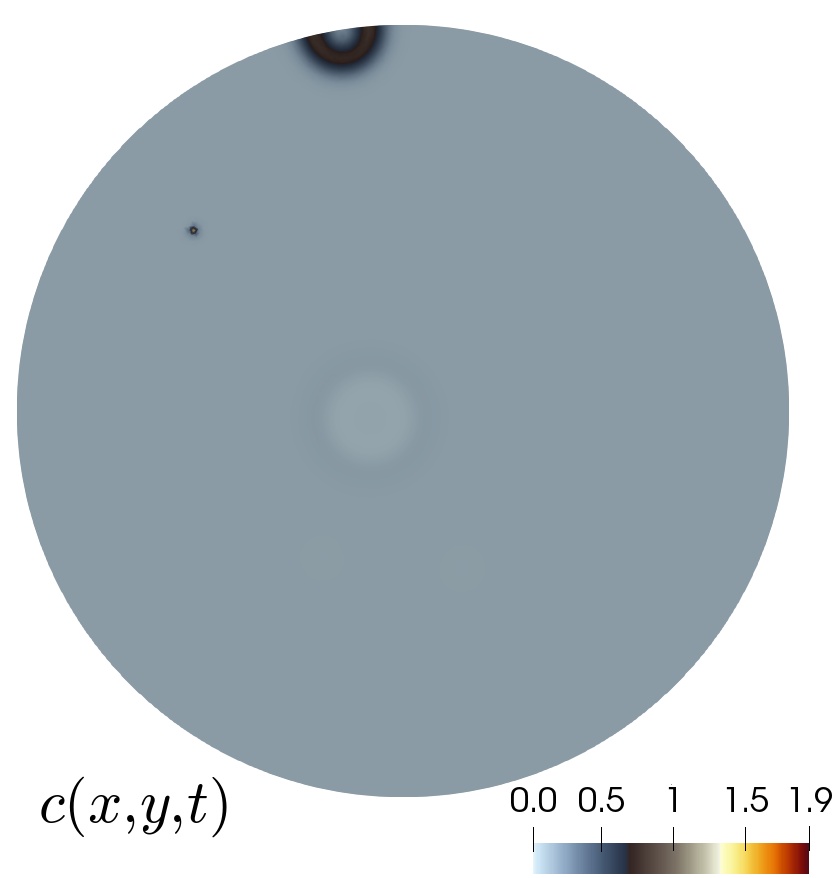}}
\subfigure[]{\includegraphics[width=0.24\textwidth]{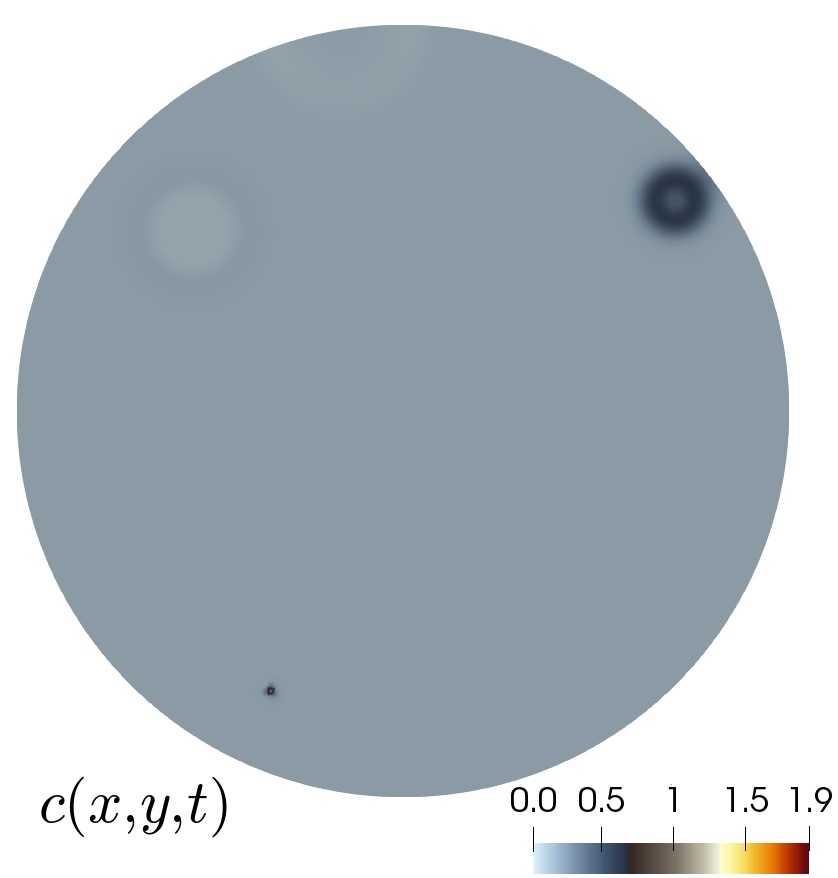}}\\
\subfigure[]{\includegraphics[width=0.24\textwidth]{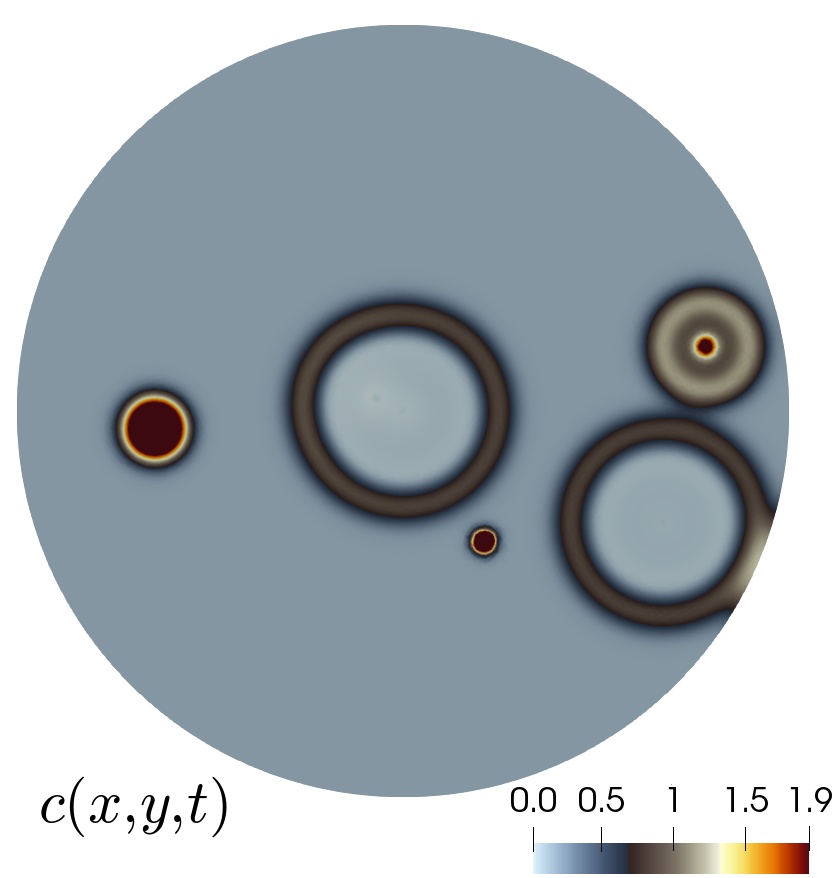}}
\subfigure[]{\includegraphics[width=0.24\textwidth]{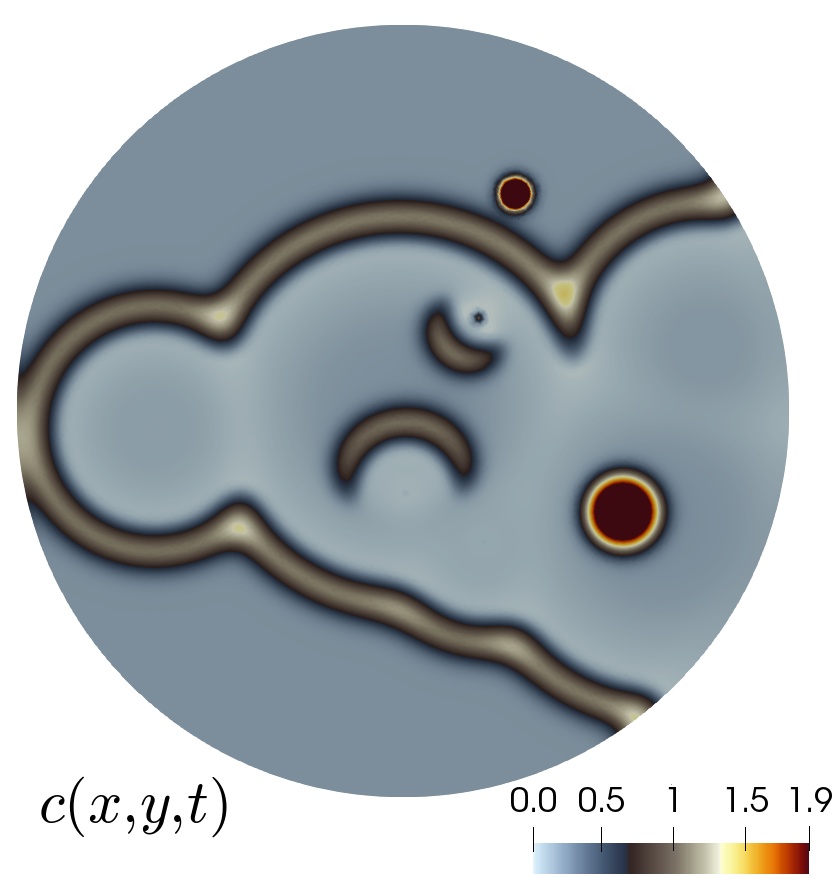}}
\subfigure[]{\includegraphics[width=0.24\textwidth]{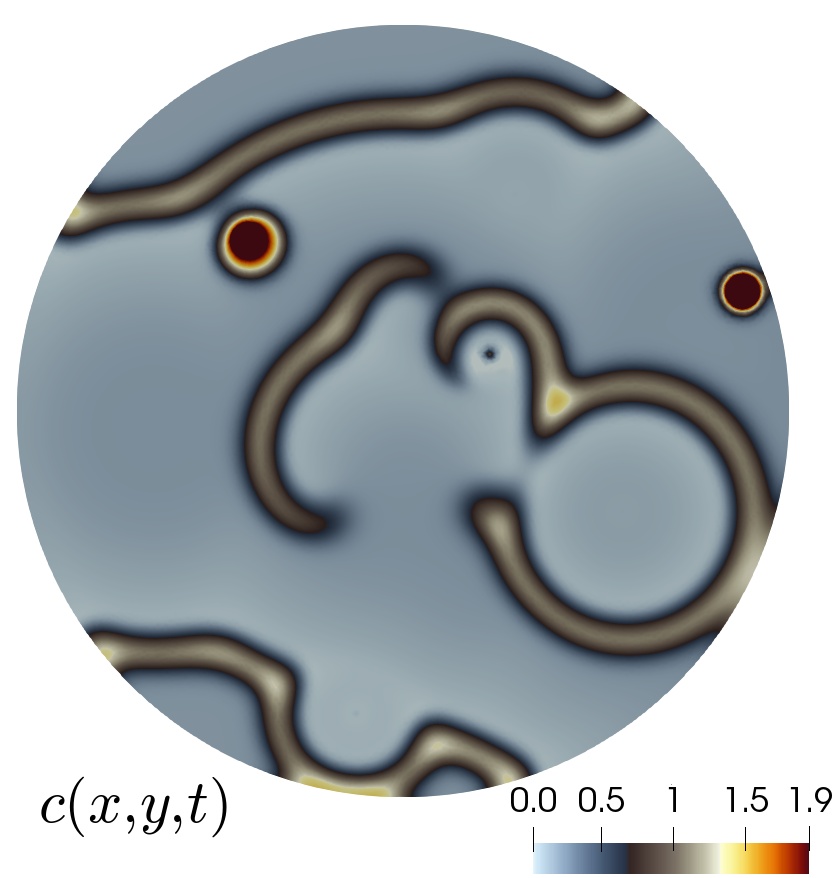}}
\subfigure[]{\includegraphics[width=0.24\textwidth]{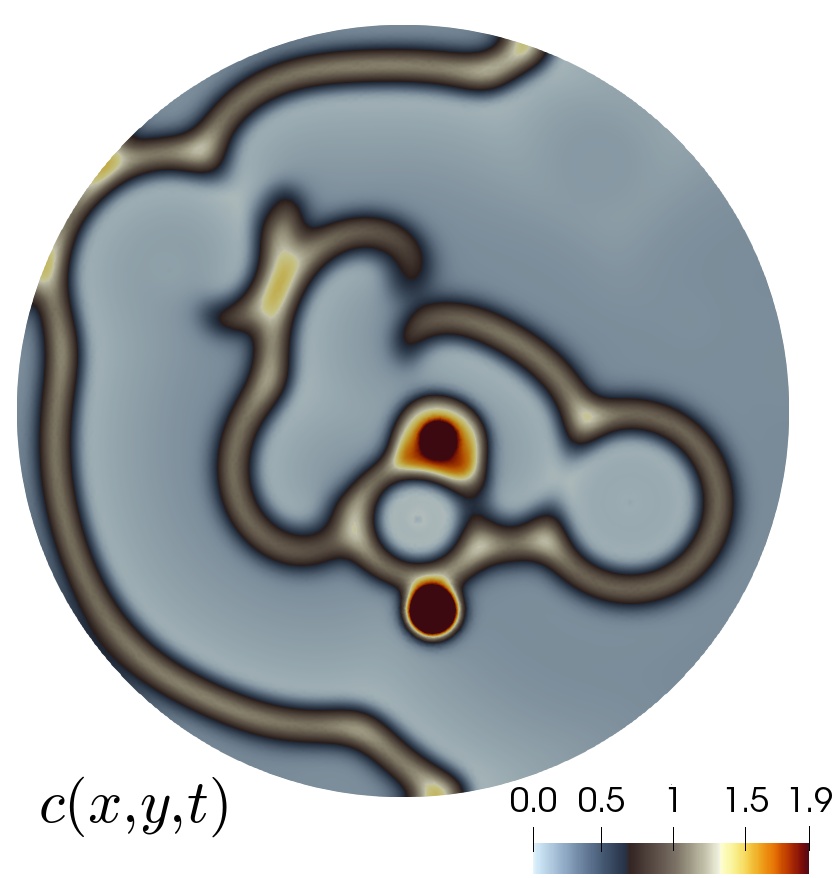}}
\end{center}
 
\caption{Test 3A. Transients of calcium concentration when sparks are initiated 
randomly in space and time with low (a,b,c) and high (d,e,f) frequency.}\label{fig:test03sparks}
\end{figure}

In the case of $\lambda = 0$, although the advection effect is still turned on, the calcium transients do not 
exhibit substantial differences with respect to their counterparts on a fixed domain (that is, when also the 
calcium-dependent contraction is turned off, with $ \beta_2 = 0$). Also, the structure of the waves (when regarded 
locally) does not change with respect to the observed behaviour in the case of a single calcium stimulus (meaning 
that the nature of 
solitary waves and 
periodic wavetrains also occurs in the 
case of spark phenomena). However, the nucleation does plays a role, as anticipated above. 
We show in Figure~\ref{fig:test03sparks} 
snapshots of the calcium concentration at four time instants and compare the effect of increasing the frequency of the 
applied stimuli, and see that isolated point sources initiate a propagating wave but eventually 
decay when the frequency is low enough (this is the expected trend observed also in the one-dimensional model from \cite{kaouri19}), while as soon as the frequency is increased from one each 20 time steps to one each 5 time steps, the solitary waves form synchronous wavefronts that are sustained in time.

\begin{figure}[!t]
\begin{center}
\subfigure[]{\includegraphics[width=0.24\textwidth]{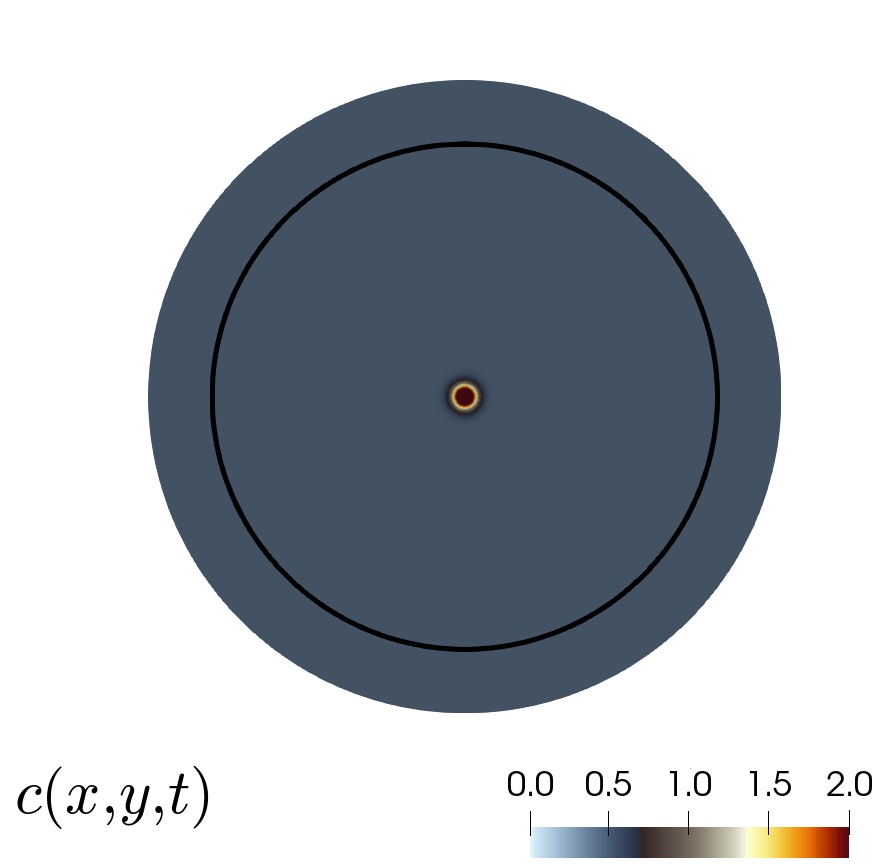}}
\subfigure[]{\includegraphics[width=0.24\textwidth]{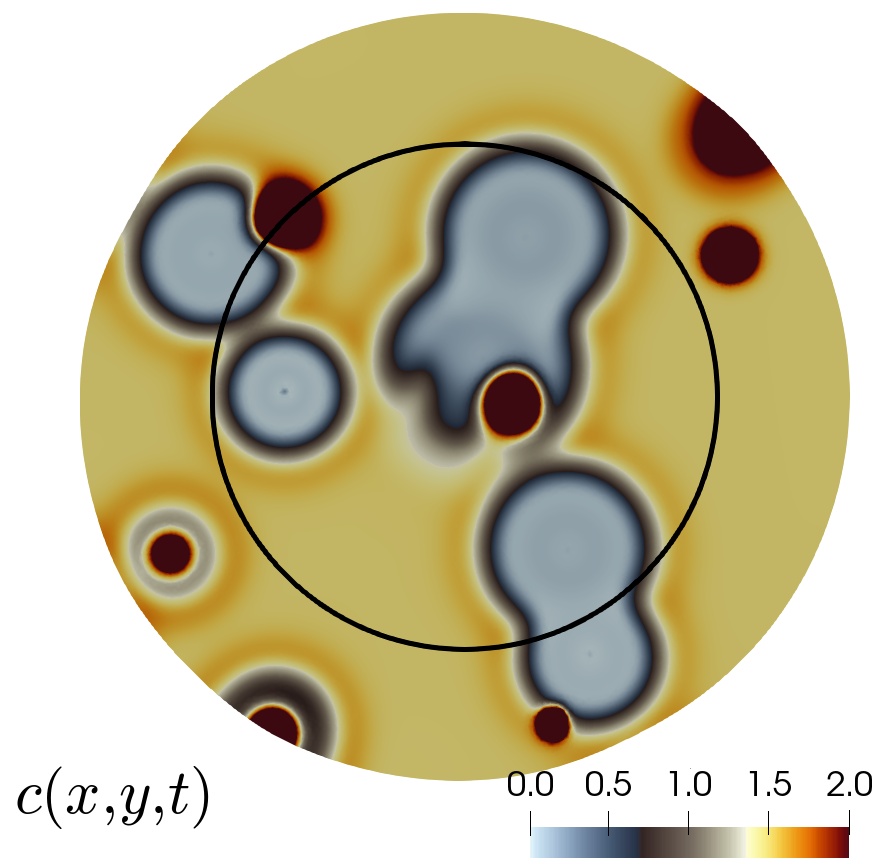}}
\subfigure[]{\includegraphics[width=0.24\textwidth]{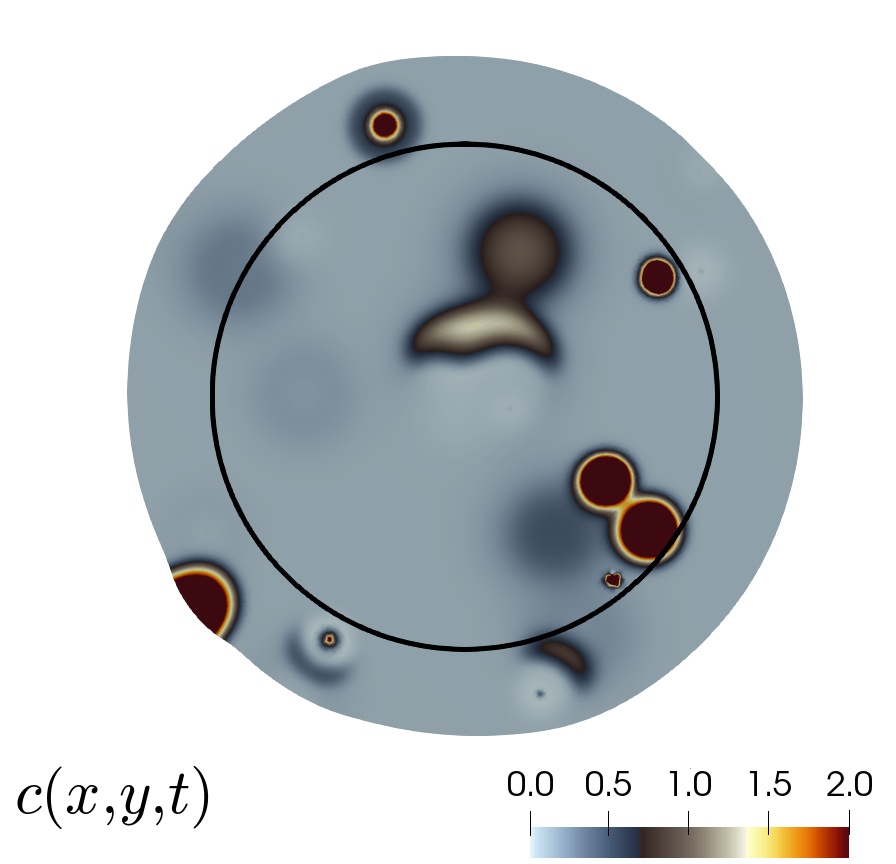}}
\subfigure[]{\includegraphics[width=0.24\textwidth]{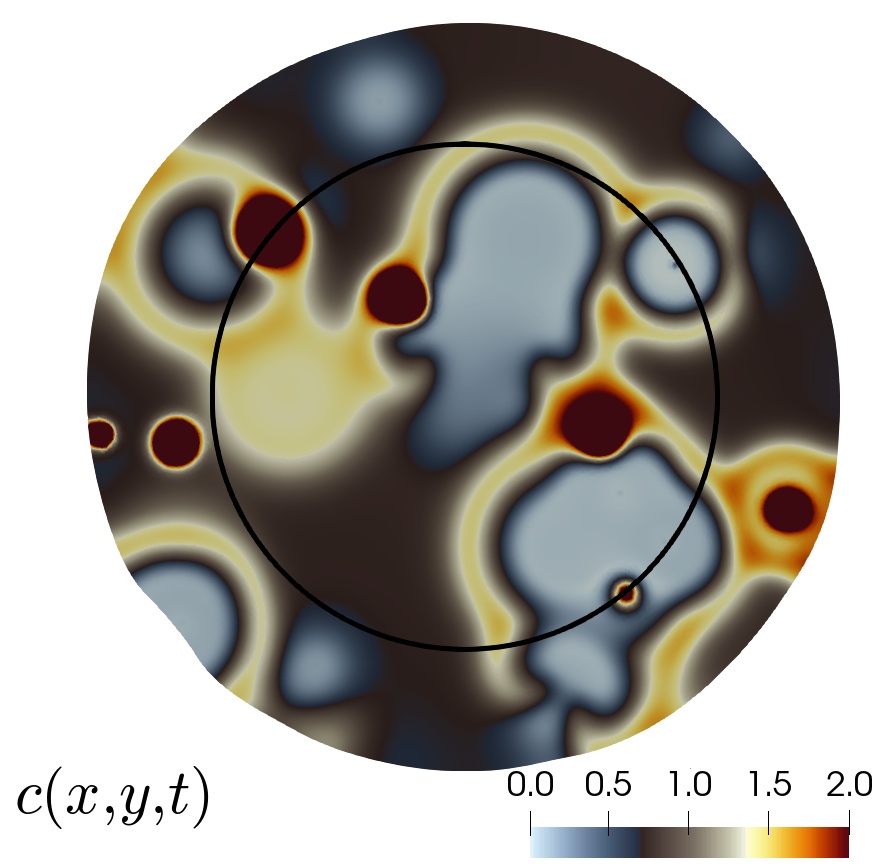}}\\
\subfigure[]{\includegraphics[width=0.24\textwidth]{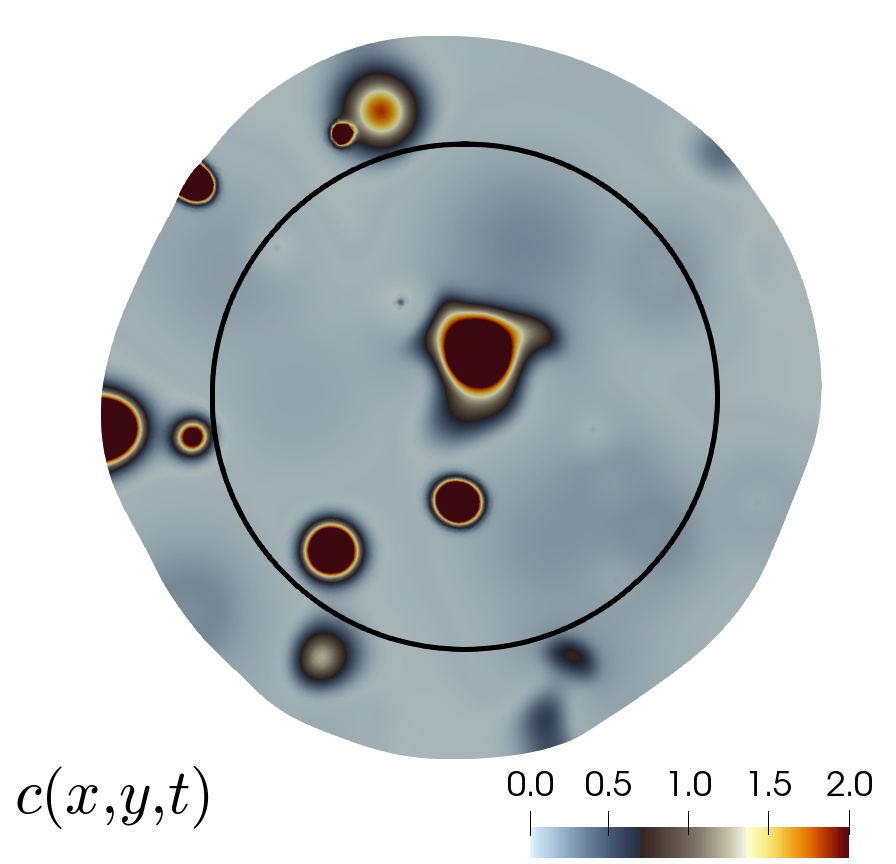}}
\subfigure[]{\includegraphics[width=0.24\textwidth]{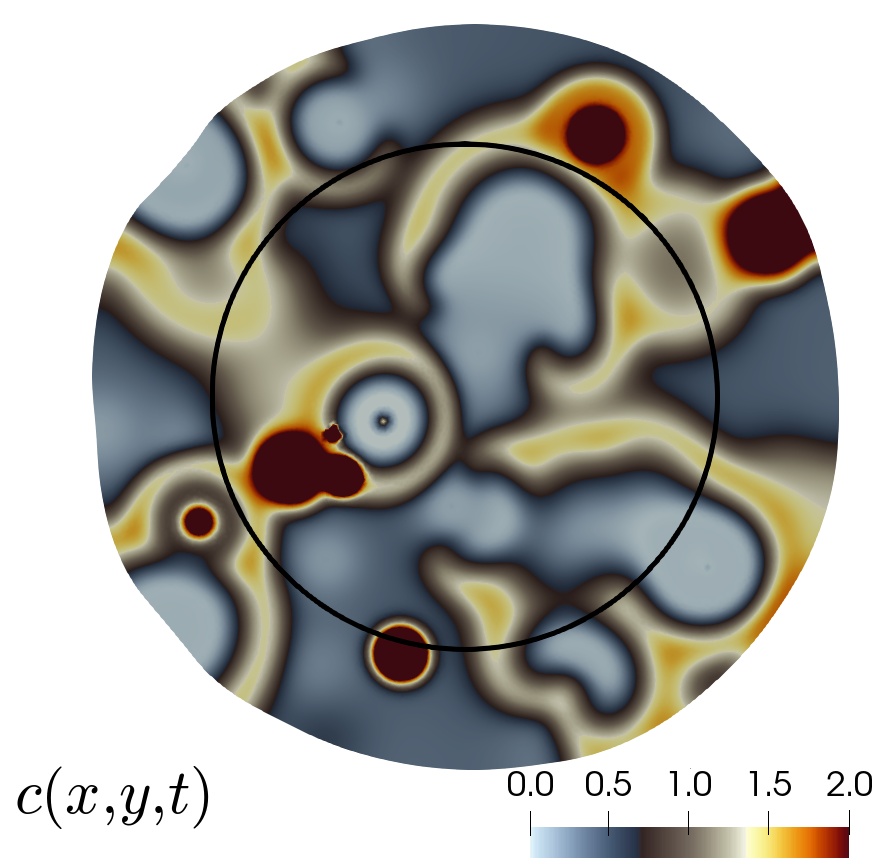}}
\subfigure[]{\includegraphics[width=0.24\textwidth]{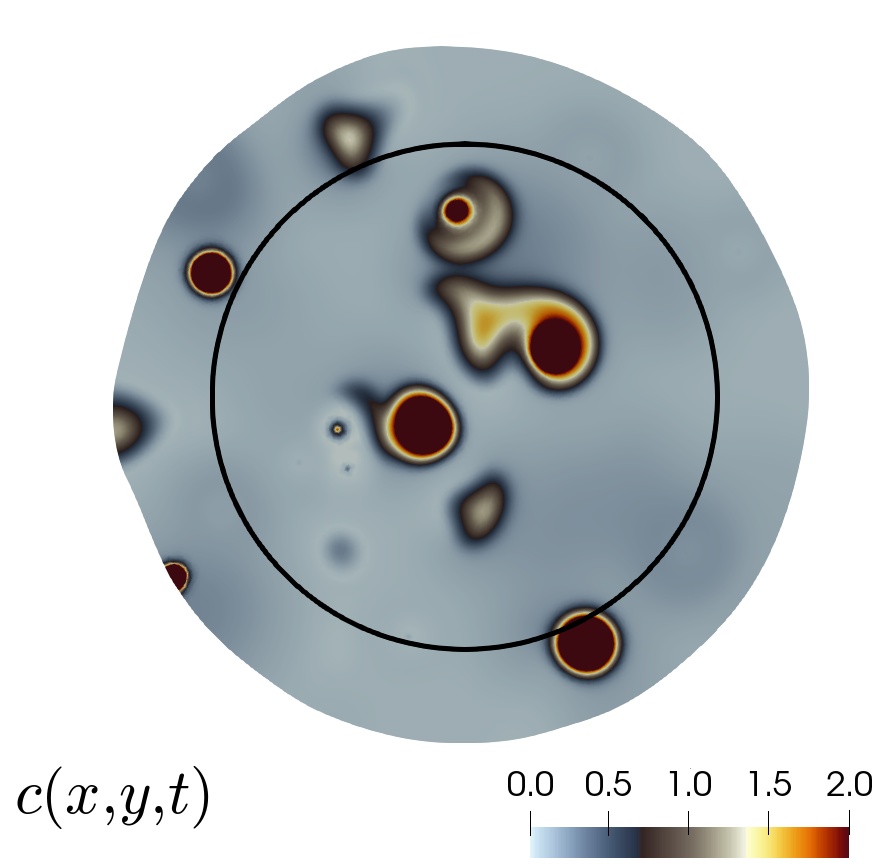}}
\subfigure[]{\includegraphics[width=0.24\textwidth]{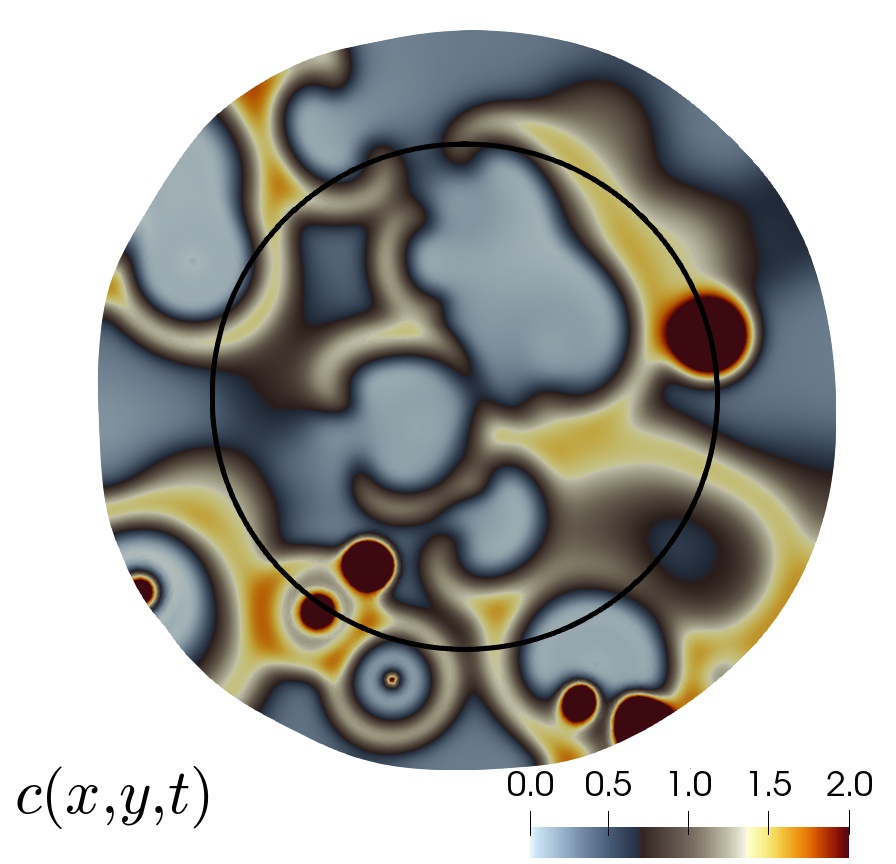}}
\end{center}

\caption{Test 3B. Transients of calcium concentration when sparks are initiated 
randomly in space and time, and being plotted on the deformed spatial domain, where the 
inner circle represents the boundary of the undeformed domain. Simulations obtained with 
$\mu=  0.288468$, $\lambda=0.1$, $c_s = 0.485044$.}\label{fig:test03Bdeform}
\end{figure}

Then we simulate for larger values of $\lambda$ and examine the differences in body deformation and calcium distribution. 
Samples of these computations are presented in Figure~\ref{fig:test03Bdeform}. 
We see that even with lower frequencies of calcium sparks (here taking one each 10 time steps) 
one is able to generate propagating fronts that in turn 
induce a periodic (but not uniform) dilation of the cell. In addition, if we regard the domain as a multi-cell aggregate instead as of 
a single cell, the short-time contractions and sparks could represent single-cell contractions whereas longer-term contractions and propagating fronts would account for synchronous cell movement and rearrangement and collective, larger inter-cellular calcium wavetrains. 
It is also clear that the deformations are no longer radial, which shows the benefit of incorporating 
multidimensional models.

\begin{figure}[!t]
\begin{center}
\subfigure[]{\includegraphics[width=0.4\textwidth]{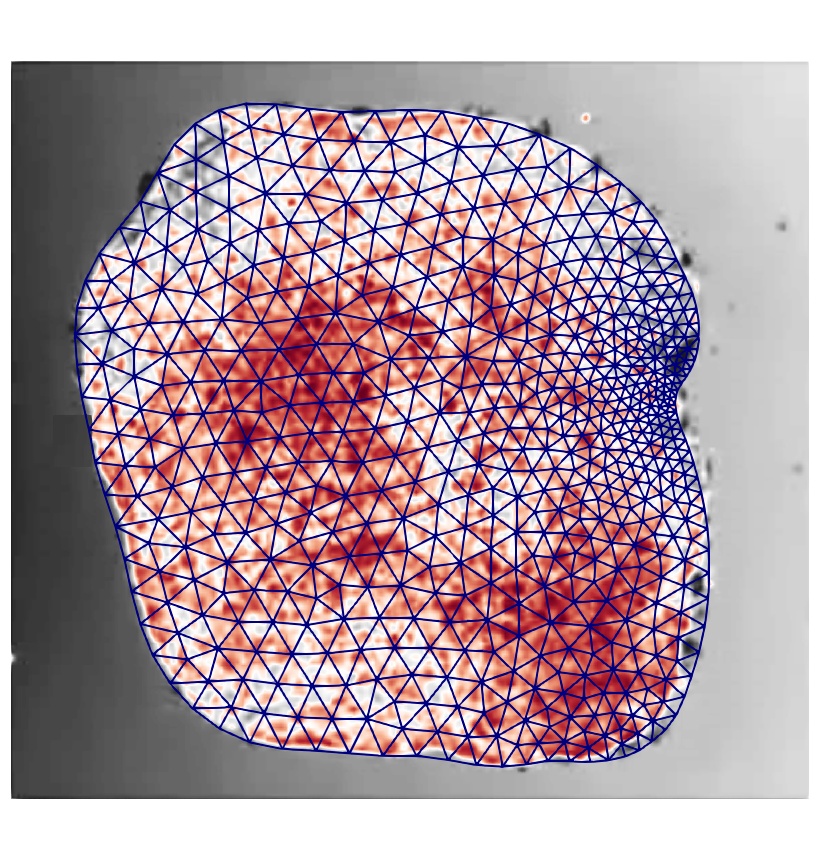}}\\
\subfigure[]{\includegraphics[width=0.32\textwidth]{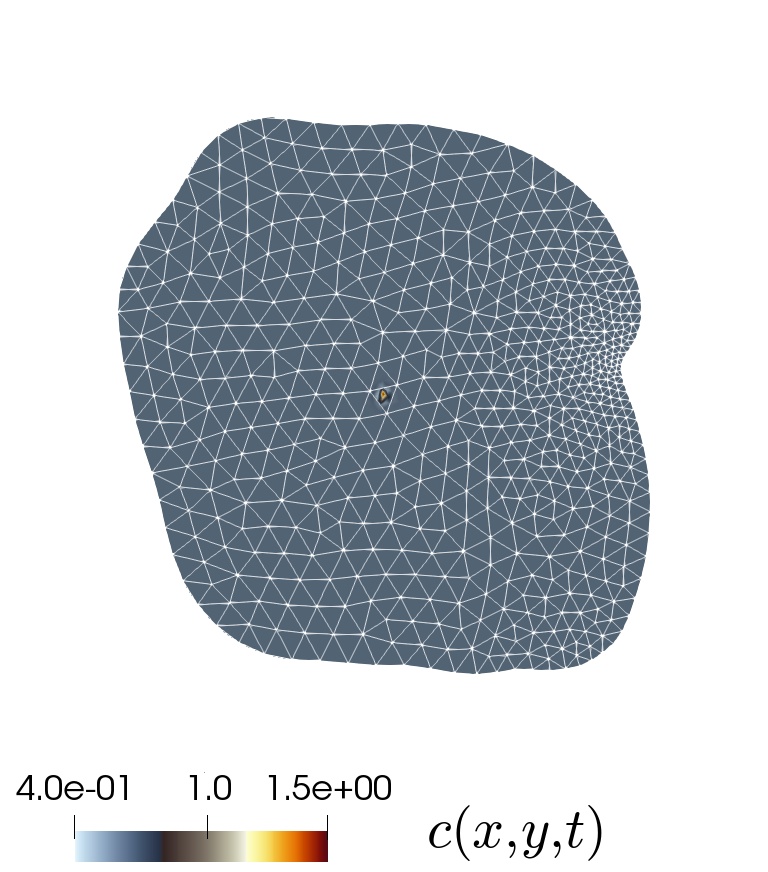}}
\subfigure[]{\includegraphics[width=0.32\textwidth]{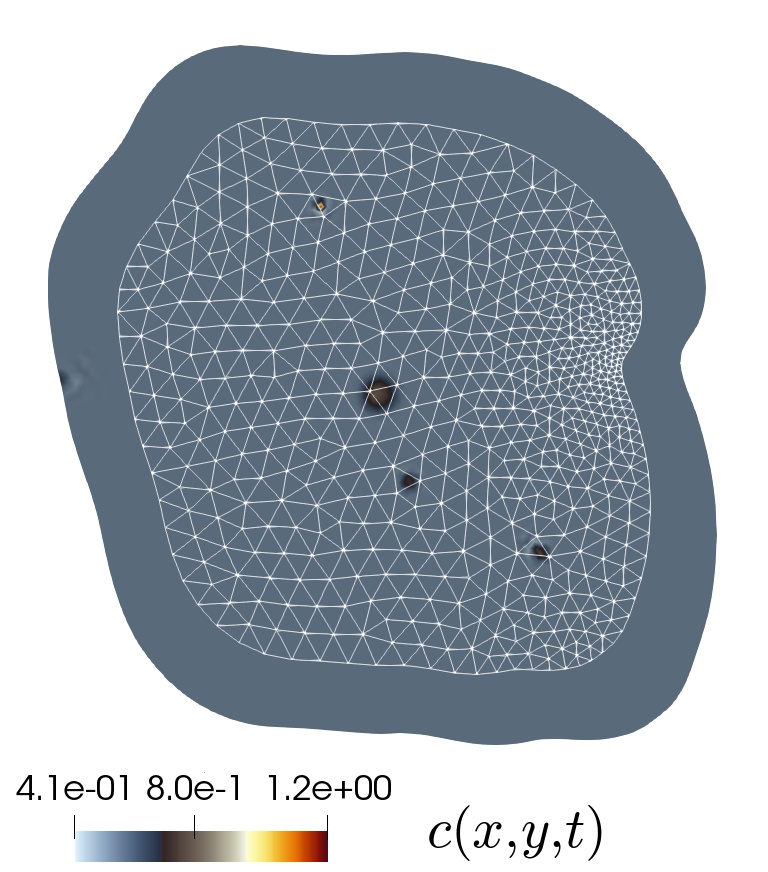}}
\subfigure[]{\includegraphics[width=0.32\textwidth]{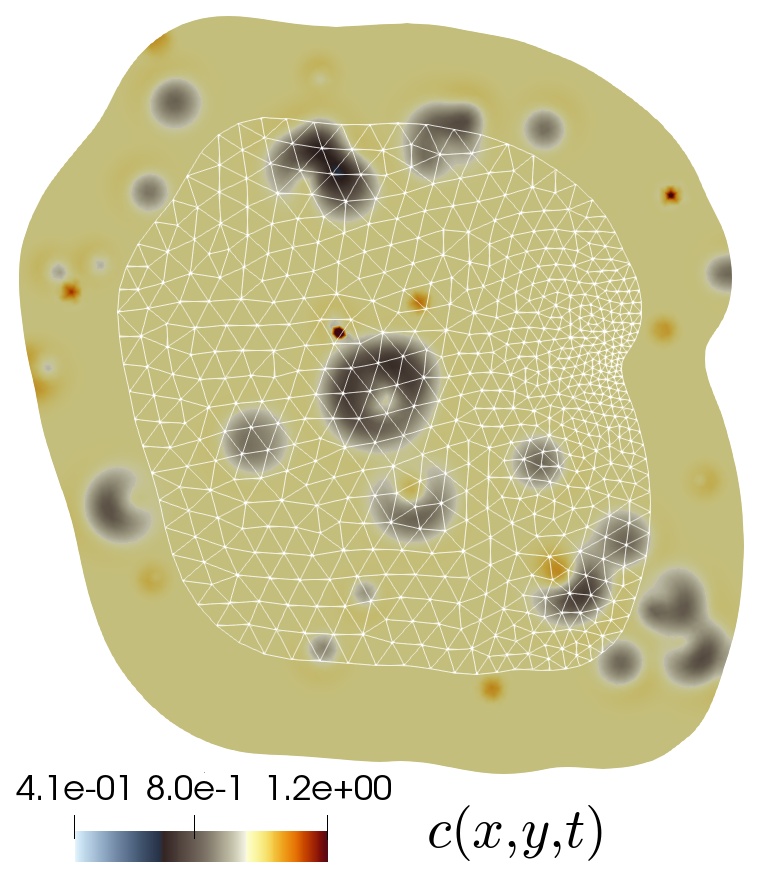}}\\
\subfigure[]{\includegraphics[width=0.32\textwidth]{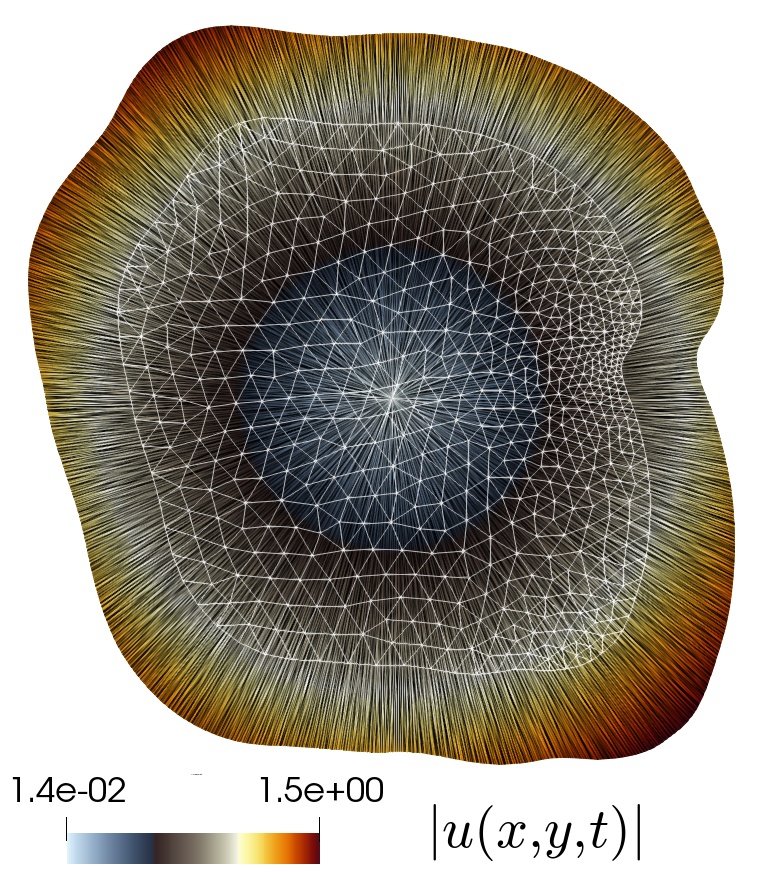}}
\subfigure[]{\includegraphics[width=0.32\textwidth]{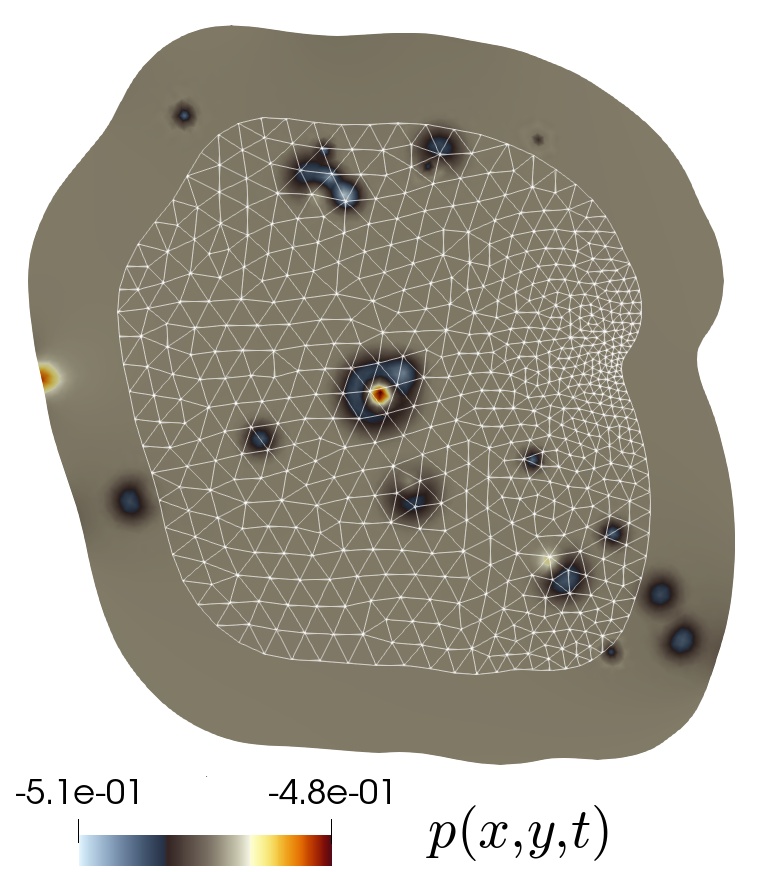}}
\subfigure[]{\includegraphics[width=0.32\textwidth]{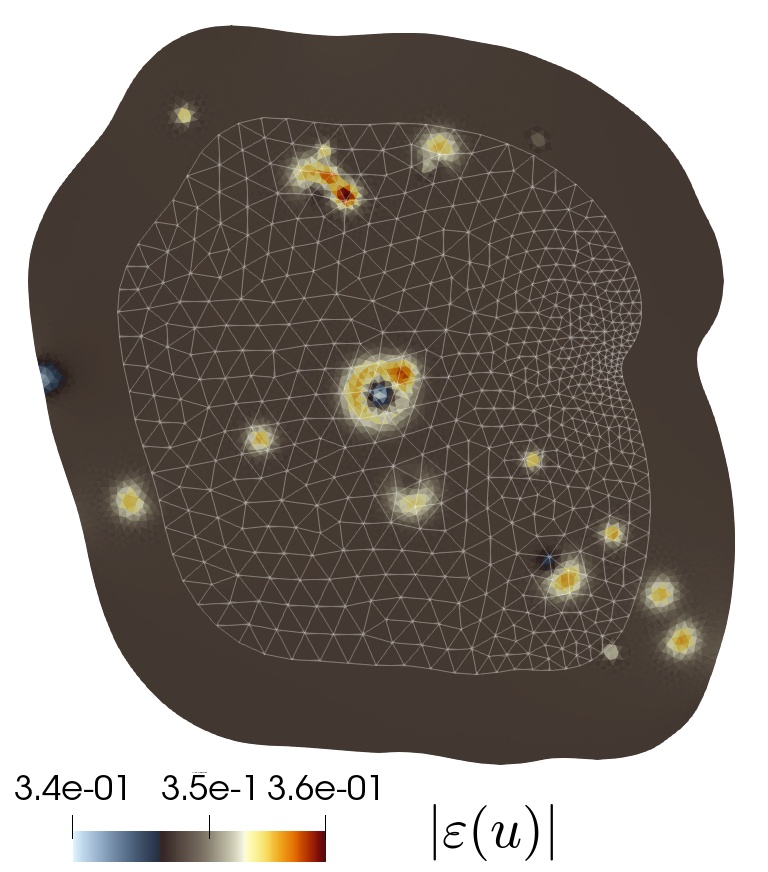}}
\end{center}

\caption{Test 3C. Sample of finite element mesh (a), and simulation of calcium sparks on a deforming sample of Xenopus 
embryonic tissue, shown at $t = 0.1, 3, 5$ (b,c,d). Panels (e,f,g) show displacement, pressure, and post-processed strain for $t = 10$. Results were generated using  
$\mu=  0.3$, $\lambda=0.1$, $c_s = 0.60386$.}\label{fig:test03Cembryo}
\end{figure}

We finish by replicating the behaviour of Test 3B in different geometries. A microscope image of a Xenopus embryonic tissue from \cite{kim14} is segmented and an unstructured triangular mesh is generated. The sparks are applied now with a lower intensity (75\% of $c_s$). 
In Figure~\ref{fig:test03Cembryo} we show a sample of coarse mesh and a few snapshots of the calcium distribution over the deformed domain. The tissue expands and contracts with an initial period of approximately 5 nondimensional units, and then the contraction increases in frequency. We also plot the displacement and pressure distribution, as well as the magnitude of the strain tensor, at $t=10$. 
We also consider a thin cylinder of radius 2.5 and height 0.4. The boundary is split into the bottom disk $\Gamma$ where we impose the slip condition \eqref{bc:Gamma} and on the remainder of the boundary we prescribe zero traction \eqref{bc:Sigma}. Orthogonality with respect to the space \eqref{RM} is therefore not required in this case. These mechanical boundary conditions represent a tissue that slips on a substrate, and in order to prevent it from slipping away we also prescribe zero displacement on the origin. 
Apart from a larger diffusion $D^\star= 0.02$ and a higher Poisson ratio $\nu = 0.45$, all other mechanochemical parameters are maintained as before. We compare qualitatively the deformation patterns with respect to imposing pure traction boundary 
conditions as before and show the outcome in Figure~\ref{fig:test03cylinder}. The former boundary treatment leads to mechanochemical patterns similar to those encountered in some of the 2D cases reported in Test 3B, whereas for the latter boundary conditions the deformations show a slightly more pronounced displacement in the $z$-direction.

\begin{figure}[!t]
\begin{center}
\subfigure[]{\includegraphics[width=0.24\textwidth]{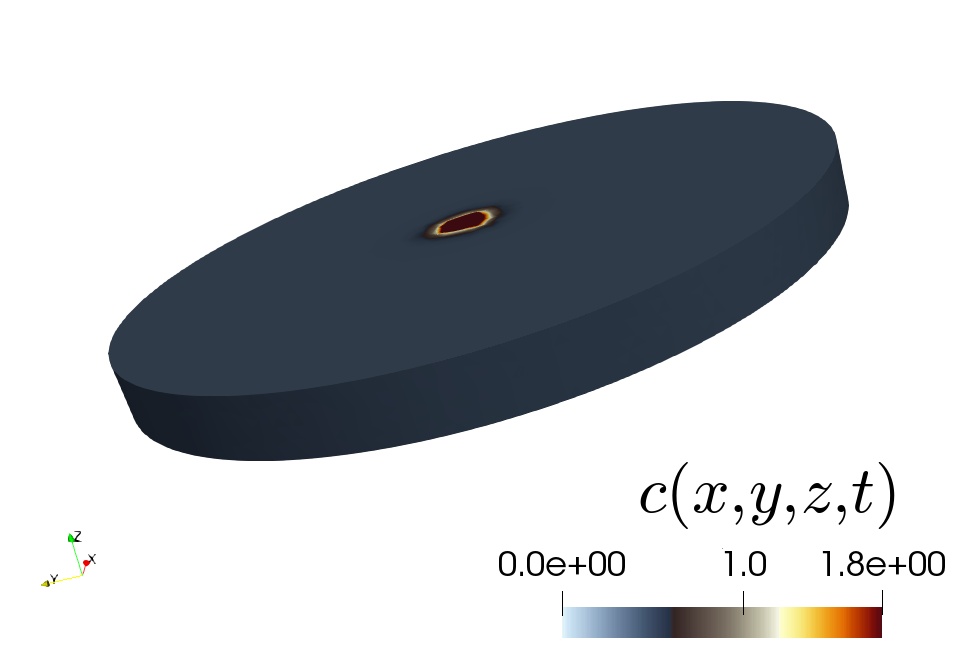}}
\subfigure[]{\includegraphics[width=0.24\textwidth]{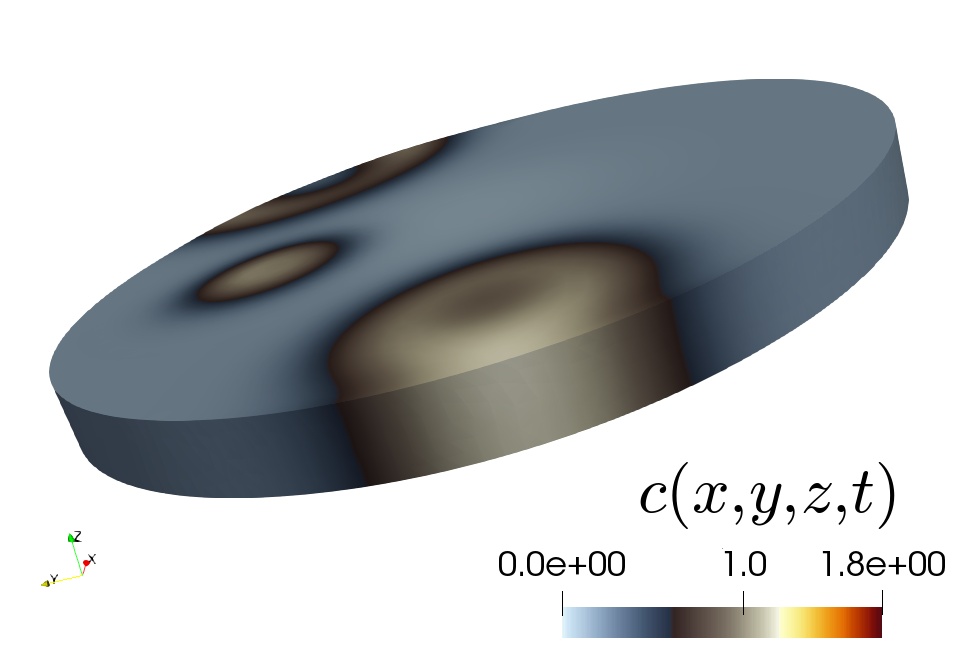}}
\subfigure[]{\includegraphics[width=0.24\textwidth]{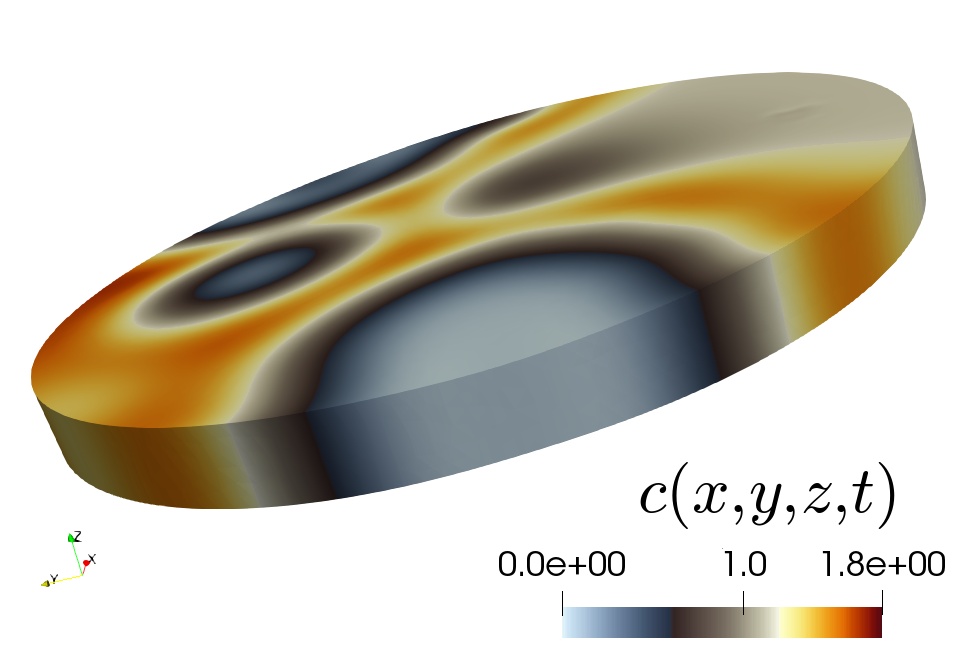}}
\subfigure[]{\includegraphics[width=0.24\textwidth]{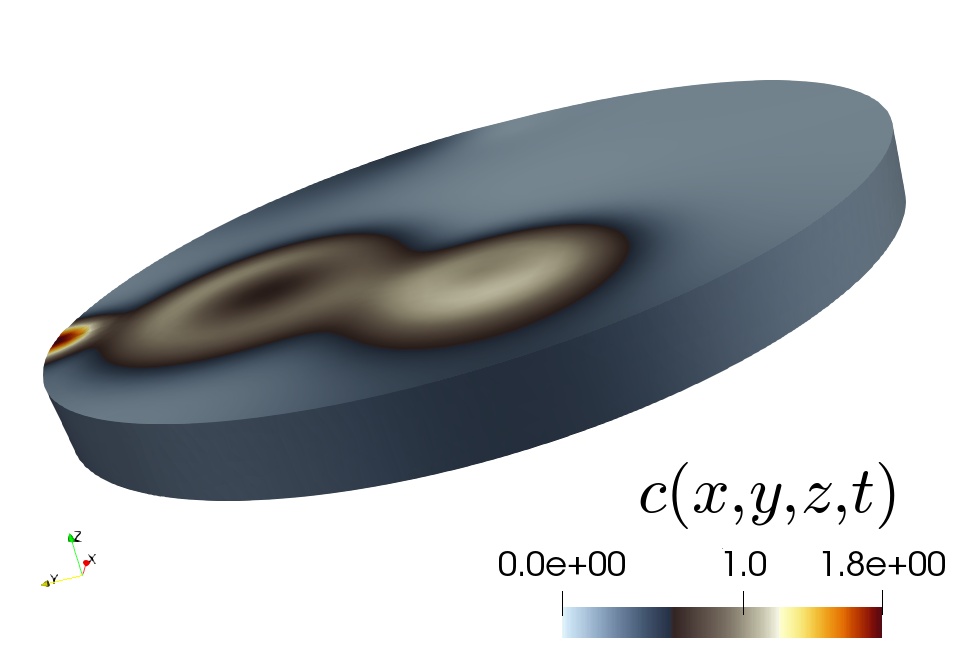}}\\
\subfigure[]{\includegraphics[width=0.24\textwidth]{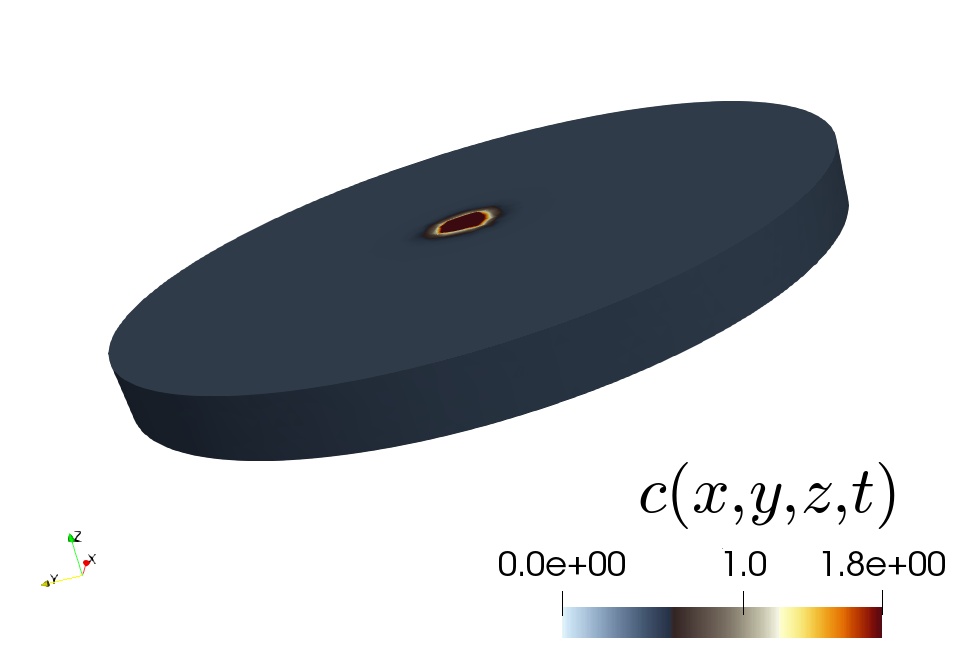}}
\subfigure[]{\includegraphics[width=0.24\textwidth]{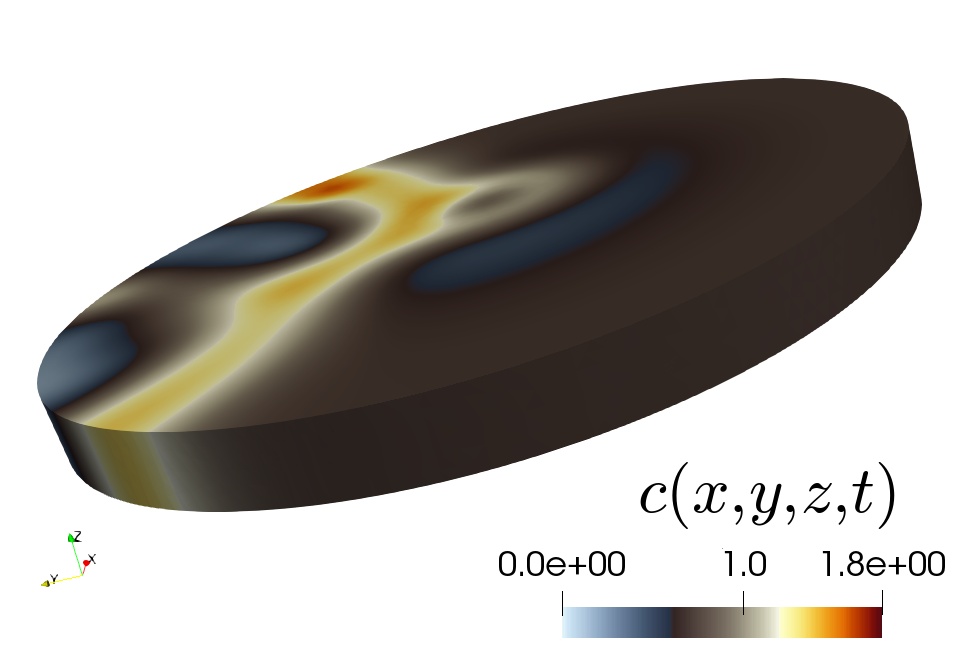}}
\subfigure[]{\includegraphics[width=0.24\textwidth]{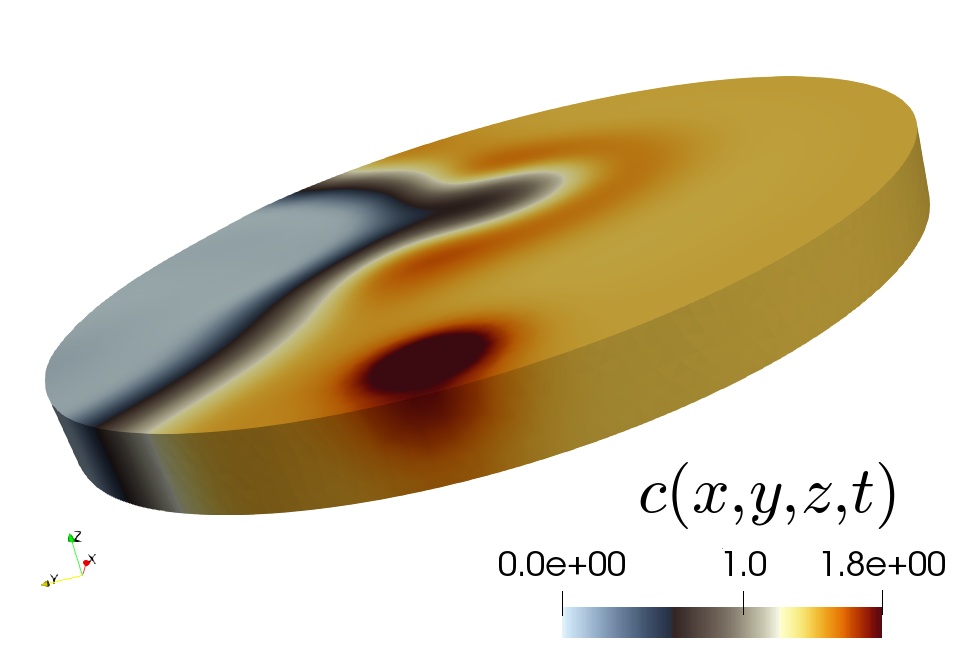}}
\subfigure[]{\includegraphics[width=0.24\textwidth]{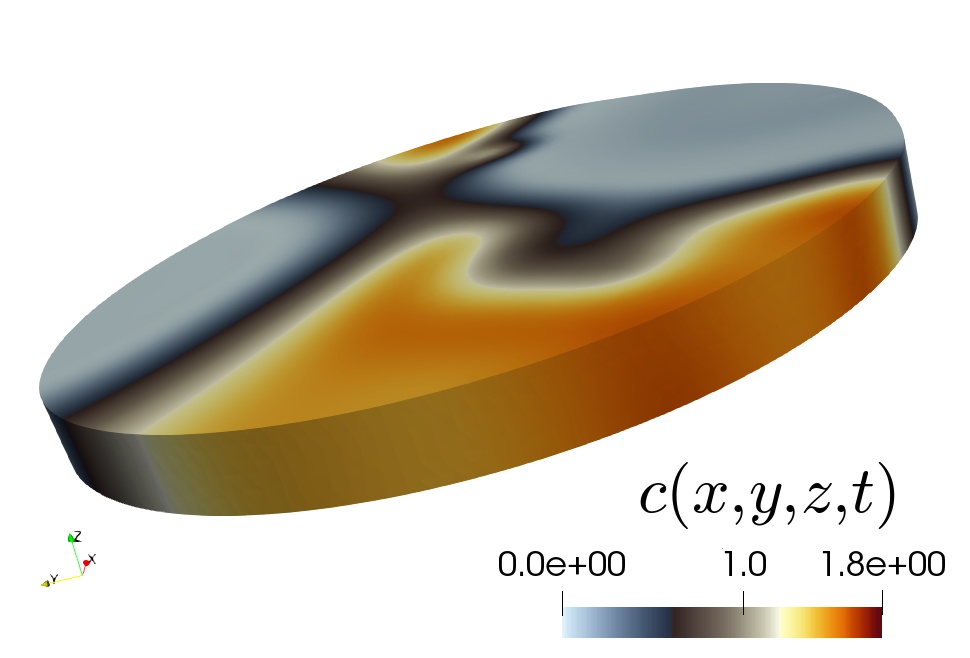}}
\end{center}

\caption{Test 3D. Computation of the mechanochemical coupling in a cylinder geometry using: 
mixed displacement-traction boundary conditions and representing a tissue on a substrate (a,b,c,d), 
and pure traction boundary conditions (e,f,g,h). Results were generated using $\nu =0.45$, $\mu=  0.3$, $\lambda=0.1$, $c_s = 0.60386$.}\label{fig:test03cylinder}
\end{figure}

%

\section{Concluding remarks}\label{sec:concl}
In this work we have extended to higher spatial dimensions the mechanochemical model in \cite{kaouri19} representing a two-way coupling between calcium signalling and mechanics in one dimension. We have
taken special care of traction boundary conditions for the motion of the cell. The governing equations couple 
 an advection-reaction-diffusion system for calcium signalling with a linear viscoelastic material. We include 
the two-way coupling mechanism through volume-dependent terms and an active contraction stress which is dependent on calcium dynamics as in \cite{murray84, kaouri19}. We have also established the existence of weak solutions, and we have proposed a finite element discretisation. 

Our model enables us to gather detailed insights in the mechanochemical 
processes during embryogenesis. More precisely, a quantitative evaluation of the effects of 
viscoelasticity and local dilation as mechanochemical processes underlying cytosolic calcium waves was undertaken. 

Regarding future directions, we note that in \cite{kim14} the authors suggest that  cells detect stress rather than stretch or strain because one does not 
observe transient stretch as contractions spread from cells exposed to ATP. This indicates that probably an 
appropriate formalism for further elucidating calcium signalling coupled to mechanics of cells and tissues is the framework of stress-assisted diffusion models, recently 
proposed in \cite{cherubini17,loppini18,propp20}. This relates also to the observation that in other types of cells that proliferate at a rate dependent on the substrate stiffness \cite{ghosh07}. 
Furthermore, at certain stages of morphogenesis, the assumption of small strains is no longer adequate and one needs to describe the motion through nonlinear elasticity. One could explore growth models using multiplicative decompositions of deformation gradients \cite{jones12,pandolfi17}, which will strongly depend on the type of phenomenon under consideration.  
Finally, it would be of interest to study the contraction and elongation of cell-cell gap junctions on the apical end of the tissue, since these regions contain a clustering of epithelia, and an important part of the overall morphogenesis occurs therein.   

The two-way coupling between mechanical forces and calcium signalling at both individual and collective cell levels 
is of course not unique to embryogenesis--it is a phenomenon shared by many other biological systems such as cancer cells, wound healing, keratinocytes, or white blood cells \cite{friedl,kobayashi14,yao16}. 
Modifications to the theoretical and computational tools presented here could be used to study these other processes too. 

Work is underway to extend the present model and method formulation to simulate the contact of the cell with a surrounding fluid (using an immersed boundary finite element method with distributed Lagrange multipliers) 
as well as the cell-to-cell interactions using a virtual element discretisation capturing more effectively the  
single cell geometries and the boundary contraction and elongation that together with junctional tension comprise the tissue-level deformation \cite{hara17}. Moreover, since calcium signalling evolves rapidly it would be natural to explore 
time-adaptive schemes based on local error estimators such as those advanced in \cite{deoliveira19} for a similar application.

\bigskip 
\noindent\textbf{Acknowledgement.} This work has been partially supported by the HPC-Europa3 Transnational Access Grant HPC175QA9K, and by the Monash Mathematics Research Fund S05802-3951284.

\small


\end{document}